\documentclass[format=acmsmall, review=false, acmthm=false, nonacm]{acmart}
\usepackage{acm-26}
\usepackage{booktabs} % For formal tables
\usepackage[ruled]{algorithm2e} % For algorithms
\usepackage{xfrac}

\SetAlFnt{\small}
\SetAlCapFnt{\small}
\SetAlCapNameFnt{\small}
\SetAlCapHSkip{0pt}
\IncMargin{-\parindent}

\usepackage{placeins}

\theoremstyle{acmplain}

\newtheorem{theorem}{Theorem}
\newtheorem{lemma}{Lemma}
\newtheorem{corollary}{Corollary}

\theoremstyle{acmdefinition}
\newtheorem{definition}{Definition}

\newtheorem{remark}{Remark}

\makeatletter
\newtheorem*{rep@theorem}{\rep@title}
\newcommand{\newreptheorem}[2]{%
\newenvironment{rep#1}[1]{%
 \def\rep@title{#2 \ref{##1}}%
 \begin{rep@theorem}}%
 {\end{rep@theorem}}}
\makeatother

\newreptheorem{theorem}{Theorem}
\newreptheorem{lemma}{Lemma}
\newreptheorem{proposition}{Proposition}

\usepackage{graphicx}
\usepackage{subcaption}
\usepackage{tikz}

\newcommand\ian[1]{}{}%{\textcolor{red}{Ian: #1}}{}
\newcommand\georgios[1]{}{}%{\textcolor{blue}{ #1}}{}
{}

\setcitestyle{authoryear}

\title{Paradoxes of Game Theoretic Equilibria and Price of Anarchy}
\author{GEORGIOS PILIOURAS\textsuperscript{*}, IAN GEMP, SIQI LIU, LUKE MARRIS}
\affiliation{%
  \institution{Google Deepmind}
  \country{UK}
}
\thanks{\textsuperscript{*}Corresponding author: gpil@google.com}

\begin{abstract}
For decades, static solution concepts (Nash, Correlated, and Coarse Correlated Equilibria) and the Price of Anarchy (PoA) have served as the prescriptive bedrock of algorithmic game theory. Concurrently, no-regret learning has emerged as the dominant paradigm by enabling proofs of fast convergence to such game-theoretic equilibria. We systematically demonstrate that reducing multi-agent learning to static equilibrium and black-box regret analysis introduces fundamental analytical limitations, obscuring underlying dynamic disequilibrium and game-theoretic bounds.

First, we establish that interior Nash equilibria lack $C^1$ vector field information, where agents are incapable of distinguishing between aligned and strictly opposing incentives.
Inheriting this geometry, the worst-case pure Nash equilibria that dictate the tightness of Price of Anarchy bounds 
manifest as strict saddles that are topologically unstable, and in canonical instances of congestion games, as global repellers, states of maximum potential, supported on almost everywhere strictly dominated strategies. 
 Because the worst-case framework explicitly anchors its efficiency guarantees to these dynamically unstable states, its classical quantitative bounds exhibit 
 algebraic sensitivity. 
 We prove that relaxing the syntactic constraint of non-negative coefficients to accommodate all strictly positive, affine costs renders the Price of Anarchy unbounded. 
Furthermore, we demonstrate that projecting continuous-space learning trajectories onto the discrete simplex of correlated play yields a  
permissive framework that systematically accommodates non-rationalizable behavior. Evaluating learning dynamics purely through the lens of Coarse Correlated Equilibria (CCE)---or recent continuous proximal refinements (SCE/PCE)---is structurally insufficient to preclude strictly dominated strategies. Furthermore, optimal $O(1/T)$ swap-regret minimization does not preclude macroscopic turbulence, manifesting as chaotic limit sets even in minimal normal-form games. Finally, we examine the non-atomic limit of congestion games, historically considered a highly stable environment with tight sub-linear $\Theta(p/\ln p)$ PoA bounds, where $p$ is the degree of the polynomial cost function. We prove that under standard discrete-time learning, the unique 
equilibrium  
destabilizes into Li-Yorke chaos as well as global attractors whose time-averaged inefficiency degrades exponentially as $2^p$.  Collectively, these results suggest a rigorous re-evaluation of worst-case, equilibrium-based frameworks in favor of dynamically grounded metrics.
\end{abstract}

\begin{document}

% Title page for title and abstract only.
\begin{titlepage}

\maketitle

\begin{figure}[ht!]
    \centering
    \resizebox{0.98\textwidth}{!}{
    \begin{tikzpicture}[
        >=stealth,
        text node/.style={anchor=west, align=left, font=\small\sffamily},
        label node/.style={font=\bfseries\sffamily, align=center}
    ]
    
    % 1. DRAW THE NESTED ELLIPSES (With subtle, professional shading)
    % Outermost: (\lambda, \mu)-optimal
    \draw[thick, fill=gray!5] (0, 0.4) ellipse (4.5cm and 4.2cm);
    
    % Second: CCE
    \draw[thick, fill=gray!10] (0, 0.0) ellipse (3.8cm and 3.6cm);

    % Third: SCE/PCE
    \draw[thick, fill=gray!15] (0, -0.4) ellipse (3.1cm and 3.0cm);
    
    % Fourth: CE
    \draw[thick, fill=gray!20] (0, -0.8) ellipse (2.4cm and 2.4cm);
    
    % Fifth: NE
    \draw[thick, fill=gray!25] (0, -1.4) ellipse (1.6cm and 1.8cm);
    
    % Innermost: Pure NE (Expanded for full-size text, preserving near-tangency at bottom)
    \draw[thick, fill=gray!35] (0, -2.3) ellipse (1.2cm and 0.85cm);
    
    % 2. LABELS INSIDE THE ELLIPSES (Centered between the boundaries)
    \node[label node] at (0, 4.0) {$(\lambda, \mu)$-optimal};
    \node[label node] at (0, 3.0) {CCE};
    \node[label node] at (0, 2.0) {PCE / SCE};
    \node[label node] at (0, 1.0) {CE};
    \node[label node] at (0, -0.2) {NE};
    
    % Font size constraint removed to perfectly match the rest of the diagram
    \node[label node] at (0, -2.3) {Pure/Strict\\[-0.05cm]NE};
    
    % 3. ARROWS POINTING OUT TO THE RIGHT 
    % (Calculated exact intersections for perfectly horizontal lines)
    \draw[->, thick] (3.15, 3.4) -- (5.0, 3.4);   % (\lambda, \mu)
    \draw[->, thick] (2.83, 2.4) -- (5.0, 2.4);   % CCE
    \draw[->, thick] (2.48, 1.4) -- (5.0, 1.4);   % SCE/PCE
    \draw[->, thick] (2.07, 0.4) -- (5.0, 0.4);   % CE
    \draw[->, thick] (1.43, -0.6) -- (5.0, -0.6); % NE
    
    % Recalculated intersection for the expanded Pure NE bubble
    % x^2/1.44 + (-0.3)^2/0.7225 = 1 -> x = 1.12
    \draw[->, thick] (1.12, -2.6) -- (5.0, -2.6); % Pure NE
    
    % 4. EXPLANATORY TEXT BLOCKS (Anchored uniformly to the right of the arrows)
    \node[text node] at (5.1, 3.4) {
        All agents can be simultaneously and arbitrarily\\ 
        worse off than their MinMax safety payoffs
    };
    
    \node[text node] at (5.1, 2.4) {
        Optimal no-regret learning can be supported\\
        entirely on strictly dominated strategies
    };

    \node[text node] at (5.1, 1.4) {
        Proximal/Semicoarse CE can also be supported\\
        on strictly dominated strategies
    };
    
    \node[text node] at (5.1, 0.4) {
        Chaos with no-swap-regret in symmetric games 
    };
    
    \node[text node] at (5.1, -0.6) {
        Interior Nash Equilibria are strategically insensitive;\\
        yield worst-case safety costs even in purely cooperative \\
        potential/team games
    };
    
    \node[text node] at (5.1, -2.6) {
        Worst-case pure NE are strict saddles that are topologically\\
        unstable, actively repelling physical learning trajectories\\ and even acting as states of maximum potential.\\[0.15cm] 
        %Worst-case pure NE are strict saddles that are topologically\\
        %unstable, actively repelling physical learning trajectories\\ or possessing zero-measure basins of attraction.\\[0.15cm]
        %Worst-case Pure NE are strict saddles with zero\\
        %measure attraction, and even states of maximum\\ potential that globally repel game dynamics.\\[0.15cm] 
        %Worst-case Pure NE are states of maximum potential,\\
        %globally repelling for game dynamics, and supported on\\
        %almost everywhere strictly dominated strategies.\\[0.15cm]
        The PoA in affine congestion games becomes unbounded\\
        even for pure/strict NE under data-driven cost models.
    };
    
    \end{tikzpicture}
    }
    \caption{\textbf{Paradoxes of Game-Theoretic Solution Concepts.} 
    %Static equilibria and standard solution concepts fail to meet basic dynamic and rational desiderata, even in well-studied classes of games.
    }
    \label{fig:solution_concepts}
\end{figure}

% Optionally include a table of contents
%\vspace{1cm}
%\setcounter{tocdepth}{2} % adjust to 1 if desired
%\tableofcontents

\end{titlepage}

% Paper body

\section{Introduction}
\label{sec:intro}

A fundamental challenge in the study of multi-agent systems and economics is the development of target solution concepts that serve as both mathematically rigorous models of agent behavior and reliable predictors of downstream system performance. Algorithmic Game Theory (AGT) has predominantly addressed this challenge through the efficient computation of, and approximations to, classical static equilibria such as Nash Equilibria (NE), Correlated Equilibria (CE), and Coarse Correlated Equilibria (CCE)~\citep{Nisan:2007:AGT:1296179,roughgarden2016twenty}. A significant body of research is dedicated to developing fast algorithms for the expedient computation of these concepts and their relaxations, such as $\epsilon$-NE/CE/CCE or ($\lambda, \mu$)-optimal states~\citep{Syrgkanis:2015:FCR:2969442.2969573,foster2016learning,daskalakis2021near,anagnostides2022uncoupled,piliouras2022beyond,farina2022near,gemp2023approximating,soleymani2025faster,li2024survey,peng2024fast,anagnostides2022near,anagnostides2022faster,dagan2024external,Daskalakis2024:efficient,papadimitriou2008computing,jiang2015polynomial,farina2024polynomial}. %10.1145/3736252.3742639
Their practical computational tractability is often hailed as a critical test of their real-world applicability, famously captured by Kamal Jain's maxim: ``If your laptop can't find it, then neither can the market.''

This computational focus has driven influential parallel research aimed at validating the systemic efficiency of these equilibria, most notably through the Price of Anarchy (PoA) literature~\citep{KoutsoupiasP99WorstCE,christodoulou,vetta2002nash,rough09,roughgarden2015intrinsic,christodoulou2005price,roughgarden2017price,awerbuch2005price,feldman2016price,lykouris2016learning}. Rooted in worst-case analysis, this framework argues that all equilibria---even worst-case NE/CE/CCE---remain approximately optimal in terms of social welfare. Consequently, enabling proofs of fast convergence to these approximately optimal states via no-regret learning dynamics has become a foundational paradigm in the analysis of multi-agent systems.

In this paper, we systematically demonstrate that this (worst-case) equilibrium framework is misaligned with the natural behavior of game dynamics. We trace this misalignment to a critical information loss: by evaluating multi-agent interactions purely through $C^0$ fixed-point analysis or discrete empirical distributions, standard solution concepts systematically discard the $C^1$ continuation of the underlying vector fields and expected game payoffs. When evaluated under typical, gradient-based learning dynamics, this structural and game-theoretic insensitivity  
routinely accommodates outcomes that are dynamically unstable, highly inefficient, and non-rationalizable.

\vspace{0.2cm}
\noindent \textbf{Our Contributions.} We present a series of constructive results demonstrating that efficiently computable equilibria and optimal regret guarantees can systematically accommodate dynamically unstable and non-rationalizable states. By explicitly prioritizing the topological reality of the vector field, our results reveal critical dynamical limitations of well established  analytical methodologies.

\textbf{The Topological Instability of Worst-Case Pure Nash Equilibria (Theorems~\ref{thm:zero-measure}, \ref{thm:Repeller_pure_NE}):} To establish efficiency guarantees, worst-case analysis frequently isolates pure NE. We prove that the specific worst-case pure NE dictating the tightness of robust PoA bounds in congestion games are strict saddles of their non-convex potential function. Consequently, under natural learning dynamics, they are topologically unstable and possess a region of attraction of measure zero. In fact, in canonical textbook constructions, these pure NE can act as global maxima of the potential, supported on almost everywhere strictly dominated strategies, physically repelling all interior learning trajectories. Their inclusion in performance bounds structurally alters efficiency metrics by accounting for dynamically unstable outcomes.

    \textbf{The Origin of the Instability: $C^0$ Fixed Points and Bifurcations (Theorems~\ref{thm:inverse_gt}, \ref{prop:zero_trace}):} We trace the origin of these dynamically unstable pure states to topological bifurcations induced by worst-case parameterization in combination with the dynamic instability of generic mixed NE. Evaluated strictly as fixed points, interior NE provide only $C^0$ information, lacking the local gradient data to distinguish between aligned and strictly opposing incentives. Up to strategic equivalence, an interior NE ``solves'' exponentially many ($2^{Nm}$) maximally distinct games. The worst-case pure NE inherits this strategic/dynamic insensitivity,  distorting and sometimes even outright inverting the optimization landscape.

    \textbf{The Sensitivity of Price of Anarchy (Theorems~\ref{thm:not_minmax},~\ref{lem:poisson_binomial_minmax}, \ref{thm:congestion_not_smooth}, and~\ref{thm:ergodic_good}):} Building on these topological properties, we show that classical ratio-based efficiency metrics evaluate algebraic artifacts distinct from the physical learning dynamics. First, the set of near/approximate-optimal states (in the sense of robust PoA) accommodates outcomes where all agents experience costs strictly higher than their individual min-max safety levels (due to the insensitivity of interior NE). Second, when the standard constraint of non-negative coefficients of affine/polynomial cost functions is relaxed to accommodate  data-driven cost models, the worst-case PoA becomes unbounded. We establish this divergence via minimal generic congestion games—comprising exactly two agents, two strategies, and (up to) four resources—where all affine latency functions remain strictly positive and monotonically increasing on the physical domain. Because the learning vector fields are strictly invariant to the payoff translations that drive these ratio metrics to infinity, the analytical framework evaluates algebraic representations that are strategically irrelevant to the agents themselves.

    \textbf{The Combinatorial Shadow of Correlated Play (Theorems~\ref{thm:not_rationalizable}, \ref{thm:linear_proximal}, and~\ref{thm:chaos}):} We show that projecting continuous-space learning trajectories onto the discrete simplex of empirical play (CE/CCE) causes critical information loss. Evaluating learning dynamics strictly through the lens of CCE is structurally insufficient to exclude strictly dominated strategies; we explicitly construct uncoupled no-regret algorithms that converge to (strong) CCEs supported entirely on strictly dominated outcomes. Such properties persist even under recent proximal refinements (SCE/PCE). Furthermore, we establish that optimal $O(1/T)$ swap-regret minimization (CE) does not preclude macroscopic turbulence, manifesting as chaotic limit sets even in minimal normal-form games.

    \textbf{Phase Transitions to Exponential Inefficiency and Chaos in the Large Population Limit (Theorem~\ref{thm:2p_degradation}):} Finally, we test the limits of algorithmic stability by examining non-atomic congestion games. Historically, this setting is considered a highly stable environment characterized by tight efficiency guarantees: Wardrop equilibria correspond to global minima of their convex potential, static PoA guarantees are exceptionally tight at $\Theta(p/\ln p)$, and no-regret dynamics are mathematically guaranteed to converge to approximate equilibria. However, by analyzing the discrete-time physical reality of these systems, we demonstrate that this macroscopic stability masks underlying  microscopic fragility. We prove that generic learning algorithms (including Projected Gradient Descent, and Follow-The-Regularized-Leader with $L_2$ regularization) natively induce Li-Yorke chaos. In symmetric polynomial networks of degree $p$, the system is captured by a global  attractor whose empirical time-averaged social cost diverges from the optimum by exactly $2^p$, exponentially diverging from the sub-linear static PoA bound. This highlights a critical behavioral dichotomy: the classical convergence guarantees inherently rely on global fine-tuning, normalizing the learning rate by the maximum possible network congestion $O(N^p)$ to force microscopic, stable steps. While mathematically ensuring convergence, this normalization physically requires agents to become highly unresponsive to absolute delays as the population or network degree grows. If agents instead employ standard, responsive learning rates tuned to locally experienced absolute delays, the effective step size strictly exceeds the critical stability threshold. Rather than converging, the dynamics immediately shatter into a highly inefficient, oscillating/chaotic regime.

%\end{itemize}

Collectively, these results show that the reduction of continuous state-space game dynamics to worst-case equilibrium analysis---or discrete empirical distributions---introduces significant structural slackness. While providing elegant macroscopical benchmarks, 
these classical frameworks abstract away the day-to-day physical trajectories of algorithmic agents, motivating a rigorous re-evaluation of equilibrium-based solution concepts in favor of dynamically grounded metrics.

\section{Background}
\label{sec:background}

\subsection*{Game Theory}

Let \( I =\{1, \dots, N\} \) be the finite set of players of the game \( G \). Each player \( i \in I \) has a finite \emph{strategy set} \( S_i \) and a cost function \( c_i : S_i \times S_{-i} \to [0, 1] \), where \( S_{-i} = \prod_{j \neq i} S_j \). A player \( i \in I \) selects a strategy from their set of \emph{mixed strategies} \( \Delta(S_i) \), defining a probability distribution over \( S_i \). We extend the domain of the cost function to mixed strategies via the standard multilinear extension of expected utility. The aggregate inefficiency of a state is captured by the social cost $C(s) = \sum_i c_i(s)$. 

\begin{definition}[\cite{nash}]
	A \emph{Nash equilibrium} (NE) is a vector of independent distributions \( (p_i^*)_{i \in I} \in \prod_{i \in I} \Delta(S_i) \) such that no agent can strictly decrease their expected cost via a unilateral deviation: \( \forall i \in I, \forall p_i \in  \Delta(S_i) \),
	\[
		\textstyle c_i(p_i^*, p_{-i}^*) \leq c_i(p_i, p_{-i}^*)
	\]
\end{definition}

Equilibria supported exclusively on deterministic strategies are termed \textit{pure}; otherwise, they are \textit{mixed}. To accommodate the limitations of distributed learning, algorithmic game theory frequently studies relaxations of the Nash equilibrium that allow for correlated play.

\begin{definition}[\citep{aumann1974subjectivity}] A \emph{correlated equilibrium} (CE) is a joint distribution \( \pi \) over the set of action profiles \( S = \prod_{i} S_i \) such that for any player \( i \) and any pair of distinct strategies \( s_i, s_i' \in S_i \),
	\[
		\textstyle \sum_{s_{-i} \in S_{-i}} c_i(s_i, s_{-i}) \pi(s_i, s_{-i})
		\leq \sum_{s_{-i} \in S_{-i}} c_i(s_i', s_{-i}) \pi(s_i, s_{-i})
	\]
\end{definition}

\begin{definition}[\citep{moulin1978strategically,young-book-2004}]
	A \emph{coarse correlated equilibrium} (CCE) is a joint distribution \( \pi \) over the set of action profiles \( S \) such that for any player \( i \) and any fixed unconditional deviation \( s_i \in S_i \),
	\[
		\textstyle \sum_{s \in S} c_i(s) \pi(s)
		\leq \sum_{s_{-i} \in S_{-i}} c_i(s_i, s_{-i}) \pi_i(s_{-i})
	\]
	where \( \pi_i(s_{-i}) = \sum_{s_i \in S_i} \pi(s_i, s_{-i}) \) is the marginal distribution of \( \pi \) with respect to the opponents of \( i \). This solution is closely related to Hannan consistency~\cite{hannan,hart2000simple,Cesa06} as well as (external) regret minimization (see next section). 
\end{definition}

When the inequalities in the CCE definition are strict---meaning any unilateral deviation to a fixed strategy strictly increases expected cost---the distribution constitutes a \textit{strong coarse correlated equilibrium}~\cite{anagnostides2022optimistic}.
 
\smallskip
 
The \textit{Price of Anarchy (PoA)} quantifies the degradation of system efficiency due to decentralized, self-interested behavior. It is defined as the ratio between the worst-case equilibrium cost and the centralized social optimum. Let $\mathcal{E}$ denote the set of valid equilibria in the game (e.g., pure NE, mixed NE, CE, or CCE), and let $s^* \in \text{argmin}_{s \in S} C(s)$ be the social optimum. The Price of Anarchy is formalized as:
$$
    \text{PoA}(G) = \frac{\max_{\pi \in \mathcal{E}} C(s)}{\min_{s \in S} C(s)} = \frac{\max_{\pi \in \mathcal{E}} C(s)}{C(s^*)}
$$

\begin{definition}[Smooth Game~\cite{rough09,roughgarden2015intrinsic}]
    A cost-minimization game is $(\lambda, \mu)$-smooth if for every two outcomes $s$ and $s^*$:
    \[
        \textstyle \sum_i c_i(s^*_i , s_{-i}) \leq \lambda~ C(s^*) + \mu ~C(s).
    \]
\end{definition}

The smoothness framework abstracts canonical PoA proofs, guaranteeing a \textit{robust PoA} bound of $\frac{\lambda}{1-\mu}$ that applies simultaneously to all aforementioned game-theoretic solution concepts (pure NE, mixed NE, CE, and CCE). (Randomized) states (produced, e.g., by sampling histories of no-regret dynamics) satisfying these optimality guarantees are typically called approximately optimal~\cite{Syrgkanis:2015:FCR:2969442.2969573} or near-optimal~\cite{rough09}. To resolve any ambiguity, we will call them $(\lambda,\mu)$-approximate optimal
states. Analogous definitions exist for payoff maximization games.

\smallskip

\noindent \textit{Congestion games:} A congestion game is a cost-minimization game defined by a ground set $E$ of resources, a set of $N$ players with strategy sets $S_1, \dots , S_N \subset 2^E$, and a cost function $c_e : \mathbb{Z}^+ \to \mathbb{R}$ for each resource $e \in E$~\citep{rosenthal73}. We assume the standard conditions that cost functions are non-negative and non-decreasing. Given a strategy profile $s = (s_1, \dots, s_N)$, the load $x_e(s) = |\{i : e \in S_i\}|$ is the number of agents utilizing resource $e$. The cost to player $i$ is $c_i(s) = \sum_{e \in s_i}c_e(x_e(s))$, and the social cost is $C(s) = \sum_i c_i(s) = \sum_e c_e(x_e)x_e$. A heavily studied subclass consists of games with affine cost functions $c_e(x) = a_ex + b_e$, where $a_e, b_e \geq 0$. Such games are mathematically guaranteed to be $(\sfrac{5}{3}, \sfrac{1}{3})$-smooth, enforcing a robust PoA of at most $\sfrac{5}{2}$~\cite{christodoulou2005price,rough09}. 

\smallskip

\noindent \textit{Valid utility games:} A canonical class of smooth payoff-maximization games is the set of \textit{valid utility games}~\cite{vetta2002nash}. These environments are defined by a finite ground set $E$, a non-negative submodular function $V : 2^E \to \mathbb{R}_{\ge 0}$, a strategy set $S_i \subseteq 2^E$ for each agent $i$, and a private utility function $u_i$. For any strategy profile $s = (s_1, \dots, s_N)$, let $U(s) = \bigcup_{i=1}^N s_i$ denote the aggregate set of selected elements. The social objective function is defined as $W(s) = V(U(s))$. By definition, a valid utility game must satisfy two bounding inequalities: (i) $u_i(s) \geq W(s) - V(U(s_{-i}))$, guaranteeing that an agent's private utility is bounded below by their marginal contribution to the social welfare, and (ii) $\sum_{i=1}^N u_i(s) \leq W(s)$, ensuring that the sum of private utilities does not exceed the total social value. 

Because these bounding inequalities leave the exact payoffs under-determined, explicit game-theoretic constructions utilize a \textit{basic utility system}. In a basic utility system, an agent's private utility is defined exactly as their marginal contribution: $u_i(s) = W(s) - V(U(s_{-i}))$. We note that this exact specification is fundamentally equivalent to the \textit{Wonderful Life Utility} (WLU) championed in the Multi-Agent Reinforcement Learning (MARL) and collective intelligence literature~\cite{wolpert1999introduction, wolpert2001optimal}, where it is widely deployed to resolve the credit assignment problem by aligning individual rewards with global objectives. 

When the underlying submodular function $V$ is modular (additive)---meaning $V(S) = \sum_{e \in S} v(e)$---this marginal contribution algebraically simplifies to the sum of the values of the elements that agent $i$ secures exclusively:
$$ u_i(s) = \sum_{e \in s_i \setminus U(s_{-i})} v(e) $$
This specification natively satisfies both bounding inequalities, providing a fully determined, mathematically rigorous framework for equilibrium evaluation.

\subsection*{Online Learning and Regret Minimization}

We conceptualize learning through the standard framework of online optimization. For notational brevity, we adopt the perspective of a single arbitrary agent, omitting player indices to focus on the temporal evolution of their mixed strategy $p^t \in \Delta(S)$ over rounds $t = 1, \dots, T$. 

In each round $t$:
\begin{itemize}
    \item The player selects a probability distribution $p^t$ over their strategies.
    \item The environment (or opposing agents) determines a cost vector $c^t$, specifying a cost $c^t(s)$ for every available strategy $s$.
    \item The player incurs an expected cost of $c^t(p^t)= p^t \cdot c^t$.
\end{itemize}
The distribution $p^t$ chosen at time $t$ depends strictly on the historical sequence of observations up to $t-1$. Upon committing to $p^t$, the agent receives full-information feedback, observing the entire realized cost vector $c^t$.

Given a learning algorithm $A$ generating the sequence $\{p^1, p^2,\dots\}$, the cumulative expected cost by time $T$ is $\sum^{T}_{t=1} c^t(p^t)$. Regret quantifies the performance of an algorithm relative to optimal static baselines in hindsight.

\begin{definition}[Regret and Swap Regret]
    The \emph{external regret} and \emph{swap regret} of algorithm $A$ at time $T$ are defined respectively as:
    \begin{align}
        \textstyle \text{Regret}(T) &= \textstyle \sum_{t=1}^T c^t(p^t)-  \min_{p \in \Delta(S)}\sum_{t=1}^T c^t(p) \label{eq:regret} \\
        \textstyle \text{SwapRegret}(T) &= \textstyle \sum_{t=1}^{T} c^t(p^t) - \min_{\rho: S \to S} \sum_{t=1}^{T} c^t(\rho(p^t)). \label{eq:swap_regret}
    \end{align}
\end{definition}

In Equation~\eqref{eq:swap_regret}, the maximum is over all strategy modification functions $\rho$ mapping the action space to itself, extended to act multilinearly on mixed strategies. An algorithm is Hannan consistent (or exhibits \textit{no-regret}) if its external regret grows strictly sublinearly, $o(T)$. It is a standard result that efficient algorithms exist bounding both external and swap regret at an optimal rate of $O(\sqrt{T})$~\citep{Cesa06,blum2007external,young-book-2004}. 

\subsubsection*{Continuous-Time Dynamics and Strategic Equivalence}
\label{sub:Replicator}

Shifting to continuous time, the \textit{replicator dynamic}~\cite{Taylor1978145,Schuster1983533} serves as a foundational model of evolutionary game theory and acts as the exact fluid limit of the discrete-time Multiplicative Weights Update (MWU) algorithm~\cite{Kleinberg09multiplicativeupdates,Arora05themultiplicative,akin}. It is governed by the differential equation:
\begin{equation}\label{eq:system_cost}
    \textstyle \frac{dp_i(t)}{dt} = \dot{p}_i = - p_i[ c_i(\mbox{$p$}) - \hat{c}(\mbox{$p$})], \quad \hat{c}(\mbox{$p$}) = \sum_{i=1}^{N}{p_i c_i(\mbox{$p$})}
\end{equation}

Benefiting from continuous-time smoothing, the replicator dynamic achieves strictly bounded $O(1)$ regret in general adversarial environments~\cite{mertikopoulos2018cycles,sorin2009exponential}. Standard AGT analysis routinely projects the continuous trajectories of such no-regret algorithms onto discrete empirical distributions via time-averaging, invoking convergence to CCE (or CE for swap-regret). However, this linear projection fundamentally discards the topological reality of the vector field. To rigorously analyze the exact limits of these dynamics, it is necessary to identify when distinct games generate identical dynamical trajectories.

\begin{definition}[Strategic and Dynamic Equivalence]
    Two games $G$ and $G'$ over the same strategy space are \textit{strategically equivalent} if for each agent $i$, there exists a non-strategic dummy function $d_i(s_{-i})$ such that $c_i(s)= c'_i(s)+d_i(s_{-i})$ for all $s \in S$. If we further permit a shared positive scalar multiplier $a>0$ such that $c_i(s)=a c'_i(s)+d_i(s_{-i})$, the games are \textit{dynamically equivalent}. Finally, if each agent possesses an independent multiplier $a_i>0$ such that $c_i(s)= a_i c'_i(s)+d_i(s_{-i})$, the games are \textit{game-theoretically equivalent}.  
\end{definition}

Two games are strategically equivalent if and only if the marginal payoff differences between any two unilateral deviations are identical. Consequently, an external observer cannot mathematically distinguish which game within an equivalence class is driving the observed behavior; it represents an innate gauge degree of freedom in the game's definition. Dynamically equivalent games share the exact same set of equilibria and generate identical continuous-time orbits (e.g., under replicator dynamics). Game-theoretical equivalence relaxes this further, preserving all static equilibrium sets but allowing dynamic trajectories to diverge.

%\section*{Part I: Nash Equilibria \& Price of Anarchy Revisited}

\section{The Strategic Insensitivity of Nash Equilibria}
\label{sec:OPT}

In this section, we analyze the topological and dynamical properties of the exact Nash equilibria that anchor classical efficiency bounds in algorithmic game theory. We establish that both the pure and interior equilibria utilized by worst-case analysis exhibit critical topological degeneracies, mathematically manifesting as global repellers and topological deadlocks.

\subsection{The Topological Degeneracy of Worst-Case Pure Nash}

To establish deterministic efficiency guarantees, algorithmic game theory frequently isolates pure Nash equilibria. Because generic pure (strict) NE act as topological attractors, they possess $C^1$ derivative information that strictly bounds local learning trajectories, appearing to resolve the instability of mixed states. However, we demonstrate that the specific worst-case pure NE engineered to prove the tightness of robust Price of Anarchy (PoA) bounds are not standard attractors, but rather exhibit severe topological degeneracies.

Unlike generic strict pure equilibria, which are regular and robust to payoff perturbations, worst-case pure NE are mathematically degenerate. To strictly bind the PoA inequalities, worst-case constructions require tuning the game parameters until agents are exactly indifferent between their optimal and suboptimal strategies. Algebraically, this exact indifference forces the marginal cost difference to vanish at the boundary of the simplex. By smoothly extending the potential function beyond the closed constraints, this tightness condition mandates that the worst-case boundary state acts as an interior critical point ($\nabla \Phi = \mathbf{0}$) within the unconstrained space.

Under the exact algebraic constraints of worst-case parameterization, a topological bifurcation occurs where an interior saddle point is pushed precisely onto the boundary. The resulting degenerate state mathematically satisfies the static inequalities of a pure NE, but its local geometry is severely distorted by the algebraic tightness constraints. These worst-case pure NE are not local minima, but rather strict saddles of the extended potential. Because their stable directions can point entirely outside the physical simplex, they can physically manifest as states of maximum potential.

\begin{figure}[t]
    \centering
    \includegraphics[width=0.55\textwidth]{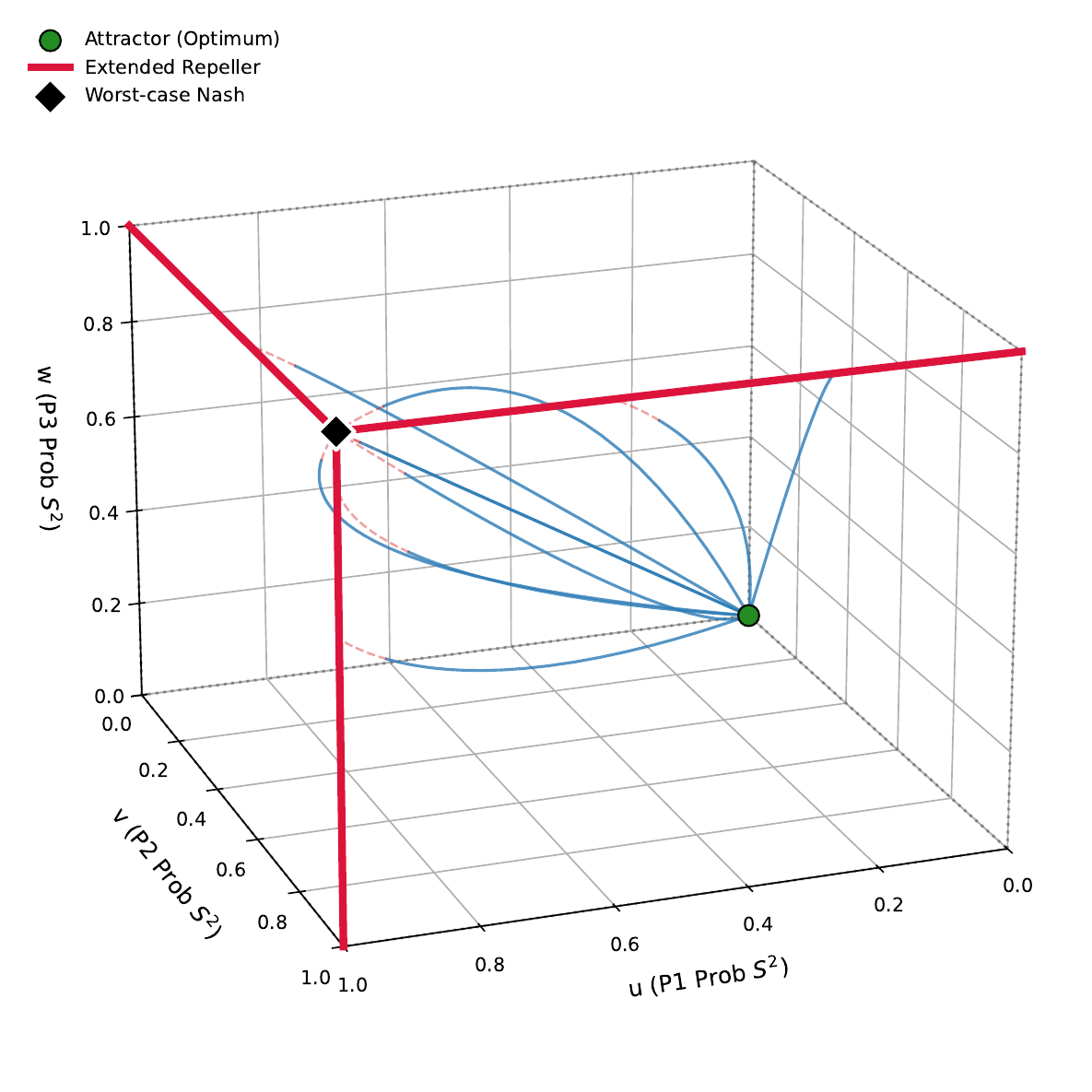}
    \caption{\textbf{The Topological Degeneracy of Worst-Case PoA (I).} Replicator dynamics in the linear congestion game producing the tight lower bound of $5/2$ for pure PoA~\cite{christodoulou2005price},\cite{roughgarden2015intrinsic} (Theorem 5.6, Example 5.7). The worst-case pure NE corresponds to a state of \textbf{maximum potential}---the most dynamically unstable state in the system. 
    The strategies supporting the worst case pure Nash are \textbf{almost everywhere strictly dominated} in the measure-theoretic sense. Under any gradient-like learning dynamic (where $\Phi$ acts as a strict Lyapunov function outside the equilibrium set), the worst-case equilibrium $y$ lies strictly in the \textbf{global repeller} of the interior state space.
    The game is highly degenerate, emerging from a bifurcation where a strict saddle is pushed into an attractor. At the annihilation point, both the attractor and saddle vanish, leaving a global repeller representing all single-agent deviations. (See Appendix~\ref{app:global_max} for a more detailed discussion).} 
    \label{fig:repeller_PoA}
\end{figure}

To analyze the local geometry of the worst-case equilibria, we must evaluate the landscape of the potential function. Because the worst-case state lies on the boundary of the probability simplex, we formalize the notion of a strict saddle via smooth analytic extension into an open domain.

\begin{definition}[Strict Saddle via Analytic Extension]
Let $\Phi$ be a twice-continuously differentiable function defined on a closed manifold with boundaries $\mathcal{X}$. Suppose $\Phi$ admits a smooth extension to an open neighborhood $U \subset \mathbb{R}^N$ encompassing $\mathcal{X}$. A point $q^* \in \mathcal{X}$ is a \textbf{strict saddle} if, evaluated within the extended interior space $U$, its gradient vanishes ($\nabla \Phi(q^*) = \mathbf{0}$) and its Hessian matrix, $H = \nabla^2 \Phi(q^*)$, admits at least one strictly negative eigenvalue ($\lambda_{\min}(H) < 0$).
\end{definition}

\begin{theorem}[Topological Instability of Worst-Case Nash Equilibria]\label{thm:zero-measure}
Let $\Gamma$ be a canonical exact potential congestion game utilized to establish tight robust Price of Anarchy (PoA) bounds (e.g., the affine double-cycle of~\cite{christodoulou2005price,roughgarden2015intrinsic}, or the multi-hop network of~\cite{awerbuch2005price,Nisan:2007:AGT:1296179}) where its worst-case pure Nash equilibrium $y$ perfectly binds the smoothness inequalities. Under these exact algebraic constraints, the following properties hold:
\begin{enumerate}
 \item The worst-case pure Nash equilibrium $y$ manifests as a \textbf{strict saddle} of the exact continuous potential function $\Phi$.
    \item Under any smooth learning dynamic defined by an interior-regular Riemannian metric (strictly positive-definite at the boundary), the Lebesgue measure of the basin of attraction for $y$ within the physical probability simplex is exactly zero.
    \item Under any continuous learning dynamic where the exact potential $\Phi$ acts as a strict Lyapunov function outside the equilibrium set, the equilibrium $y$ is \textbf{topologically unstable}, actively repelling physically valid learning trajectories away from it.
\end{enumerate}    
\end{theorem}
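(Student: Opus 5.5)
The approach is to reduce everything to the local Taylor structure of the Rosenthal potential's multilinear extension at the worst-case profile $y$, and then to feed that structure into standard dynamical arguments: the stable manifold theorem for (2) and a Lyapunov-level-set argument for (3). I would parameterize mixed play in the canonical instances by one coordinate per agent, $p_i\in[0,1]$, the probability of playing the socially optimal strategy. Both the double-cycle and the multi-hop constructions give each agent two strategies; for larger strategy sets I would use the affine hull of each simplex. The potential is then $\Phi(p)=\mathbb{E}_{s\sim p}[\sum_e\sum_{k\le x_e(s)}c_e(k)]$. This is a polynomial that is affine in each $p_i$ separately, and hence it extends analytically to all of $\mathbb{R}^N$.

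\textbf{Step 1 (strict saddle).} By exactness of the potential, $\partial\Phi/\partial p_i$ equals agent $i$'s expected cost difference between their two strategies against $p_{-i}$. The tightness condition that binds the smoothness inequalities is exactly the statement that each agent is indifferent at $y$, so $\nabla\Phi(y)=\mathbf{0}$ in the extended space. Multilinearity gives $\partial^2\Phi/\partial p_i^2\equiv 0$, so the Hessian $H$ has zero diagonal and $\operatorname{tr}H=0$. I would then verify, directly from the instance, that some off-diagonal entry $H_{ij}$ is nonzero. This is the case whenever two agents share a resource with $a_e>0$: $H_{ij}$ is the interaction term $\pm a_e$, summed over shared edges. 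A nonzero symmetric matrix with zero trace has a strictly negative eigenvalue, which gives (1). In the blockwise multi-strategy case, the same argument applies to the restriction of $H$ to the tangent space of the product of simplices.

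\textbf{Step 2 (measure zero basin).} Consider the dynamic $\dot p=-G(p)^{-1}\nabla\Phi(p)$, with $G$ smooth and positive definite in a neighborhood of $y$; this is where interior-regularity at the boundary is used. The linearization at $y$ is $J=-G(y)^{-1}H$. This matrix is similar to $-G^{-1/2}HG^{-1/2}$, which by Sylvester's law of inertia has at least one strictly positive eigenvalue. The center-stable manifold theorem then places all trajectories that remain near $y$ forever on a local $C^1$ manifold of dimension at most $N-1$. The global basin is $\bigcup_{n\in\mathbb{N}}\varphi_{-n}(W^{cs}_{\mathrm{loc}})$, a countable union of images under time-$n$ flow diffeomorphisms, hence of Lebesgue measure zero. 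Intersecting with the simplex preserves this, and (2) follows.

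\textbf{Step 3 (topological instability).} This step is the main obstacle. A negative eigendirection of $H$ need not point into the feasible cone at the vertex $y$, and coordinate directions are useless: along $e_i$, $\Phi$ is affine with zero slope, hence constant. I would therefore look for a feasible direction $d$ with $d^\top H d<0$, i.e.\ two agents $i,j$ whose admissible signs $\operatorname{sgn}(d_i)\operatorname{sgn}(d_j)$ agree with $-\operatorname{sgn}H_{ij}$. I expect to verify this explicitly in the two canonical instances; in the double-cycle, the full computation should show $y$ is in fact a maximum of $\Phi$ on the simplex. Given such a $d$, we have $\Phi(y+td)=\Phi(y)+\tfrac{t^2}{2}d^\top Hd+O(t^3)<\Phi(y)$ for small $t>0$. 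Any trajectory started at $x=y+td$ keeps $\Phi\le\Phi(x)<\Phi(y)$, because $\Phi$ is a strict Lyapunov function. By continuity of $\Phi$, such a trajectory stays outside a fixed neighborhood of $y$ determined by the gap $\Phi(y)-\Phi(x)$, so it cannot converge to $y$. Such points $x$ exist arbitrarily close to $y$, so $y$ is not Lyapunov stable, and trajectories are pushed away from it; this gives (3). If the sign condition fails in some instance, I would fall back on the cubic terms of the multilinear expansion along a feasible two- or three-agent direction.
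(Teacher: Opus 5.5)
\paragraph{Assessment.} Your proposal follows the paper's approach, but the last inference of Step~3 does not establish Lyapunov instability; everything before it is sound.

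\paragraph{Steps 1 and 2, and two small points.} Step~1 matches the paper: tightness gives indifference, hence $\nabla\Phi(y)=\mathbf 0$, and the Hessian has zero diagonal. Step~2 also matches (Sylvester's law of inertia for $-G(y)^{-1}H$), and is tighter than the paper's version. Combining the center-stable manifold theorem with $\bigcup_n\varphi_{-n}(W^{cs}_{\mathrm{loc}})$ neither needs $H$ to be non-singular nor stops at the local manifold, as the paper does.
\begin{itemize}
\item Sharing a resource is not by itself a criterion for $H_{ij}\neq 0$. Here $H_{ij}=\sum_e a_e(\chi_{Q_i}-\chi_{P_i})(e)(\chi_{Q_j}-\chi_{P_j})(e)$, with $Q_i$ the optimal and $P_i$ the worst-case strategy, is a signed sum that can cancel. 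Your explicit instance check settles this, as in the paper.
\item The feasible negative-curvature direction you defer is routine. In the double-cycle every off-diagonal entry of $H$ is $-2$, so any $d>0$ works. In the four-player network, $d=(1,1,0,0)$ already gives $d^\top Hd=2H_{12}=-4$; the paper uses the stronger fact that the $\lambda=-1-\sqrt2$ eigenvector $(1,1,\sqrt2-1,\sqrt2-1)$ is positive. The cubic fallback is never needed.
\end{itemize}

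\paragraph{The gap in Step 3.} You show that the orbit of $x=y+td$ never enters the open neighbourhood $U_x=\{\Phi>\Phi(x)\}$ of $y$, so it does not converge to $y$. But $U_x$ shrinks to $y$ as $t\to0$, so there is no uniform escape radius. ``Points arbitrarily close to $y$ do not converge to $y$'' rules out asymptotic stability only, not Lyapunov stability; a center is the standard counterexample.

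You need the orbits to leave a \emph{fixed} ball. With your ingredients this goes through $\omega$-limits:
\begin{itemize}
\item Suppose the orbit of $x$ stayed in the compact set $\bar B_\epsilon(y)\cap[0,1]^N$.
\item Its $\omega$-limit set would then be nonempty, invariant, and contained in a level set $\{\Phi=c\}$ with $c\le\Phi(x)<\Phi(y)$.
\item By the strict Lyapunov property, it would consist of equilibria of the dynamic whose potential is strictly below $\Phi(y)$.
\end{itemize}
So you must also choose $\epsilon$ such that $\bar B_\epsilon(y)$ contains no such equilibria. The strict Lyapunov hypothesis does not provide this, because it says nothing about where the rest set lies. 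This is exactly the role of the isolation of $y$, which the paper invokes to get $\dot W=-\dot\Phi>0$ on its Chetaev region $\Omega=\{q\in(0,1)^N\cap B_\epsilon:\Phi(q)<\Phi(\mathbf 0)\}$.

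In the canonical instances the check is easy. In the double-cycle, $\Delta_i=2(q_j+q_k)$ near $y$, so $y$ is an isolated Nash equilibrium. Even the replicator rest points near $y$ are just the flat edges at level $\Phi(y)$, and the analogous check works for the four-player network. Add this step, or apply Chetaev's theorem directly as the paper does, and Step~3 is complete.
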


\begin{remark}[Physical Geometry and Manifold Intersections]
While the universal zero-trace law guarantees that the worst-case pure Nash equilibrium $q = \mathbf{0}$ is mathematically a strict saddle in the unconstrained space $\mathbb{R}^N$, its physical manifestation depends entirely on how its stable manifold $W^s$ intersects the valid probability simplex $[0,1]^N$. 

In the 3-player affine bounding game of~\cite{christodoulou2005price}, the quadratic form of the Hessian dictates that all directions of positive curvature point strictly outside the non-negative orthant. Consequently, the stable manifold $W^s$ does not intersect the interior of the physical probability space. Within the confines of the game, every physically valid trajectory flows strictly downhill, causing the saddle to act as a pure topological repeller (and a constrained global maximum of the potential function).  See Figure~\ref{fig:repeller_PoA}, Theorem~\ref{thm:Repeller_pure_NE} and Appendix~\ref{app:global_max} for more details.

\begin{theorem}\label{thm:Repeller_pure_NE}
Let $G$ be the congestion game constructed via the canonical smoothness arguments to yield the worst-case Price of Anarchy for affine cost functions in ~\cite{christodoulou2005price,roughgarden2015intrinsic} for $N=3$ agents. Let %$y^*$ be the socially optimal pure Nash equilibrium and
 $y$ be the worst-case pure Nash equilibrium, then:
\begin{enumerate}
    \item The strategies supporting $y$ are almost everywhere strictly dominated in the measure-theoretic sense.
    \item The state $y$, alongside the manifold of states formed by unilateral deviations from it, constitutes the global maximum of the exact potential function $\Phi$.
    \item Under any gradient-like learning dynamic (where $\Phi$ acts as a strict Lyapunov function outside the equilibrium set), the worst-case equilibrium $y$ lies strictly in the global repeller of the interior state space.
\end{enumerate}
\end{theorem}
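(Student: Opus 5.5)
The plan is to work with explicit coordinates for the construction. In the game of~\cite{christodoulou2005price,roughgarden2015intrinsic} each player $i\in\{1,2,3\}$ has exactly two strategies: the \emph{equilibrium} path $P_i$ used in $y$ and the \emph{optimal} path $P_i^*$. Let $q_i\in[0,1]$ be the probability that player $i$ plays $P_i^*$, so that $y$ corresponds to $q=\mathbf{0}$, as in the Remark, and the state space is the cube $[0,1]^3$. For affine costs $c_e(x)=a_ex+b_e$ the multilinear extension of Rosenthal's potential is a polynomial of degree at most $2$ in each variable pair. It satisfies the exact-potential identity
\[
\partial_{q_i}\Phi(q)=D_i(q_{-i}):=c_i(P_i^*,q_{-i})-c_i(P_i,q_{-i}),
\]
and $D_i$ is affine in $q_{-i}$ because each player's cost is multilinear. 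The first step is to write out the resource sets of $P_i$ and $P_i^*$ explicitly and compute each $D_i$. I expect
\[
D_i(q_{-i})=-\sum_{j\neq i}\alpha_{ij}q_j,\qquad \alpha_{ij}\ge 0,
\]
with no constant term, and with enough strictly positive $\alpha_{ij}$ that $D_i<0$ whenever $q_{-i}\neq 0$. The vanishing constant term is exactly the worst-case tightness condition: at $y$ each player is indifferent between $P_i$ and $P_i^*$.

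Item 1 then follows from the sign structure of $D_i$. For each $i$, the set $\{q_{-i}: D_i(q_{-i})\ge 0\}$ reduces to $q_{-i}=0$, or more generally to a proper affine slice of $[0,1]^2$ if some $\alpha_{ij}$ vanish. In either case it has Lebesgue measure zero in the opponents' mixed-profile space. Off that set, $P_i^*$ gives strictly lower cost than $P_i$. Hence $P_i$ is strictly dominated by $P_i^*$ almost everywhere.

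For item 2, integrate along coordinate paths. Because $\Phi$ is a polynomial with $\partial_{q_i}\Phi$ independent of $q_i$, it has the form
\[
\Phi(q)=\Phi(\mathbf{0})-\sum_{i<j}\beta_{ij}q_iq_j-\gamma\,q_1q_2q_3,
\]
where the coefficients are determined by the $\alpha_{ij}$ and satisfy $\beta_{ij}=\alpha_{ij}=\alpha_{ji}$ by symmetry of second derivatives. I need to check that the cubic term either vanishes (affine costs make each player's cost bilinear in pairs, so I expect $\gamma=0$) or has a favorable sign. The expected conclusion is that $\Phi\le\Phi(\mathbf{0})$ on the cube, with equality exactly on the set where at most one $q_i$ is nonzero. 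This set is the union of the three edges through $\mathbf{0}$, namely the unilateral deviations. It would establish that $y$ together with its unilateral-deviation manifold is the global maximum. It also gives $\nabla\Phi(\mathbf{0})=0$ with a Hessian of zero diagonal and nonnegative off-diagonal entries $-\beta_{ij}$. This is consistent with the zero-trace strict-saddle picture of Theorem~\ref{thm:zero-measure}.

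Item 3 is then a standard Lyapunov argument. Under a gradient-like dynamic with $\Phi$ strictly decreasing off equilibria, consider any trajectory starting in the interior $(0,1)^3$. It starts at a value $\Phi(q^0)<\Phi(\mathbf{0})$ and $\Phi$ only decreases along it. Therefore its $\omega$-limit set, and indeed its whole forward orbit, stays in the sublevel set $\{\Phi\le\Phi(q^0)\}$, which is disjoint from the maximal set. So no interior orbit approaches $y$, and every interior point has a strictly positive "distance in potential" from $y$. Running time backwards, interior orbits must accumulate toward the maximal set, which places $y$ in the global repeller of the interior.

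The main obstacle is the first computation: correctly instantiating the resource structure of the cited $N=3$ example, including which resources carry cost $x$ and which are shared. I must verify that every $D_i$ has zero constant term and nonnegative coefficients, with enough strict positivity to make the maximal set exactly the unilateral-deviation edges. I would first try the standard cyclic instance with identity costs and check the signs by direct expansion.
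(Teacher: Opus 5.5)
Your plan is correct and follows the paper's proof. For the double-cycle instance the paper's computation gives $\Delta_i=-D_i=2a(q_j+q_k)$, so every $\alpha_{ij}=2a>0$, $\gamma=0$, and $\Phi(q)=\Phi(\mathbf{0})-2a(q_1q_2+q_1q_3+q_2q_3)$; items 1--3 then follow by the same a.e.-dominance, global-maximum and forward sublevel-set Lyapunov arguments, and your explicit closed form settles the equality case of item 2 more cleanly than the paper's coordinatewise monotonicity argument.

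Two small fixes. First, the off-diagonal Hessian entries $-\beta_{ij}=-2a$ are nonpositive, not nonnegative. Second, your backward-time claim that interior orbits accumulate on the maximal set does not follow from the bare strict-Lyapunov hypothesis, since backward limit points need only be rest points at levels above $\Phi(q^0)$. That claim is not needed, though: the paper's item 3 rests only on the forward argument.
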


In the 4-player bounding network of~\cite{awerbuch2005price}, the worst-case pure NE acts as a geometric bottleneck suspended between the socially optimal Nash equilibrium (the global minimum of the potential/attractor) and the highly congested ``anti-equilibria'' (the global maxima of the potential/repellers) (Figure~\ref{fig:phase_portrait_AAE}).
\end{remark}

\begin{figure}[t]
    \centering
    \includegraphics[width=1.0\textwidth]{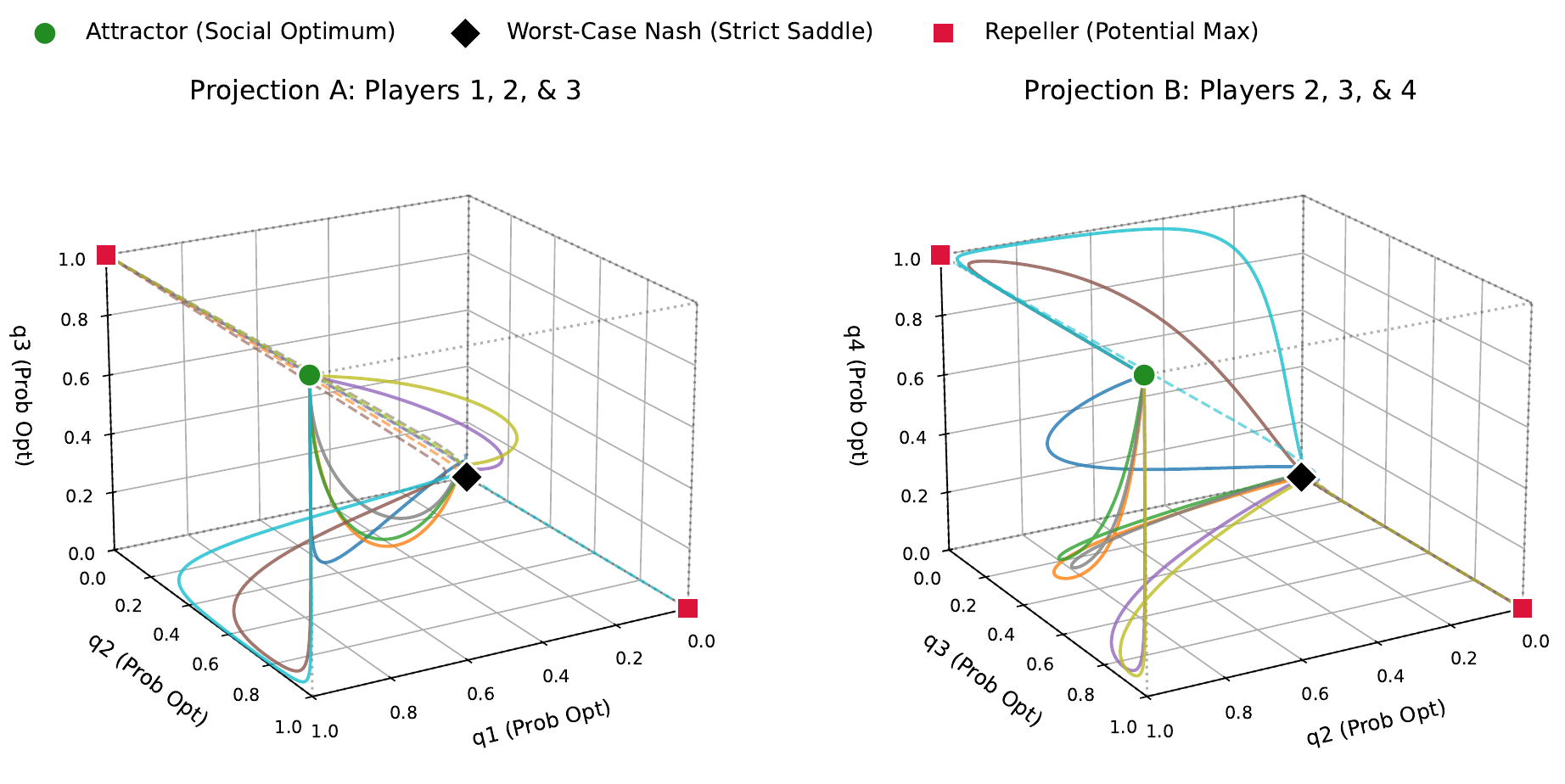}
    \caption{\textbf{The Topological Degeneracy of Worst-Case PoA (II)}.  Replicator dynamics for the congestion game establishing the $5/2$ pure PoA lower bound~\cite{awerbuch2005price}, \cite{Nisan:2007:AGT:1296179} (Chapter 18). Shown are two 3D projections of the 4D state space capturing the player trajectories. The worst-case pure Nash equilibrium (black diamond) manifests as a geometric "mountain pass" (\textbf{strict saddle}), perfectly suspended between the socially optimum Nash equilibrium (green circle, global minimum of the potential) and the heavily congested ``anti-equilibria'' (red squares, global maxima). Trajectories nearby the bad Nash equilibrium exhibit \textbf{slowing down}---crawling asymptotically toward the worst-case Nash before the local geometry of the potential landscape inevitably captures and repels the flow toward the socially optimum Nash equilibrium. Consequently, the worst-case bound is \textbf{topologically unstable}, actively repelling
    learning trajectories. 
    When trajectories are simulated back in time they converge to the two repellers.}
    \label{fig:phase_portrait_AAE}
\end{figure}

Theorems~\ref{thm:zero-measure}, \ref{thm:Repeller_pure_NE} establish that the Nash equilibria defining the tightness of classical PoA bounds are dynamically inaccessible, and sometimes even strictly anti-optimal (acting as states of maximum potential). 
By anchoring efficiency bounds to these dynamically unstable states, worst-case analysis yields performance metrics that are inherently more pessimistic than the empirical trajectories observed under natural learning.

Even if these worst-case games are slightly perturbed to break the degeneracy---undoing the bifurcation so the suboptimal pure state becomes a weakly strict NE ($\epsilon$-attractor)---the fundamental issue remains. The region of attraction for the suboptimal NE becomes a set of negligible measure. Thus, the inefficiencies captured by worst-case bounds are structurally decoupled from the typical behavior of natural learning dynamics.

\subsection{The $C^0$ Topology of Interior Nash: Deadlocks and Inverse Selection}

The topological degeneracy of the worst-case pure equilibria described above is fundamentally inherited from the structural properties of fully mixed (interior) Nash equilibria. Evaluated strictly as fixed points of learning dynamics, interior NE provide only $C^0$ information about the system. They carry no auxiliary $C^1$ information regarding the continuation of the vector field infinitesimally close to the fixed point. 

Because they rely strictly on $C^0$ evaluation, interior NE lack the local gradient data necessary to distinguish between aligned and strictly opposing incentives. If a game $G$ admits an interior NE, that exact same state is simultaneously an equilibrium of the exact negation $-G$ (where agents act as cost maximizers). Furthermore, up to strategic equivalence, the interior NE remains an equilibrium in any adversarial team game derived by inverting the incentives of an arbitrary subset of agents. 

Consequently, in cooperative potential and congestion games, interior NE function as topological deadlocks. Because the $C^0$ fixed-point condition is entirely symmetric, the interior NE effectively functions as an ``each-vs.-all'' adversarial environment: each agent's local gradient acts as if all other agents are actively coordinating to penalize them. Rather than optimizing the shared potential, agents trapped at an interior NE are locked into minimal, non-cooperative safety-level payoffs. 

We formalize the magnitude of this $C^0$ insensitivity by considering the pre-image of the equilibrium mapping. If an equilibrium concept holds prescriptive power, observing the state $p$ should allow an analyst to largely narrow down the strategic structure of the game being played. Instead, we observe a combinatorial explosion of strategically distinct games.

\begin{theorem}[The strategic insensitivity of interior Nash equilibria]
\label{thm:inverse_gt}
    Let $G$ be a generic game with $N$ agents and $m \ge 3$ strategies each, and let $p$ be an interior Nash equilibrium of $G$. The equivalence class $INV(p)$ of games for which $p$ is a Nash equilibrium includes at least $2^{Nm}$ distinct games, no two of which are game-theoretically equivalent to each other.
\end{theorem}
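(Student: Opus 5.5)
The plan is to reduce the count to a per-agent construction. Game-theoretic equivalence acts agent by agent: each agent $i$ gets its own multiplier $a_i>0$ and its own dummy term $d_i(s_{-i})$. So it suffices to build, for each agent $i$, a family of $2^m$ cost functions $c_i^{\sigma}$, indexed by sign vectors $\sigma\in\{\pm1\}^m$, with two properties. First, each one makes agent $i$ exactly indifferent among all pure strategies against $p_{-i}$. Second, no two are related by $c_i^{\sigma}=a\,c_i^{\sigma'}+d(s_{-i})$ with $a>0$. Taking products over agents then gives $2^{Nm}$ games. In each of them $p$ is an (interior) Nash equilibrium, since indifference on the full support is exactly the NE condition for a fully mixed profile. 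Two product games are game-theoretically equivalent only if every coordinate is, so the games are pairwise inequivalent.

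\textbf{Construction.} Fix agent $i$. Consider the space of functions $f:S_{-i}\to\mathbb{R}$ with $\sum_{s_{-i}}p_{-i}(s_{-i})f(s_{-i})=0$. It has dimension $|S_{-i}|-1\ge m^{N-1}-1\ge 2$, using $N\ge 2$ and $m\ge 3$. Choose $m$ functions $f_1,\dots,f_m$ in it that are pairwise linearly independent (non-proportional). A generic choice works, and it can also be made generic enough that the resulting games are generic in the paper's sense. Define
\[
c_i^{\sigma}(k,s_{-i})=\sigma_k f_k(s_{-i}),\qquad k\in S_i .
\]
Then $c_i^{\sigma}(k,p_{-i})=0$ for every $k$, so agent $i$ is indifferent against $p_{-i}$. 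To respect the normalization $c_i\in[0,1]$, I would apply a common positive rescaling and add a constant. Both operations stay inside the game-theoretic equivalence class and preserve indifference, so they affect neither claim.

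\textbf{Inequivalence (the main step).} Suppose $\sigma\neq\sigma'$ and $\sigma_k f_k=a\sigma'_k f_k+d$ for all $k$, with some $a>0$ and some $d:S_{-i}\to\mathbb{R}$. Rewrite this as $(\sigma_k-a\sigma'_k)f_k=d$ for every $k$.
\begin{itemize}
\item Take an index $k$ with $\sigma_k\neq\sigma'_k$. There $\sigma_k-a\sigma'_k=\sigma_k(1+a)\neq 0$. Hence $d$ is a nonzero multiple of $f_k$, and in particular $d\neq 0$.
\item Now take any $l\neq k$. If its coefficient $\sigma_l-a\sigma'_l$ were nonzero, $f_l$ would be proportional to $d$, hence to $f_k$, contradicting pairwise independence.
\item So the coefficient vanishes for every $l\ne k$, and the equation for such an $l$ (one exists since $m\ge 3$) forces $d=0$, contradicting the first point.
\end{itemize}
The same argument covers the fully flipped case $\sigma'=-\sigma$ (the negated game), where every coefficient is nonzero.

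I expect the delicate part to be reconciling this construction with the paper's notion of a generic game and with the phrase ``the equivalence class $INV(p)$.'' I would handle it by perturbing the $f_k$ within the zero-mean subspace. This preserves indifference at $p$, and for small enough perturbations it preserves pairwise non-proportionality, while removing payoff degeneracies. I would then note that the sign patterns $\sigma$ encode the aligned-versus-opposing incentive flips described before the statement: flipping $\sigma$ for all strategies of a subset of agents produces the adversarial team variants mentioned there.
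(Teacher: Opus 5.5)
Your argument is correct, and its core linear algebra matches the paper's, but the family of games comes from a different place.

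The paper derives the family from $G$ itself. It normalizes $c'_i(j,s_{-i})=c_i(j,s_{-i})-c_i(p_i,s_{-i})$ and sets $c^\sigma_i(j,\cdot)=\sigma_{i,j}c'_i(j,\cdot)$. Indifference at $p$ gives $\mathbb{E}_{p_{-i}}[c'_i(j,\cdot)]=0$. It then proves inequivalence by splitting on whether the common vector $\gamma_{i,j}\mathbf{c}'_{i,j}=\mathbf{D}_i$ vanishes. This uses its genericity hypothesis: each $\mathbf{c}'_{i,j}\neq\mathbf{0}$, and together they span more than a line. Your choice of an index with $\sigma_k\neq\sigma'_k$ simply lands you directly in the paper's case $\mathbf{D}_i\neq\mathbf{0}$.

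Because your $f_k$ are free zero-mean functions, you use only that $p$ is fully mixed. The genericity of $G$ is never invoked. The construction even survives $m=2$ whenever $m^{N-1}\ge 3$. In the paper, by contrast, the normalization forces $\sum_j p_i(j)\mathbf{c}'_{i,j}=\mathbf{0}$, which makes the vectors collinear when $m=2$; that is why it needs $m\ge 3$.

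What the paper's route buys is that its family contains $G$ itself ($\sigma\equiv 1$, up to strategic equivalence), $-G$, and all partial negations. That is the interpretive point of the theorem. Your closing remark about adversarial team variants therefore refers to the auxiliary game built from the $f_k$, not to $G$.

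Taking $f_k=\mathbf{c}'_{i,k}$ recovers the paper's statement. Your argument then needs only the paper's weaker condition, not pairwise non-proportionality: once $d\neq 0$, every coefficient $\sigma_l-a\sigma'_l$ is forced to be nonzero, so every $f_l$ is proportional to $d$.

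Finally, the ``delicate part'' you anticipate is not an issue. Genericity is a hypothesis on $G$, not a requirement on members of $INV(p)$, so no perturbation of the $f_k$ is needed.
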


Interior NE do not solve these games to approximate optimality; they are simply trapped, unable to navigate the exponentially many intersecting games they theoretically inhabit. While prior literature has established the instability of these states under specific dynamics (e.g. replicator, multiplicative weights update)~\cite{Kleinberg09multiplicativeupdates,panageas2019multiplicative},  we can distill this instability into a universal geometric law. By directly evaluating the exact potential function, we formally establish that these states are generically strict saddles of the underlying optimization landscape, independent of the choice of learning algorithm.

\begin{theorem}[The Universal Zero-Trace Law]
\label{prop:zero_trace}
In any  potential game where players randomize independently, let $q^*$ be a Nash equilibrium. Let $\mathcal{S}^*$ denote the restricted state space defined by the support of $q^*$. If $q^*$ is partially or fully mixed, and generic within its support (i.e., the restricted Hessian of the potential evaluated at $q^*$ is non-zero), then $q^*$ is a strict saddle point of the potential function restricted to $\mathcal{S}^*$. 
\end{theorem}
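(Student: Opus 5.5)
The plan rests on the multilinear structure of the exact potential. In a potential game with independent randomization, the mixed extension $\Phi(q)=\sum_{s} \Phi(s)\prod_i q_i(s_i)$ is multilinear, and therefore affine in each player's block $q_i$. The argument proceeds in four steps.

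\textbf{Step 1: coordinates.} Restrict to $\mathcal{S}^*$, the product of the faces spanned by the supports of $q^*$. On this face I will use affine coordinates $x_i\in\mathbb{R}^{k_i-1}$ for each player $i$ with $k_i\ge 2$, eliminating one support strategy per player via the simplex constraint. Players with $k_i=1$ are pure and contribute no coordinates. Since the elimination is affine within each block, $\Phi$ remains multilinear across blocks and affine within each block. This extension is exactly the smooth analytic extension to an open neighborhood required by the Definition of a strict saddle.

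\textbf{Step 2: vanishing gradient.} The Nash condition on the support says that every strategy in the support of $q_i^*$ yields the same expected cost, and hence the same partial derivative of $\Phi$. In the reduced coordinates, each partial derivative $\partial\Phi/\partial x_{i,a}$ is a difference of two such quantities, so $\nabla\Phi(q^*)=\mathbf{0}$ on $\mathcal{S}^*$. This also holds for partially mixed $q^*$, provided we work only inside the support face.

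\textbf{Step 3: zero-trace Hessian.} Because $\Phi$ is affine in each $x_i$, every diagonal block $\partial^2\Phi/\partial x_i\partial x_i$ vanishes identically. Hence $\mathrm{tr}(H)=\sum_i \mathrm{tr}(H_{ii})=0$; this is the ``zero-trace law.'' The Hessian $H$ is real symmetric, so its eigenvalues are real and sum to zero. Genericity gives $H\neq 0$, so at least one eigenvalue is nonzero. Since the eigenvalues sum to zero, there must be both a strictly positive and a strictly negative eigenvalue. Thus $\lambda_{\min}(H)<0$, and $q^*$ is a strict saddle, indeed an indefinite critical point, of $\Phi|_{\mathcal{S}^*}$. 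If one defines the saddle with respect to the physical cost-minimization potential, the same conclusion holds with the signs swapped, since $\lambda_{\max}>0$ also holds.

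\textbf{Main obstacle.} The delicate point is bookkeeping rather than the algebra. First, the reduced coordinates must preserve both the vanishing gradient and the block-affine structure; for this I use a single affine chart per block rather than a Lagrangian on the constrained simplex. Second, a fully mixed player's block must not introduce curvature through the constraint; this is guaranteed because the constraint is linear. Third, the restriction must contain at least two mixed players, since otherwise $H\equiv 0$, which the genericity assumption excludes. I would add a brief remark that genericity $H\ne0$ forces this.
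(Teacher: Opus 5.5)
Your proposal is correct and follows essentially the same route as the paper's proof. You use an affine chart per mixed player on the support face, and Nash indifference gives $\nabla_{\mathcal{S}^*}\Phi(q^*)=\mathbf{0}$. Affinity of $\Phi$ in each player's own block then forces a zero diagonal and hence zero trace, so the symmetric, nonzero, trace-zero Hessian is indefinite. Your observations that the entire diagonal blocks vanish, and that genericity implicitly requires at least two mixed players, are small refinements the paper leaves implicit.
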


Because a state that acts as a strict saddle within a restricted subspace naturally repels trajectories along that manifold, it is dynamically unstable in the full state space. However, because worst-case analysis evaluates all feasible algebraic solutions equally, bounding performance across these dynamically repelling strict saddles introduces significant structural slackness into the worst-case bounds.

\begin{remark}
\textbf{(Structural Robustness of Topological Deadlocks).} The strategic insensitivity and topological deadlocks induced by interior Nash equilibria are not fragile artifacts. By Harsanyi's theorem on generic games~\cite{harsanyi1973games}, generic interior equilibria are regular, possessing a non-singular Jacobian of the static payoff mapping. Consequently, by the Implicit Function Theorem, these states are structurally stable. If the payoff entries are subjected to a generic perturbation, the dynamically insensitive saddle point does not vanish; it merely undergoes a small continuous shift. As a result, the topological traps that actively repel gradient-based learning dynamics persist across a positive Lebesgue measure of the parameter space, rendering local smoothed analysis fundamentally insufficient to resolve the systemic inefficiency.
\end{remark}

\FloatBarrier

\section{The Sensitivity of Price of Anarchy}
\label{sec:metrics_collapse}

In the previous sections, we have established that Nash equilibrium, the predominant game theoretic solution concept, provides certificates of stability to states which are demonstrably and universally unstable from an algorithmic/optimization perspective, the \textit{strict saddles} and even \textit{global maxima} of non-convex minimization landscapes. This structural divergence is robust; it is present not only in the canonical textbook examples that form the foundational bedrock of the Price of Anarchy literature, but also generically in the case of (partially) mixed NE of potential/congestion games. Finally, interior Nash equilibria are strategically insensitive outcomes in the sense that an outside observer with access to their payoff matrices cannot infer if they are cost-minimizing or utility-maximizing agents.

The Price of Anarchy (PoA) literature establishes that in numerous classes of games (with congestion games being arguably the predominant such example), Nash equilibria and wide generalizations thereof (e.g., CE/CCE) are approximately optimal in terms of social cost. However, the social cost and potential in congestion/potential games are strongly correlated (e.g., within a factor of 2 of each other for all linear congestion games). This introduces a fundamental structural tension: PoA frameworks guarantee approximately optimal social cost, yet they mathematically anchor these guarantees to states that dynamically act as repelling strict saddles where agents are entirely indifferent to the optimization objective.

We resolve this tension by revisiting PoA guarantees from three different perspectives. First, we introduce a new baseline, where we compare social cost not multiplicatively against the optimal states but additively against min-max safety baselines. We show that approximate optimality in the sense of robust PoA is strictly weaker than min-max safety.  Second, we establish the sensitivity of PoA in congestion games, showing that the PoA in affine congestion games is unbounded. Finally, we show that even when performing average case performance analysis, PoA ratio-based metrics remain unbounded.  

\subsection{Divergence from Min-Max Safety Baselines}
\label{sec:minmax_regret}

Since its inception, the Price of Anarchy has evaluated multi-agent systems by benchmarking equilibrium outcomes against a centralized social optimum. While this approach provides a metric for social efficiency, it inherently bypasses a more fundamental game-theoretic criterion: baseline individual rationality. 
A fundamental game-theoretic baseline
of any solution concept is whether agents can guarantee the performance of their strictly adversarial safety strategies.

We formalize this via min-max costs: the minimum cost an agent can guarantee themselves even if all opponents play a coordinated, adversarial strategy designed to maximize that agent's cost. Any rational agent can independently secure this safety level; thus, any solution concept that systematically yields costs higher than this min-max threshold admits non-rationalizable behavior.

\begin{definition}[MinMax-Centered Performance]
    Given any game $G$ and any state $s$, the guaranteed safety gap for agent $i$ is:
    \begin{align*}
        u_i^\text{MinMaxCenter}(s) &= \min \max c_i - c_i(s) \\
        &= \min_{p_i \in \Delta(S_i)} \max_{\pi_{-i} \in \Delta(S_{-i})} \sum_{s_{-i}\in S_{-i}} \sum_{s_i} c_i(s_i, s_{-i}) p_i(s_i) \pi_{-i}(s_{-i}) - c_i(s)
    \end{align*}
\end{definition}

\begin{definition}[MinMax-Regret Performance]
    \begin{align*}
        u_i^\text{MinMaxRegret}(s) = c_i(\text{minmax}^*_i ,s_{-i})- c_i(s)
    \end{align*}
    where $\text{minmax}^*_i$ is the max-entropy min-max strategy of agent $i$ against a meta-agent controlling the opponents via correlated randomness. 
\end{definition}
  
A score of zero serves as the absolute baseline of economic rationality. Any rationalizable outcome must satisfy $u_i^\text{MinMaxCenter}(s) \geq u_i^\text{MinMaxRegret}(s) \geq 0$. However, the broad class of $(\lambda, \mu)$-approximate optimal states structurally fails to secure this safety baseline, even in the most canonical, highly symmetric settings.

\begin{theorem}[Divergence from Baseline Rationality in Balls and Bins Games]
\label{thm:not_minmax}
    Consider the class of canonical balls and bins games (symmetric load balancing games with $N$ agents, $M \ge 2$ identical resources, and monomial cost functions $c(x) = x^d$ for $d \ge 1$). The set of $(\lambda, \mu)$-approximate optimal states dictated by the robust Price of Anarchy framework strictly includes states where every agent simultaneously incurs a cost arbitrarily worse---scaling as $-\Theta(N^d)$---than their absolute (correlated) min-max safety guarantee.
\end{theorem}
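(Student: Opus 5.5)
We aim to exhibit, for each $N$, $M\ge 2$ and $d\ge1$, a concrete state of the balls-and-bins game that (i) satisfies the robust PoA guarantee, i.e.\ has social cost at most $\frac{\lambda}{1-\mu}\,C(s^*)$ for the smoothness parameters $(\lambda,\mu)$ certified for degree-$d$ monomial costs, and (ii) gives \emph{every} agent a cost exceeding its correlated min-max safety level by $\Theta(N^d)$. The plan has three steps. First we compute the safety level. Second we build a symmetric ``bad'' state whose social cost still lies within the PoA budget. Third we compare the two.

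\textbf{Step 1: the min-max level.} Fix agent $i$ and let the opponents (a correlated meta-agent) try to maximize $i$'s cost. Since $c(x)=x^d$ is increasing, the meta-agent's best response to any mixed $p_i$ is to send all $N-1$ opponents to one bin. Against a uniform $p_i$, agent $i$ then shares a bin with everyone with probability only $1/M$, so its expected cost is at most $\frac{1}{M}N^d+(1-\frac1M)\cdot 1$. By symmetry across bins the uniform strategy is minimax optimal, so it is also the max-entropy minimax strategy, and the correlated min-max value is
\[
v_i=\tfrac{1}{M}N^d+\tfrac{M-1}{M}.
\]

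\textbf{Step 2: the bad state.} The social optimum splits agents evenly across bins, so $C(s^*)\approx M\,(N/M)^{d+1}=N^{d+1}/M^d$. For $N$ divisible by $M$ this holds exactly; otherwise it holds up to constants. Now take $s$ to be the pure profile (or a point mass) with all $N$ agents in a single bin. Then $C(s)=N^{d+1}$ and each agent pays $N^d$. To keep this inside the approximately optimal set we need $\frac{\lambda}{1-\mu}\ge M^d$, which is false in general because the robust PoA for degree $d$ is roughly $d^{d}$-ish. So we move to a mixed state with a controlled fraction of concentration. Let $\pi$ place a fraction $\alpha$ of the agents (or weight $\alpha$) on the all-in-one-bin profile and the remainder on the balanced profile, and choose $\alpha$ so that the social cost exactly meets $\frac{\lambda}{1-\mu}C(s^*)$. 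We prefer a symmetric construction, for instance a uniform mixture over which bin is overloaded, so that every agent's expected cost is equal and therefore equal to $C(\pi)/N$.

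\textbf{Step 3: comparison.} Each agent's cost is $\frac{\lambda}{1-\mu}\cdot N^d/M^d$, while $v_i\approx N^d/M$. The gap $v_i-c_i=N^d\bigl(\frac1M-\frac{\lambda}{(1-\mu)M^d}\bigr)$ is negative of order $\Theta(N^d)$ exactly when $\frac{\lambda}{1-\mu}>M^{d-1}$. This is where we expect the main obstacle. For $d=1$ it always holds, since the robust PoA $5/2$ exceeds $1$. For larger $d$ we need the PoA bound, which is the Bell-number-like $\Theta((d/\ln d)^{d+1})$, to exceed $M^{d-1}$. That holds for $M$ small relative to $d/\ln d$, but not for all $M$.

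If the comparison fails for large $M$, we would try first to exploit that the set of approximately optimal states is defined through the smoothness inequality rather than the ratio alone. We would then look for states satisfying the certificate in which all agents sit in few bins, lowering per-agent cost relative to the min-max benchmark. Alternatively we would restrict the min-max comparison to the worst $M=2$ and monomial degree regime, and give explicit constants there.
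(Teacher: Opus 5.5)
Your argument is correct where it applies, and it genuinely generalizes the paper's.

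\textbf{The paper's route.} The paper treats only $M=2$. It takes the pure all-in-one-bin profile, whose cost ratio $2^d$ fits inside the budget because the degree-$d$ robust bound $\rho_d=\frac{\lambda}{1-\mu}$ exceeds $2^d$. It then compares the per-agent cost $N^d$ with the correlated min-max value $\frac{N^d+1}{2}$ (the adversary piles all opponents on a uniformly random bin), giving a gap of $-\frac{N^d-1}{2}$.

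\textbf{Your route.} Your value $v_i=\frac1M N^d+\frac{M-1}{M}$ is the correct $M$-bin version of the paper's min-max. Concentrating the opponents is optimal because $x^d$ is convex, not merely increasing. Your symmetric crowd/balanced mixture with $\alpha=\frac{\rho_d-1}{M^d-1}$ extends the phenomenon to every $M$ with $\rho_d>M^{d-1}$. That covers all $M$ when $d=1$, and $M$ up to order $d/\ln d$ in general. One bookkeeping fix: at $M=2$ one has $\rho_d>2^d$, so this $\alpha$ exceeds $1$. Cap it at $1$, which recovers exactly the paper's crowd state.

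\textbf{The obstacle you flag is a true boundary.} It is not a weakness of your construction. If every agent pays at least $v_i+cN^d$, then $C\ge N v_i+cN^{d+1}\ge N^{d+1}/M+cN^{d+1}$. But every approximately optimal state has $C\le\rho_d\,C(s^*)\approx\rho_d N^{d+1}/M^d$. So for $d\ge 2$ and $M^{d-1}\ge\rho_d$, no such state exists once $N$ is large. Your symmetric mixture, which places every agent at the largest admissible average cost, is therefore optimal, and the condition $\rho_d>M^{d-1}$ is sharp.

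\textbf{Consequences.} The statement can only be read existentially in $M$, and the paper's proof only exhibits $M=2$. Your first fallback cannot help: any certificate-based definition of the approximately optimal set still forces $C\le\rho_d\,C(s^*)$ on its members. You should commit to the second fallback and state the result for $M=2$, or more generally whenever $M^{d-1}<\rho_d$.

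The paper's route gives a one-line example. Yours gives the exact range of $M$ in which the divergence from min-max safety occurs.
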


The proof (detailed in Appendix~\ref{app:PoA_metrics}) is constructive. By explicitly computing the optimal social cost, the exact min-max safety threshold, and the theoretical $(\lambda, \mu)$ upper bound, we demonstrate that the standard robust PoA inequalities are structurally broad enough to include states of perfect miscoordination within the 'approximately optimal' set.

To understand the source as well as commonality of this inefficiency, we must analyze the exact probabilistic nature of the load induced by the adversaries. In these canonical congestion games, the agents' incentives are perfectly aligned to minimize collisions. Yet, the fully mixed interior Nash equilibrium extracts exactly zero cooperative surplus. As formalized below, the interior equilibrium perfectly mimics the optimal adversarial attack.

\begin{theorem}[Independent Min-Max Equivalence]
\label{lem:poisson_binomial_minmax}
    In symmetric balls and bins games with $N$ agents, $M$ bins, and convex monomial costs $c(x) = x^d$ ($d \ge 1$), the expected cost evaluated at the fully mixed interior Nash equilibrium is equal to the independent min-max safety cost. (Proof deferred to Appendix~\ref{app:PoA_metrics}).
\end{theorem}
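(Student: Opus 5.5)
We take the \emph{independent min-max safety cost} of agent $i$ to be the best cost agent $i$ can secure against opponents who randomize independently (product distributions $q_{-i}=\prod_{j\neq i} q_j$) but are otherwise adversarial. Concretely, we use the value
\[
V_i \;=\; \max_{q_{-i}\ \text{product}} \; \min_{b\in[M]} \; \mathbb{E}\big[(1+L_b)^d\big], \qquad L_b=\sum_{j\neq i} \mathbf{1}[\text{$j$ picks } b].
\]
This is a choice of ordering (the adversary commits first), and it matters. In the opposite order, agent $i$ must commit to a mixed strategy first. An adversary placing every opponent deterministically on one bin then yields a cost of order $N^d/M$. That exceeds the equilibrium cost for large $N$, so with that ordering the equality could not hold. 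Each $L_b$ is a Poisson-binomial variable with success probabilities $q_j(b)$, and these satisfy $\sum_b q_j(b)=1$ for every $j$.

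\textbf{Step 1 (value at the interior NE).} By symmetry the fully mixed interior NE is uniform, $p_j(b)=1/M$. Then every bin's load from the opponents is $X\sim\mathrm{Bin}(N-1,1/M)$. Hence agent $i$ is indifferent across bins and pays exactly $\mathbb{E}[(1+X)^d]$. The uniform product profile is one admissible adversary, so this value is a lower bound on $V_i$.

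\textbf{Step 2 (no independent adversary does better).} Fix an arbitrary product profile $q_{-i}$. Since $\sum_b \mathbb{E}[L_b]=N-1$, some bin $b^*$ has mean load $\mu_{b^*}\le (N-1)/M$. We argue in two parts.
\begin{enumerate}
\item By Hoeffding's classical result, a Poisson-binomial sum $S$ with $n$ trials and mean $\mu$ is dominated in the convex order by $\mathrm{Bin}(n,\mu/n)$. Since $x\mapsto(1+x)^d$ is convex for $d\ge1$, this gives $\mathbb{E}[(1+L_{b^*})^d]\le \mathbb{E}[(1+\mathrm{Bin}(N-1,\mu_{b^*}/(N-1)))^d]$.
\item A binomial with success probability $p$ is stochastically increasing in $p$, and $(1+x)^d$ is increasing. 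So the right-hand side is at most $\mathbb{E}[(1+X)^d]$.
\end{enumerate}
Therefore $\min_b \mathbb{E}[(1+L_b)^d]\le \mathbb{E}[(1+X)^d]$ for every product adversary, and $V_i=\mathbb{E}[(1+X)^d]$. This equals the NE cost from Step 1.

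\textbf{Main obstacle.} The substantive step is the convex-order comparison in Step 2, item 1. If we cite Hoeffding's theorem, the rest is routine. To make it self-contained, we would use a pairwise smoothing argument: replacing two Bernoulli parameters $p_j,p_k$ by their average preserves the mean. It weakly increases $\mathbb{E}f(S)$ for convex $f$, because conditioning on the other trials reduces the comparison to a two-variable inequality $(p_jp_k-\bar p^2)\,\Delta^2 f\le 0$, where $\Delta^2 f$ is a second difference of $f$. This holds since $p_jp_k\le\bar p^2$ and $\Delta^2 f\ge0$. Iterating the replacement converges to the binomial.

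A secondary point is to state clearly the ordering in the definition of $V_i$ given above. We would also remark that the correlated min-max in Theorem~\ref{thm:not_minmax} is weakly larger than $V_i$. This is consistent with that theorem.
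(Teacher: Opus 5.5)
Your proof is correct and follows the paper's proof essentially step for step. Both use the same max--min definition with an independent adversary, the pigeonhole bound $\min_b \mu_b \le (N-1)/M$, Hoeffding's comparison with $\mathrm{Bin}(N-1,\mu/(N-1))$, stochastic monotonicity of the binomial in its success probability, and attainment by the uniform profile. The differences are cosmetic: you treat $d=1$ inside the convex case rather than separately, and you sketch Hoeffding's inequality via pairwise averaging.

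One side remark needs a small correction. Under the reversed (agent-first) order, both values equal $1+(N-1)/M$ when $d=1$, so your claim that equality fails under that order holds only for $d\ge 2$.
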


Modulo the constraint of independent play, the fully mixed interior Nash equilibrium mathematically collapses to the adversarial min-max safety strategy. (For linear congestion games where $d=1$, linearity of expectation forces this independent safety cost to perfectly collapse to the absolute correlated min-max cost). 

This mathematical equivalence between a cooperative equilibrium and a zero-sum adversarial safety strategy is the exact quantitative manifestation of the $C^0$ topological insensitivity established in Section~\ref{sec:OPT}. Because the worst-case framework must analytically accommodate these zero-surplus topological deadlocks, it systematically generates efficiency bounds that fall arbitrarily below the threshold of economic rationality.

\subsection{The Algebraic Sensitivity of Price of Anarchy Bounds}
\label{sec:poa_fragility}

This structural sensitivity extends from individual safety baselines to the global efficiency guarantees of the system. The standard robust Price of Anarchy (PoA) analysis for affine congestion games rests on a specific algebraic axiom: all resource cost functions must take the form $c_e(x) = a_e x + b_e$, where both $a_e, b_e \ge 0$. While the condition $a_e \ge 0$ correctly models the negative externalities of congestion, enforcing a non-negative intercept $b_e \ge 0$ introduces an additional algebraic constraint that is not implied by positivity or monotonicity on the physically realizable congestion domain.

Traditionally, the $b_e \ge 0$ constraint is defended via continuous-domain intuition: a hypothetical ``empty'' resource ($x=0$) cannot incur negative latency. However, atomic congestion games inherently operate over discrete, strictly positive integer loads ($x \ge 1$). In real-world systems, it is trivial to encounter strictly positive, monotonically increasing latency functions whose exact affine representation possesses a negative intercept. Consider a shared machine requiring a fixed ``cool-down'' or setup period between successive uses. If the first task requires $1$ second, and every subsequent task requires $2$ seconds ($1$ for setup, $1$ for execution), the exact affine representation of the physical latency is $c(x) = 2x - 1$. The costs are strictly positive for all physically realizable loads ($x \ge 1$), yet the intercept is definitively negative.

The exclusion of such valid physical models highlights a structural limitation within worst-case equilibrium analysis. It is a fundamental property of the Price of Anarchy that it is highly sensitive to affine transformations and cost translations. Adding a fixed positive constant to all costs artificially inflates the optimal social cost (the denominator), thereby artificially deflating the PoA ratio. Because the PoA metric is inherently \emph{not} invariant to these strategic equivalences, the framework operates under the strict premise that a game's cost functions represent exact, un-normalized empirical measurements. If the applicability of the PoA ratio relies on costs reflecting exact physical measurements, excluding cost functions with negative intercepts structurally limits the application of these bounds, particularly when those functions accurately represent ground-truth measurements of everywhere-positive, strictly increasing physical costs.

This formulation dependence becomes particularly relevant in modern, data-driven applications. In practice, latency functions are rarely provided as exact theoretical primitives; they are typically inferred via regression over historical data. When modeling cost dynamics using standard parametric approaches, statistically optimal best-fit curves frequently yield negative coefficients to accurately capture the curvature over the active physical domain ($x \ge 1$). Imposing strict algebraic non-negativity constraints on these learned parameters restricts the model's expressivity, potentially reducing its fidelity to the empirical observations. This restriction becomes progressively stronger for higher-order polynomial latency models: while affine costs constrain only the intercept, a degree-$d$ polynomial latency function requires $d+1$ simultaneous coefficient-sign constraints. Consequently, the algebraic subclass covered by classical polynomial PoA theory (e.g., \cite{christodoulou,aland2011exact,Nisan:2007:AGT:1296179}) becomes progressively narrower relative to the class of positive, monotone latency functions as the expressive power of the model increases.\footnote{In scientific computing, fitting data using the standard monomial basis is generally avoided due to numerical instability. Regression libraries project data onto orthogonal polynomial bases (e.g., Chebyshev polynomials $T_2(x) = 2x^2 - 1$)~\cite{trefethen2019approximation}. Expanding any stably learned cost function back into the monomial form can thus naturally yield negative coefficients.}

\begin{figure}[t]
    \centering
    \includegraphics[width=.40\textwidth]{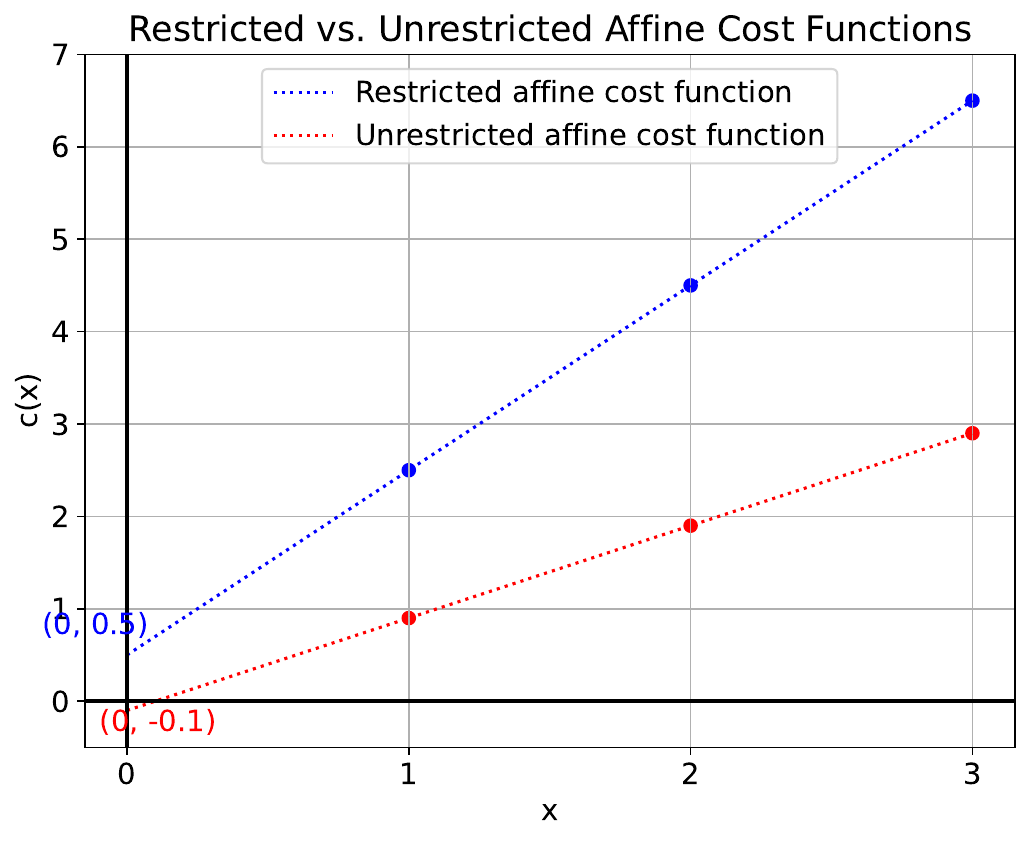}
    \caption{\textbf{Algebraic Constraints vs. Physical Validity.} A comparison between a restricted and an unrestricted affine function in robust Price of Anarchy analysis. While both cost functions map strictly positive integer loads to positive costs, the critical parameter dictating their inclusion in PoA bounds relies entirely on whether the extrapolated, physically inaccessible continuous line crosses the $y$-axis above or below zero.}
    \label{fig:PoA_allowable}
\end{figure}

\begin{figure}[t]
    \centering
    \begin{tabular}{c|c|c|}
        \multicolumn{1}{c}{} & \multicolumn{1}{c}{A} & \multicolumn{1}{c}{B} \\
        \cline{2-3}
        A & $(w+1, w+1)$ & $\mathbf{(w, w)}$ \\
        \cline{2-3}
        B & $\mathbf{(1, 1)}$ & $(w+1, w+1)$ \\
        \cline{2-3}
        \end{tabular}
        \hspace{2cm} % Adjust horizontal space between tables
        \begin{tabular}{c|c|c|}
        \multicolumn{1}{c}{} & \multicolumn{1}{c}{A} & \multicolumn{1}{c}{B} \\
        \cline{2-3}
        A & $(\epsilon, \epsilon)$ & $\mathbf{(\epsilon^2, \epsilon^2)}$ \\
        \cline{2-3}
        B & $\mathbf{(\epsilon^2, \epsilon^2)}$  & $(\epsilon, \epsilon)$  \\
        \cline{2-3}
    \end{tabular}
    \caption{\textbf{Unbounded Price of Anarchy in Affine Congestion Games.} Two parametric families of affine congestion games where the PoA is strictly unbounded. Pure Nash equilibrium states are bolded. \textbf{Left:} For games with $w>1$, the pure PoA is unbounded when accommodating affine cost functions of arbitrarily large slopes. \textbf{Right:} The mixed PoA is unbounded even for families of affine congestion games with an arbitrarily small positive Lipschitz constant (slope).}
    \label{fig:payoff_matrices}
\end{figure}

The subtlety of this algebraic boundary highlights a deep structural dichotomy in the classical literature. While the non-negativity requirement ($b_e \geq 0$) is structurally necessary for classical Price of Anarchy bounds, it is strategically irrelevant to the agents and invariant under the induced learning dynamics. Consequently, the boundedness of the efficiency metric is governed by an algebraic property of the chosen cost representation rather than by the strategic or dynamical behavior of the game itself.

%Consequently, because classical worst-case PoA bounds diverge when valid, everywhere-positive latency functions possess negative intercepts, applying these static bounds to data-driven, learned cost models requires careful structural re-evaluation.

\begin{theorem}[The sensitivity of Price of Anarchy]\label{thm:congestion_not_smooth}
The Price of Anarchy of affine congestion games is unbounded even when restricting to affine costs with arbitrarily small positive slopes. Moreover, the Pure Price of Anarchy of general affine congestion games is unbounded. This holds even for minimal games comprising exactly two agents, two strategies, and four congested elements, where all congested elements possess positive, non-decreasing affine latency functions on the active domain.
\end{theorem}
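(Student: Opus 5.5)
The plan is to prove Theorem~\ref{thm:congestion_not_smooth} by explicitly realizing the two payoff matrices of Figure~\ref{fig:payoff_matrices} as two-player congestion games on four resources $\{a,b,c,d\}$. Both agents have two strategies, A and B, but their strategy sets differ:
\[
S_1=\{\{a,c\},\{b,d\}\},\qquad S_2=\{\{a,d\},\{b,c\}\}.
\]
With this choice, each of the four profiles makes exactly one resource shared (load $2$) and leaves one private resource per agent (load $1$):
\begin{itemize}
\item at $(A,A)$, $a$ is shared and $c$, $d$ are private;
\item at $(A,B)$, $c$ is shared;
\item at $(B,A)$, $d$ is shared;
\item at $(B,B)$, $b$ is shared.
\end{itemize}
Atomic loads only take the values $1$ and $2$. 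So it suffices to prescribe $c_e(1)$ and $c_e(2)$ for each resource $e$; the unique affine function through these two points is the latency. It is positive and non-decreasing on the active domain $x\in\{1,2\}$ exactly when $0<c_e(1)\le c_e(2)$. Its intercept may be negative, which is precisely the relaxation the theorem concerns.

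The first step is to solve the eight linear equations that match the eight cost entries. Write $u=c_c(1)=c_d(1)$ and $v=c_a(1)=c_b(1)$; these equalities are forced by comparing the two agents' costs at $(A,A)$, $(B,B)$ and at $(A,B)$, $(B,A)$. For the left matrix the remaining values are then forced:
\[
c_a(2)=c_b(2)=w+1-u,\qquad c_c(2)=w-v,\qquad c_d(2)=1-v.
\]
Taking $u=v=\tfrac12$ gives the following latencies:
\begin{itemize}
\item $c_d\equiv\tfrac12$, which is constant and hence non-decreasing;
\item $c_c$ runs from $\tfrac12$ to $w-\tfrac12$, with slope $w-1$;
\item $c_a,c_b$ run from $\tfrac12$ to $w+\tfrac12$, with slope $w$.
\end{itemize}
All of these are positive and non-decreasing for $w>1$.

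Next I check equilibria in the left game. $(B,A)$ is a pure NE with cost $1$ to each agent, since either agent deviating lands on a diagonal cell of cost $w+1$. $(A,B)$ is also a pure NE with cost $w$ to each agent, since deviations again cost $w+1$. Hence the pure PoA is at least $2w/2=w$, which tends to infinity as $w\to\infty$. The slopes grow with $w$, consistent with the statement that large slopes are needed here.

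For the right matrix I use the same resource structure with $u=v=\varepsilon^2/2$. This gives
\begin{itemize}
\item $c_c\equiv c_d\equiv \varepsilon^2/2$;
\item $c_a,c_b$ running from $\varepsilon^2/2$ to $\varepsilon-\varepsilon^2/2$, with slope $\varepsilon-\varepsilon^2$.
\end{itemize}
This slope is positive and arbitrarily small for $\varepsilon\in(0,1)$. The game is an anti-coordination game with equal off-diagonal entries, so each agent's indifference condition is symmetric and forces the other agent to play $(\tfrac12,\tfrac12)$. The fully mixed NE therefore has expected cost $(\varepsilon+\varepsilon^2)/2$ per agent, while the optimum is $\varepsilon^2$ per agent. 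The resulting ratio $(1+\varepsilon)/(2\varepsilon)$ diverges as $\varepsilon\to0$, which gives the first claim.

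The main obstacle is the bookkeeping in the realization step. The strategy sets must be chosen so that the four profiles share four distinct resources, which makes the equation system solvable with two free parameters, and those parameters must then be tuned so that every resource satisfies $0<c_e(1)\le c_e(2)$. The naive symmetric choice $S_1=S_2$ fails for the left matrix because its off-diagonal entries differ, which is why I use the crossed sets above. Finally, I will note that every agent's set of incurred costs is unchanged by representing each latency affinely, so the dynamics and equilibria are exactly those of the matrices. This emphasizes that only the sign of the intercept separates these examples from the $\sfrac{5}{2}$ regime.
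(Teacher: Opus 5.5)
Your proof is correct and is essentially the paper's proof. Your crossed sets $S_1=\{\{a,c\},\{b,d\}\}$, $S_2=\{\{a,d\},\{b,c\}\}$ are the paper's $\{e_{AA},e_{AB},e_{BA},e_{BB}\}$ construction under the relabeling $a\mapsto e_{AA}$, $b\mapsto e_{BB}$, $c\mapsto e_{AB}$, $d\mapsto e_{BA}$, and your left family with $u=v$ is the paper's family with $\delta=u=v$. The only structural difference is that the paper realizes the small-slope matrix with a symmetric two-resource singleton game, $c(x)=(\epsilon-\epsilon^2)x-\epsilon+2\epsilon^2$, which gives the same ratio $(1+\epsilon)/(2\epsilon)$.

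One minor point concerns your parameter choices. Taking $u=v=\tfrac12$ puts the left game on the boundary $u+v=1$, and taking $u=v=\varepsilon^2/2$ puts the right game on the boundary $u+v=\varepsilon^2$. This makes $c_d$ (respectively $c_c$ and $c_d$) constant, so your small-slope game contains zero-slope resources rather than positive-slope ones. Taking $u+v$ strictly below these thresholds fixes this and makes every latency strictly increasing with positive slope, e.g.\ $u=v=\varepsilon^2/4$ on the right, or the paper's condition $0<2\delta<1<w$ on the left.
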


These constructions (detailed in Appendix~\ref{app:PoA_metrics}) demonstrate that bounded Price of Anarchy guarantees rely fundamentally on the algebraic normalization of the latency functions rather than solely on their strategic or physical properties. The affine intercept is merely the simplest manifestation of this phenomenon. Classical polynomial PoA analyses similarly require every monomial coefficient to be nonnegative, introducing an increasing number of independent algebraic sign constraints as the polynomial degree grows. Consequently, if bounded PoA already fails under the minimal violation of a single affine coefficient constraint, extending these guarantees to empirically learned polynomial latency models becomes increasingly difficult without significant structural modification.

\subsection{Average Price of Anarchy is Similarly Unbounded}
 
To salvage efficiency guarantees in the presence of unstable worst-case equilibria, a natural analytical recourse is the Average Price of Anarchy (APoA)~\cite{panageas2016average,sakos2024beating}. Rather than evaluating the single worst-case equilibrium, APoA evaluates the expected social cost, weighting each Nash equilibrium proportional to the Lebesgue measure of its basin of attraction under natural learning dynamics. We show that while this methodology recovers optimal performance for certain game families, it inherits the critical limitations of classical PoA.
 
\begin{theorem}[The Dynamical Disconnect of Average-Case Ratios]\label{thm:ergodic_good}
    For the first parametric family of affine congestion games yielding unbounded PoA (Figure~\ref{fig:payoff_matrices}), under a uniform Lebesgue measure prior, the expected limit social cost is at most $2$ times the optimal social cost, with instances reaching an average-case inefficiency of at least $1+\frac{1}{27}(18-2\sqrt{3}\pi) \approx 1.2636$. For the second family, given any prior absolutely continuous with respect to the Lebesgue measure, the average-case performance strictly equals the optimal social cost. 

    Conversely, by linearly translating the costs of the first family of games, we can construct affine congestion games with two agents and two strategies where, under a uniform Lebesgue prior, the average-case system performance becomes arbitrarily worse than the optimal social welfare.  %Thus, the Average Price of Anarchy in affine congestion games is unbounded.
\end{theorem}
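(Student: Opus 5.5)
The plan is to exploit the fact that both families are $2\times 2$ exact potential games. Under any gradient-like dynamic, and concretely under the replicator dynamic \eqref{eq:system_cost}, almost every interior initial condition therefore converges to one of the pure Nash equilibria. The basins are separated by the stable manifold of the interior (saddle) Nash equilibrium, which by Theorem~\ref{prop:zero_trace} is a strict saddle. Consequently the average-case cost reduces to a computation of basin areas.

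For the first family, let $p$ and $q$ be the probabilities that the row and column players play $A$. The row player's expected costs are $c_A = w+q$ and $c_B = w+1-wq$. The column player's are $c_A = 1+pw$ and $c_B = w+1-p$. The interior NE is therefore $(p^*,q^*) = (w/(1+w),\,1/(1+w))$. The pure NE are $(B,A)$, with cost $2=\mathrm{OPT}$, and $(A,B)$, with cost $2w$. Writing $\mu_w$ for the Lebesgue measure of the basin of $(A,B)$, the normalized expected limit cost is $1+(w-1)\mu_w$.

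To compute $\mu_w$, I will write the replicator dynamic as
\begin{equation*}
\dot p = p(1-p)\,\bigl(1-(w+1)q\bigr), \qquad \dot q = q(1-q)\,\bigl(w-(w+1)p\bigr),
\end{equation*}
with signs fixed by cost minimization. The ratio $dp/dq$ is separable, which gives an explicit first integral of the form
\begin{equation*}
H(p,q) = w\ln p - \ln(1-p) - \ln q + w\ln(1-q),
\end{equation*}
with coefficients to be checked by partial fractions. The separatrix is the branch of the level set $H=H(p^*,q^*)$ through the interior NE, so $\mu_w$ is an explicit integral.

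The main obstacle is then twofold. First, I must prove $(w-1)\mu_w\le 1$ for all $w>1$; I expect this to follow from a geometric bound, for instance by showing the basin of $(A,B)$ lies inside a triangle or rectangle cut by the NE coordinates of area $\le 1/(w-1)$. Second, I must evaluate the integral at a specific $w$, likely $w=2$ or the maximizer, to obtain $1+\frac{1}{27}(18-2\sqrt3\pi)$. The appearance of $\sqrt3\pi$ suggests an arctangent integral arising from a quadratic in the parametrized separatrix. I would try $w=2$ first.

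For the second family, both pure NE $(A,B)$ and $(B,A)$ attain the optimum $2\epsilon^2$. The only non-optimal equilibria are the interior NE, whose stable manifold is a one-dimensional curve of zero Lebesgue measure (Theorem~\ref{prop:zero_trace}), and the anti-coordination corners, which are repellers. Hence under any absolutely continuous prior, almost every trajectory converges to an optimal state, and the average performance equals $\mathrm{OPT}$.

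For the converse, subtract a constant $c\in(0,1)$ from every cost entry of the first family. This is realized by lowering the intercepts of the resource latencies, keeping them positive on $x\ge1$ as in Theorem~\ref{thm:congestion_not_smooth}. It is a strategic equivalence, so the replicator orbits and $\mu_w>0$ are unchanged. The optimum becomes $2(1-c)\to 0$ while the bad equilibrium cost stays near $2(w-1)$. The ratio $1+\mu_w(w-1)/(1-c)$ therefore diverges as $c\to1$, which shows the average-case ratio is unbounded.
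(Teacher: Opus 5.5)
Your overall route matches the paper's. You reduce to basin areas under the replicator dynamic, use an explicit first integral at $w=2$, use the measure-zero stable manifold of the interior saddle for the second family, and use translation invariance of the orbits for the converse. The genuine gap is the uniform bound $(w-1)\mu_w\le 1$, which you only conjecture, and the rectangle you suggest cannot work. In your $(p,q)$ coordinates the separatrix runs from the non-equilibrium corner $(B,B)=(0,0)$ through $(p^*,q^*)$ to $(A,A)=(1,1)$; its tangent at the saddle is the stable eigendirection $(1,1)$. So the closure of the basin of $(A,B)$ touches both of these corners and fits in no box like $[p^*,1]\times[0,q^*]$.

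The missing idea, which is Step 2 of the paper's proof, is a pair of forward-invariant triangles on the good side: $T_1$ with vertices $(B,A)=(0,1)$, $(1,1)$, $(p^*,q^*)$, and $T_2$ with vertices $(0,1)$, $(0,0)$, $(p^*,q^*)$.
\begin{itemize}
\item Their outer edges lie on the invariant faces $q=1$ and $p=0$.
\item Their common edge lies on the line $p+q=1$, which is invariant; it is the unstable manifold of the saddle.
\item On the remaining edges the field points inward. For instance, on the edge of $T_1$ contained in the line $wp-q=w-1$, one computes $w\dot p-\dot q=(1-q)(w-1)\bigl(w-(w+1)p\bigr)<0$ for $p\in(p^*,1)$. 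This is exactly where $w>1$ enters.
\end{itemize}
The stable direction at the saddle points outside both triangles, so their interiors lie in the basin of $(B,A)$. Their total area is $p^*=w/(w+1)$, so $\mu_w\le 1/(w+1)$. The normalized expected cost is therefore at most $2w/(w+1)<2$, which is stronger than the bound you were aiming for.

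Two smaller repairs are needed. First, your first integral is not conserved as written. Partial fractions give $H=w\ln p+\ln(1-p)-\ln q-w\ln(1-q)$, whose level set through $(p^*,q^*)$ is $\{p^w(1-p)=q(1-q)^w\}$. This set contains the invariant line $p+q=1$, so you must take the other branch. At $w=2$, writing $y=1-q$, the level set factors as $(p-y)(p^2+py+y^2-p-y)=0$. The separatrix is $y=\frac12\bigl(1-p+\sqrt{1+2p-3p^2}\bigr)$, and the area below it is $\frac{9+2\sqrt{3}\pi}{27}$. Hence $\mu_2=\frac{18-2\sqrt{3}\pi}{27}$, giving the stated value $1+\mu_2$.

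Second, for the converse, subtracting $c/2$ from every resource keeps the load-one costs $\delta-c/2$ positive only if you re-tune $\delta\in(c/2,1/2)$. The paper takes $c=1-\epsilon$ and $\delta=\frac12-\frac{\epsilon}{4}$. With that choice, your computation $1+\mu_w(w-1)/(1-c)\to\infty$ is exactly the paper's Step 6.
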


There are two interpretations of this result. The first is an econometric perspective, which assumes that the observable costs are accurate and exogenously given. Under this lens, Theorem~\ref{thm:ergodic_good} establishes that even when deploying the sophisticated tools of dynamical systems and measure theory to account for the exact regions of attraction of pure/strict Nash equilibria, there is no universal ratio-based efficiency guarantee; performance evaluation naturally reduces to case-by-case empiricism.

The second, more fundamental interpretation is that the Price of Anarchy (and its average-case variants) are 
%is neither dynamically nor game-theoretically well-posed. 
highly sensitive to strategically invariant cost translations.
Both static equilibria and the continuous orbits of game dynamics remain invariant to strategically equivalent transformations of the game. However, the social cost---and its corresponding performance ratio---does not. Consequently, multi-agent behavior is evaluated against  
an efficiency metric that is sensitive to global cost translations, even though these translations are strategically irrelevant to the learning agents and the resulting equilibria.

The performance ratio is  sensitive to the specific affine representation chosen from the strategic equivalence class.  
The convention of enforcing non-negative coefficients, while sufficient to bound this ratio, relies on a specific algebraic representation that does not generalize to all valid physical translations. 
Even when finite Average PoA bounds can be derived under these restricted representations, they possess no inherent game-theoretic or dynamical meaning. The metric evaluates the system's empirical limit sets based on sensitive coordinate choices, fundamentally decoupling theoretical efficiency guarantees from the topological and measure-theoretic reality of the game.

%\section*{Part II: The Combinatorial Shadow of Learning}

\section{The Combinatorial Shadow of Learning: Fast Convergence to Strictly Dominated Strategies}
\label{sec:CCE}

A central focus of modern algorithmic game theory is establishing fast, optimal regret-minimization guarantees for learning dynamics in games~\citep{Syrgkanis:2015:FCR:2969442.2969573,foster2016learning,daskalakis2021near,piliouras2022beyond,farina2022near,farina2024regret,anagnostides2022faster}. A common assumption  is that optimal algorithmic convergence rates can serve as a proxy for rational, efficient multi-agent behavior. We demonstrate that this reduction introduces critical information loss.

It is an established property of generic, unstructured normal-form games that Coarse Correlated Equilibria (CCE) can support strictly dominated strategies~\citep{viossat2013no}. Recognizing this vulnerability, recent literature has introduced specialized algorithmic interventions explicitly engineered to filter out unrationalizable states during the learning process~\citep{wang2023learning,wu2021multi}. However, the mathematical necessity of these bespoke interventions highlights
a fundamental limitation in the standard evaluative framework. While specific algorithms can be forcefully engineered to avoid dominated strategies, the fundamental target concept itself---no-regret learning evaluated through the lens of CCE---provides no inherent mathematical guarantee of economic rationality.

When the continuous trajectories of no-regret learning are projected onto the discrete simplex of empirical play via time-averaging, the resulting combinatorial shadow loses critical topological information. We demonstrate that the speed of convergence, even when constrained by strict informational boundaries, is mathematically insufficient to preclude severe game-theoretic non-rationalizable states, even within the highly structured conditions of smooth games.

\begin{theorem}[Fast No-Regret Convergence to Strictly Dominated Strategies in Smooth Games]
\label{thm:not_rationalizable}
    There exist smooth games (e.g., linear congestion games and valid utility games) where adversarially selected, strongly uncoupled no-regret algorithms converge at an optimal rate of $O(1/T)$ to a strong coarse correlated equilibrium, whilst simultaneously, on every period, all agents play strictly dominated strategies. 
\end{theorem}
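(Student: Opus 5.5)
The plan is to build explicit small instances of each game class, linear congestion and valid utility (basic utility system with modular $V$). The key idea is that a strategy can be strictly dominated pointwise (for every opponent profile) while the time-averaged empirical distribution still satisfies the CCE inequalities strictly. This works because the CCE deviation benchmark $\sum_{s_{-i}} c_i(s_i',s_{-i})\pi_i(s_{-i})$ uses only the opponents' marginal, not the correlation structure. So the first step is a game with the following features. Each agent $i$ has a ``good'' strategy $g_i$ that strictly dominates a ``bad'' strategy $b_i$. The correlated pattern of joint bad play, however, yields low realized cost. For example, the agents alternate deterministically among bad profiles that avoid collisions: if agent 1 uses bad resource $e_1$ at even times and $e_2$ at odd times, agent 2 does the opposite. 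Each fixed deviation, evaluated against the opponent's marginal, then collides half the time. I would make each agent's strategy set contain several bad strategies, namely routes sharing some costly element so that they are dominated by a cheaper route. The correlated alternation among them then beats every fixed unilateral strategy, including the dominating one.

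Concretely, for linear congestion I would take two agents and resources $e_1,e_2$ with $c(x)=x$, plus a cheap private resource. Agent $i$'s strategies would be $\{e_1,f\}$, $\{e_2,f\}$ and $\{e_1\}$, $\{e_2\}$, etc., tuned by affine costs with $a_e,b_e\ge 0$ so that the game stays $(5/3,1/3)$-smooth, as recalled in the Background. I would then verify two things. First, under the alternating anti-coordinated sequence each agent's realized average cost is strictly less than the cost of every fixed strategy against the marginal uniform over the opponent's choices. Second, the played strategies are strictly dominated for every opponent action. The inequalities are linear in a few parameters, so this amounts to checking a small LP feasibility; the strict slack then gives a strong CCE. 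The valid utility case runs in parallel: elements have values $v(e)$, and a strategy covering an element exclusively is dominated pointwise by a strategy covering the same element plus an extra private element. The anti-coordinated alternation of dominated sets wins against any fixed set evaluated on the marginal, since fixed sets collide half the time.

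The second step is the learning algorithm. I would define ``adversarially selected'' uncoupled algorithms via a hybrid rule. Each agent follows the deterministic schedule (parity of $t$, which requires no communication of payoffs, only a common clock) as long as its own observed cost vectors match the schedule's prediction. On any deviation it switches to a standard no-regret algorithm such as Hedge. Along the realized play, the regret at time $T$ equals $T$ times the strictly negative CCE gap plus an $O(1)$ parity boundary term. Regret is therefore $\le O(1)$, actually negative for large $T$, and the empirical distribution converges to the two-point CCE at rate $O(1/T)$. Against arbitrary adversaries the fallback guarantees sublinear regret, so these are legitimate no-regret algorithms, and they are strongly uncoupled because each agent uses only its own cost vectors.

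The main obstacle is the first step: finding parameters that simultaneously keep strict pointwise domination, the strict CCE inequalities, and membership in the smooth class (nonnegative affine coefficients, or a genuine submodular basic utility system). Domination pushes the bad strategies' costs up, while the CCE needs the correlation benefit to exceed that domination gap. I would first try the smallest version: an extra shared element of small cost $\delta$ attached to the good strategy's alternative, with collision penalty $1$, and solve for $\delta$ in a window such as $0<\delta<1/2$.
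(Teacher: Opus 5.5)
Your proposal is correct and follows essentially the same route as the paper, which uses the same two-agent linear congestion instance: resources $e_1,e_2$ with $c(x)=x$ plus a constant-cost add-on resource of cost $\epsilon\in(0,1/2)$, alternating between the two anti-coordinated dominated profiles, so that the realized cost $1+\epsilon$ beats the $3/2$ of the best fixed deviation. It adds the analogous modular basic-utility instance and the same grim-trigger scheme (follow the cyclic schedule while observed cost vectors match, otherwise switch permanently to a standard no-regret algorithm, bounding regret segment by segment); the paper's extra LP argument that such CCEs can be taken to be vertices of the CCE polytope is not needed for the statement.
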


%\textbf{Remark on Topological Robustness:} 
\begin{remark}
Theorem~\ref{thm:not_rationalizable} is not an isolated edge-case; it is topologically robust. By leveraging the slackness inherent to the defining inequalities of strong CCE, there exist open families of games exhibiting these properties. This fragility persists both in the native topology of structured games (e.g., perturbing the latency functions of a linear congestion game) and when embedding them into the space of general games, as is inevitable when inferring cost functions from empirical data.
\end{remark}

While the existence of non-rationalizable CCEs has been established in generic games~\citep{viossat2013no}, Theorem~\ref{thm:not_rationalizable} establishes that this problem fundamentally penetrates the highly structured environments of smooth games---the exact environments where CCEs are championed for their robust social welfare guarantees~\citep{anagnostides2022last,roughgarden2015intrinsic,Syrgkanis:2015:FCR:2969442.2969573,foster2016learning,daskalakis2021near}. 

The proof (detailed in Appendix~\ref{app:CCE}) relies on two structural facts that completely decouple optimal learning from rationality. First, we establish the geometric existence of \textit{extreme CCEs} (vertices of the feasible CCE polytope) in smooth games where all probability mass is allocated exclusively to strictly dominated strategies. Second, we demonstrate that optimal regret rates combined with strongly uncoupled learning~\citep{daskalakis2011near,hart2003uncoupled} cannot structurally filter out these problematic states. Because strongly uncoupled algorithms evaluate states purely via local, historical payoff sequences without structural awareness of the opponent's game matrix, they can be induced to perfectly mimic periodic orbits inducing non-rationalizable CCEs. 

Consequently, standard grey-box algorithmic constraints---such as optimal $O(1/T)$ convergence rates, uncoupledness, and bounded memory---are structurally insufficient to guarantee rational game-theoretic behavior. As these solution concepts are increasingly extended to evaluate complex, high-dimensional domains such as general-sum Markov and extensive-form games~\citep{anagnostides2022faster,daskalakis2023complexity,farina2020coarse,jin2021v,mao2023provably,song2021can}, recognizing the systemic danger of this strategic insensitivity becomes even more important. In conclusion, the issue is merely not that coarse equilibria allow strictly dominated outcomes, but that the standard reduction from learning dynamics to such equilibria erases essential structure: it ignores dynamical accessibility and can certify these outcomes as approximately optimal. As a result, the framework loses the ability to distinguish between stable and unstable, or rational and non-rationalizable behavior, revealing a fundamental misalignment between dynamics, equilibrium abstractions, and efficiency guarantees.

\section{Proximal Refinements and Strictly Dominated Strategies}
\label{sec:SCE-PCE}

Recognizing the permissive nature of Coarse Correlated Equilibria, recent literature has attempted to bridge the gap toward  Correlated Equilibria (CE) by introducing intermediate solution concepts rooted in continuous optimization. Chief among these are Semicoarse/Proximal Correlated Equilibria (SCE/PCE) and their generalization, Bregman Proximal Correlated Equilibria (BPCE)~\cite{cai2025proximal,ahunbay2025semicoarse}. These refinements demand that agents exhibit no regret against counterfactual strategy modifications generated by Bregman proximal operators. 

From the perspective of continuous online optimization, these frameworks are mathematically elegant: standard, uncoupled learning algorithms (such as Projected Gradient Descent or Mirror Descent) naturally minimize proximal regret at optimal $O(1/\sqrt{T})$ rates. However, when these continuous algorithmic bounds are evaluated as discrete game-theoretic equilibria---which rely on the empirical distribution of sampled pure actions---a fundamental structural mismatch emerges.

Let $\phi : \Delta(S_i) \to \mathbb{R}$ be a strongly convex distance-generating function, and $D_\phi(y\|x)$ be its associated Bregman divergence. For a weakly convex test function $f$, the Bregman proximal operator is defined as:
$$ \text{prox}_f^\phi(x) = \text{argmin}_{y \in \Delta(S_i)} \left\{ f(y) + D_\phi(y\|x) \right\} $$
A distribution $\sigma \in \Delta(S)$ constitutes a Bregman Proximal Correlated Equilibrium if, for all valid test functions $f$, no agent benefits from applying $\text{prox}_f^\phi$ to their sampled pure actions.

We establish that this continuous-to-discrete translation structurally permits convergence to non-rationalizable states. Because the proximal operator relies on continuous geometric projections, it introduces a ``linear shielding effect'' that allows the probability mass of optimal strategies to effectively shield strictly dominated strategies from yielding positive deviation gains.

\begin{theorem}[SCE/PCE Support Strictly Dominated Strategies]
\label{thm:linear_proximal}
For Euclidean distance-generating functions, the set of Proximal Correlated Equilibria (which coincides exactly with Semicoarse Correlated Equilibria) can assign strictly positive probability mass to strictly dominated strategies. 
\end{theorem}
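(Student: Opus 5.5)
The plan is to give an explicit two-player counterexample and verify the equilibrium inequalities directly. For Euclidean $\phi(x)=\tfrac12\|x\|^2$ we use the stated coincidence of PCE with SCE, which lets us restrict the test functions to linear ones $f(y)=\langle g,y\rangle$, $g\in\mathbb{R}^{S_i}$. For such $f$ the proximal operator is the Euclidean projection $\text{prox}^\phi_f(x)=\Pi_{\Delta}(x-g)$. On the sampled pure action $e_s$ the deviation therefore reads
\[
\rho_g(s)=\Pi_{\Delta}(e_s-g)\in\Delta(S_i).
\]
This is an action-dependent modification, but the single vector $g$ is shared across all actions. That coupling is the ``linear shielding effect'': a $g$ that sends a dominated action $d$ to its dominator necessarily also perturbs the images of the other actions. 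The equilibrium condition becomes: for every $g$,
\[
\sum_{s}\sigma(s)\bigl[c_i(s)-c_i(\rho_g(s_i),s_{-i})\bigr]\le 0 .
\]

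\textbf{Construction.} Player~1 gets three actions $\{a,b,d\}$, where $d$ is strictly dominated by $b$ by a uniform margin $\delta>0$. The opponent gets two actions $\{L,R\}$ and has trivial incentives, e.g.\ constant costs, so that only player~1's constraints bind. Costs are chosen so that $a$ is a strict best response to $L$ and $b$ is a strict best response to $R$, each by a large margin $M\gg\delta$. Take
\[
\sigma=(1-\varepsilon)\,\tfrac12\bigl[(a,L)+(b,R)\bigr]+\varepsilon\,(d,R).
\]
The intuition is that moving mass from $d$ toward $b$ forces the projection to shift mass away from $a$ in the $(a,L)$ cell, or away from $b$ in the $(b,R)$ cell. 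Either shift costs order $M$ times a displacement of the same size, which swamps the order-$\delta\varepsilon$ gain.

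\textbf{Verification.} The gain $G(g)$ is piecewise linear in $g$, because $\Pi_\Delta(e_s-g)$ is piecewise affine with finitely many regions determined by which coordinates are clipped to zero. The map $g\mapsto\rho_g$ is also invariant under adding a constant to $g$, so one may normalize $g_d=0$. The plan is then:
\begin{enumerate}
\item Enumerate the regions of the two-parameter space $(g_a,g_b)$ by the active sets of the three projections from $e_a$, $e_b$ and $e_d$.
\item On each region, write $G$ as a linear function.
\item Check $G\le 0$ at the region's vertices and along its recession directions. Unbounded $g$ saturates the projection at vertices, so there $G$ reduces to finitely many pure-swap-like checks.
\end{enumerate}
With $M/\delta$ large and $\varepsilon$ small, each regional inequality should hold because the coefficient of any displacement that helps the $d$-row is dominated by a $-M$ coefficient on an $a$- or $b$-row displacement. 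The opponent's inequalities are trivial. A final remark will note that strictness of the margins gives an open family of such games.

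\textbf{Main obstacle.} The hard step is the case analysis in the verification. One must find the right margins $M,\delta,\varepsilon$ such that no region admits a $g$ that moves $d$ toward $b$ while leaving both $e_a$ and $e_b$ fixed. For small $g$, though, $\Pi_\Delta(e_d-g)$ stays at $e_d$: no small interior move is possible from a vertex. The dangerous case is therefore large $g$, say $g_b$ very negative. Then $e_d$ is projected toward $e_b$, but so is $e_a$, and that is what the $M$ margin in the $(a,L)$ cell must pay for. I would first check this extreme direction and the direction $g_a=g_b\to-\infty$. If a region fails, the fallback is to spread the $d$-mass across both opponent columns, $(d,L)$ and $(d,R)$, so that every escape direction for $d$ is charged against a strict best response.
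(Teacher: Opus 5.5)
Your construction is correct. Up to relabeling it is the paper's own counterexample. For player~1 the paper's columns $A$ and $A^-$ are payoff-identical, so they merge into your column $R$, and its column $C$ is your $L$. With $b=A$, $d=A^-$, $a=C$, $M=1$, $\delta=\varepsilon=0.2$, the paper's $\sigma$ (masses $0.4,0.2,0.4$) is exactly yours. The only change in the example is that you neutralize player~2 with constant costs instead of using a symmetric game.

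The genuine difference is the certificate. The paper imports the extreme-ray description of the semicoarse deviation cone (cycle and subset deviations, Theorem~3.3 of \cite{ahunbay2025semicoarse}), checks the finitely many ray inequalities, and adds an LP dual certificate. You argue directly from $\rho_g(s)=\Pi_\Delta(e_s-g)$. The paper's route is a short arithmetic check that relies on the imported characterization. Yours is self-contained and, as shown below, yields an explicit open parameter region, which also gives your robustness remark for free.

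Two corrections to your plan.

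First, your diagnosis of the ``main obstacle'' is wrong. $\Pi_\Delta(e_d-g)=e_d$ holds only when $g_d\le\min_k g_k$: the normal cone at a vertex is a cone, not a neighborhood of $0$. For arbitrarily small $g$ with $g_b<g_d\le g_a$, row $d$ already moves mass $(g_d-g_b)/2$ onto $b$. Since $G$ is continuous, piecewise linear and $G(0)=0$, small $g$ is no safer than large $g$. Your region enumeration would still catch this, so it is a misdiagnosis rather than a hole.

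Second, you can skip the enumeration entirely. Take $c(a,L)=c(b,R)=0$, $c(b,L)=c(a,R)=M$, $c(d,\cdot)=c(b,\cdot)+\delta$, normalize $g_d=0$, and set $x=\Pi_\Delta(e_d-g)$, $y=\Pi_\Delta(e_a-g)$. Dropping nonpositive terms gives $G(g)\le\varepsilon[\delta x_b-(M-\delta)x_a]-\frac{1-\varepsilon}{2}My_b$. Use $\Pi_\Delta(z)_j=(z_j-\tau)_+$; the threshold for $x$ is $\ge0$ because $x_d\le1$. A short case check then gives $x_b\le x_a+2y_b$:
\begin{itemize}
\item $x_b=0$ if $g_b\ge 0$;
\item $x_a\ge x_b$ if $g_a\le g_b$;
\item otherwise $x_b-x_a\le g_a-g_b$, which equals $2y_b$ when $y$ is supported on $\{a,b\}$, and in the remaining active-set cases $2y_b\ge x_b$ directly.
\end{itemize}
Hence $G\le\varepsilon(2\delta-M)x_a+\bigl(2\varepsilon\delta-\frac{1-\varepsilon}{2}M\bigr)y_b\le 0$ whenever $M\ge 2\delta$ and $(1-\varepsilon)M\ge 4\varepsilon\delta$. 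So neither large-margin tuning nor your fallback is needed.
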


The formal construction and primal-dual verification of Theorem~\ref{thm:linear_proximal} are provided in Appendix~\ref{app:SCE}. By explicitly constructing a smooth game and a valid SCCE distribution, we demonstrate that the inequalities defining proximal regret are algebraically satisfied even when agents place significant probability mass on unconditionally inferior actions. 

This issue immediately carries over to the broader generalizations of the framework. Because the standard Euclidean distance ($D_\phi(y\|x) = \frac{1}{2}\|y-x\|^2$) is the canonical Bregman divergence, standard Proximal Correlated Equilibria are a strict subset of Bregman Proximal Correlated Equilibria. Consequently, the generalized Bregman PCE framework inherits these problematic equilibria. 

\begin{corollary}
The generalized class of Bregman Proximal Correlated Equilibria does not exclude strictly dominated strategies. 
\end{corollary}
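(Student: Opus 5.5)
The corollary follows by set inclusion from Theorem~\ref{thm:linear_proximal}. I would first make the inclusion explicit. Take the Euclidean distance-generating function $\phi(x) = \frac{1}{2}\|x\|^2$ on $\Delta(S_i)$. It is strongly convex, and its Bregman divergence is $D_\phi(y\|x) = \frac{1}{2}\|y-x\|^2$. Substituting this into the Bregman proximal operator $\text{prox}_f^\phi$ recovers exactly the Euclidean proximal operator, for every weakly convex test function $f$. So the family of deviations that defines Bregman PCE with this choice of $\phi$ coincides with the family that defines standard PCE, and the two sets of equilibrium distributions are identical. Hence the PCE set is one instance of the generalized class: it is the Bregman PCE set obtained by fixing the Euclidean $\phi$.

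Next I invoke Theorem~\ref{thm:linear_proximal}. It supplies a smooth game and a distribution $\sigma$ that is a Proximal Correlated Equilibria (equivalently an SCE) and that assigns strictly positive probability to a strictly dominated strategy. By the previous step, $\sigma$ is a Bregman PCE of the same game for the Euclidean $\phi$. So the generalized class contains an equilibrium supported partly on a strictly dominated strategy, which is exactly the claim that it does not exclude such strategies.

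The one point needing care is the quantifier structure. The corollary should be read as: the generalized framework, which allows any admissible $\phi$, provides no guarantee of excluding dominated strategies. A single admissible $\phi$ with a counterexample suffices for that. It does not assert a counterexample for every $\phi$, such as entropic divergences; showing that would require a new construction, and I would not claim it. I would also check that the Euclidean $\phi$ meets whatever regularity the BPCE definition of~\cite{cai2025proximal} imposes (differentiability and strong convexity on the simplex), which it clearly does. With that confirmed, the argument is routine and has no real obstacle.
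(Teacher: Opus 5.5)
Your proposal is correct and follows the paper's own argument. The paper also gets the corollary purely by inclusion: the Euclidean divergence $D_\phi(y\|x)=\frac{1}{2}\|y-x\|^2$ is a particular Bregman divergence, so the distribution of Theorem~\ref{thm:linear_proximal}, which puts mass on the dominated action $A^-$, is automatically a Bregman PCE. Your phrasing — PCE is exactly the Bregman PCE set for the Euclidean $\phi$, and one admissible $\phi$ suffices for the existential claim — is a slightly more precise version of the paper's remark that PCE is a ``strict subset'' of the Bregman class.
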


\FloatBarrier

\section{Beyond the Combinatorial Shadow: The Topological Reality of No Swap Regret}
\label{sec:CE}

\begin{figure}[tb]
    \centering
    \begin{subfigure}[t]{0.48\textwidth}
        \centering
        \includegraphics[width=0.85\linewidth]{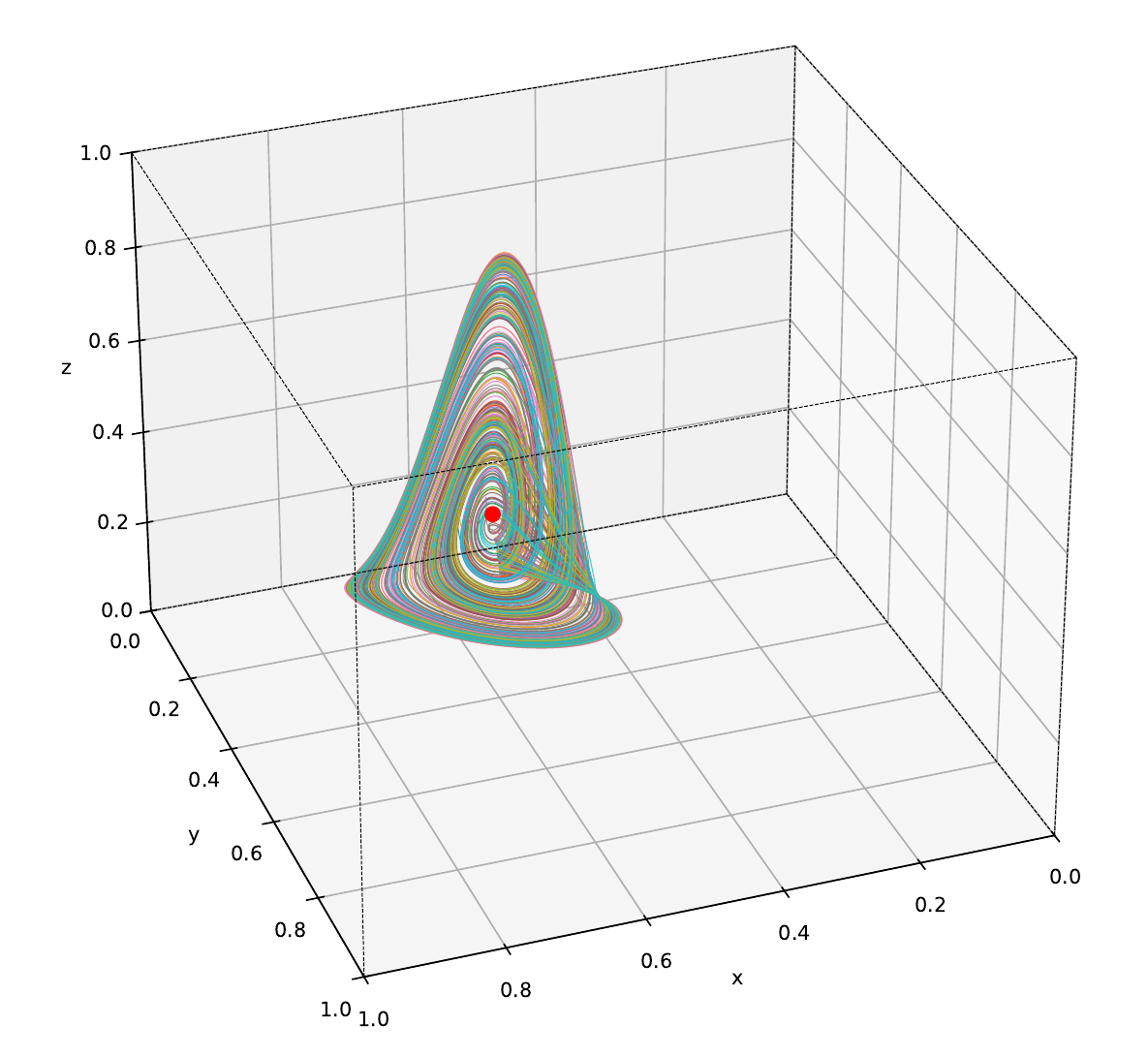}
        \caption{Chaotic limit cycle of replicator dynamics in a normal-form polymatrix game. The red dot is the Nash equilibrium. Colors denote distinct trajectories.}
        \label{fig:chaos}
    \end{subfigure}
    \hfill
    \begin{subfigure}[t]{0.48\textwidth}
        \centering
        \includegraphics[width=\linewidth]{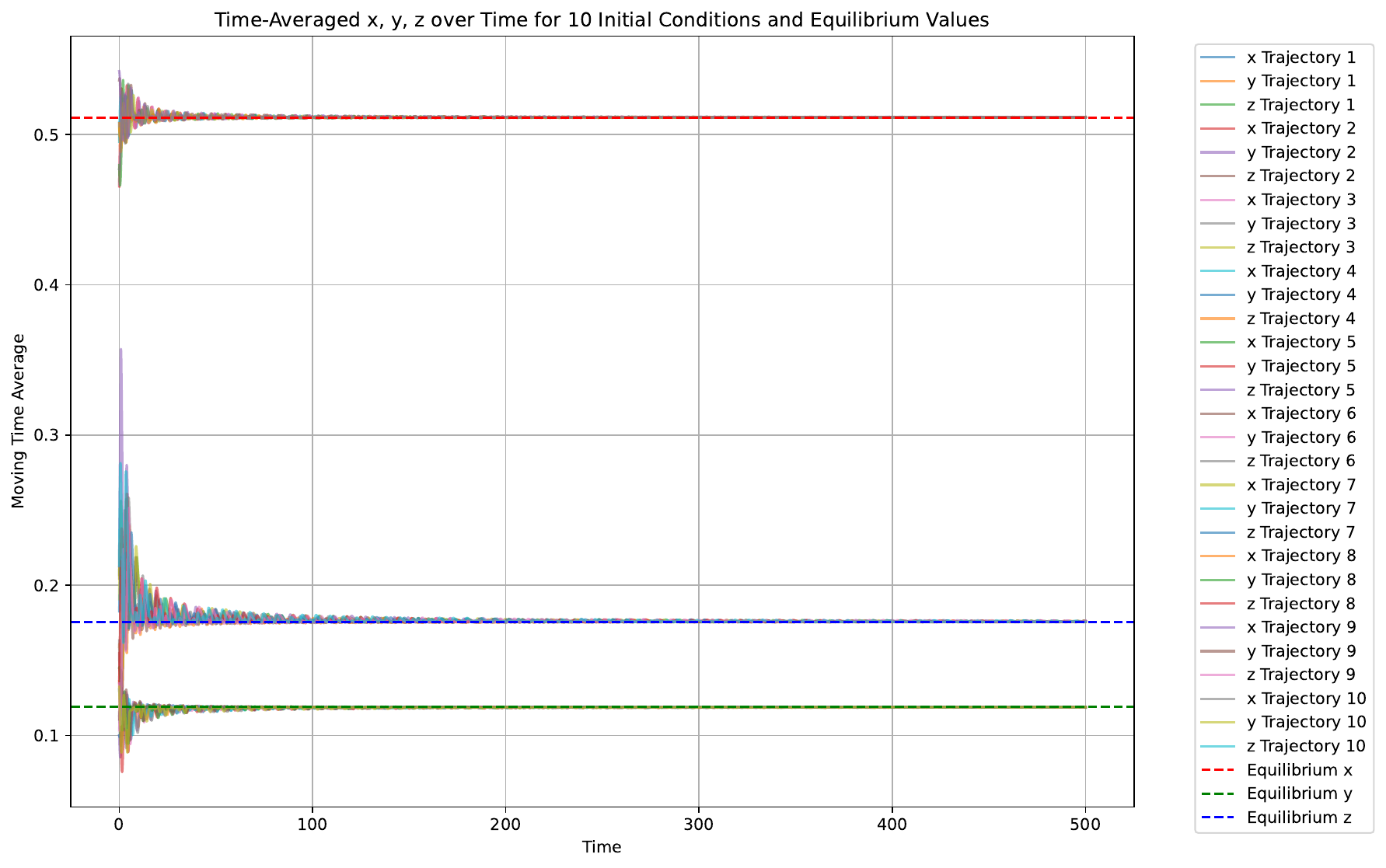}
        \caption{For all interior symmetric initial conditions the time-average of the strategies of all agents converges precisely to the interior Nash equilibrium.}
        \label{fig:NE}
    \end{subfigure}

    \vspace{0.5cm}

    \begin{subfigure}[t]{0.48\textwidth}
        \centering
        \includegraphics[width=\linewidth]{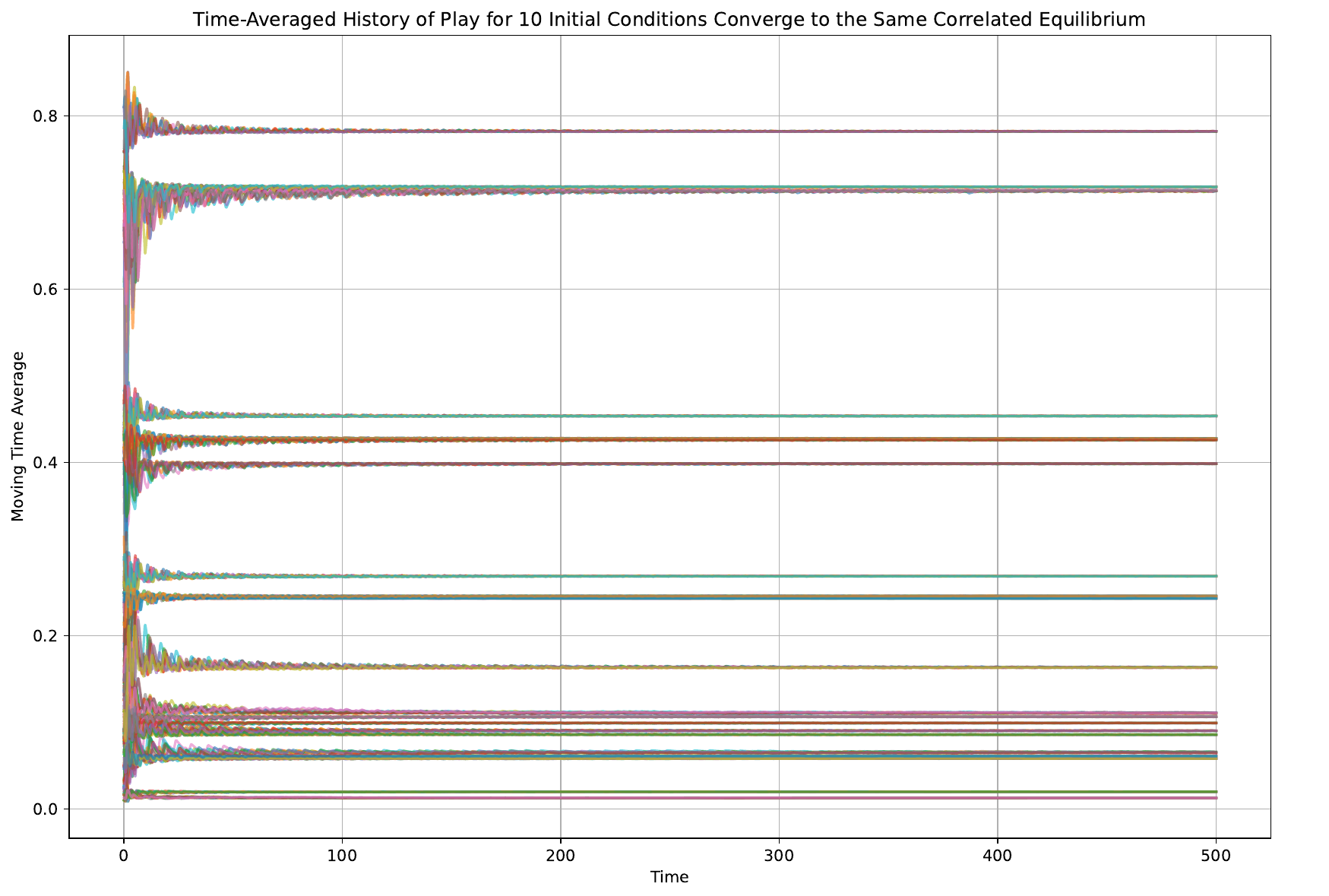}
        \caption{The time-average of the empirical pairwise probabilities converges pointwise to their value at a correlated equilibrium (CE).}
        \label{fig:CE}
    \end{subfigure}
    \hfill
    \begin{subfigure}[t]{0.48\textwidth}
        \centering
        \includegraphics[width=\linewidth]{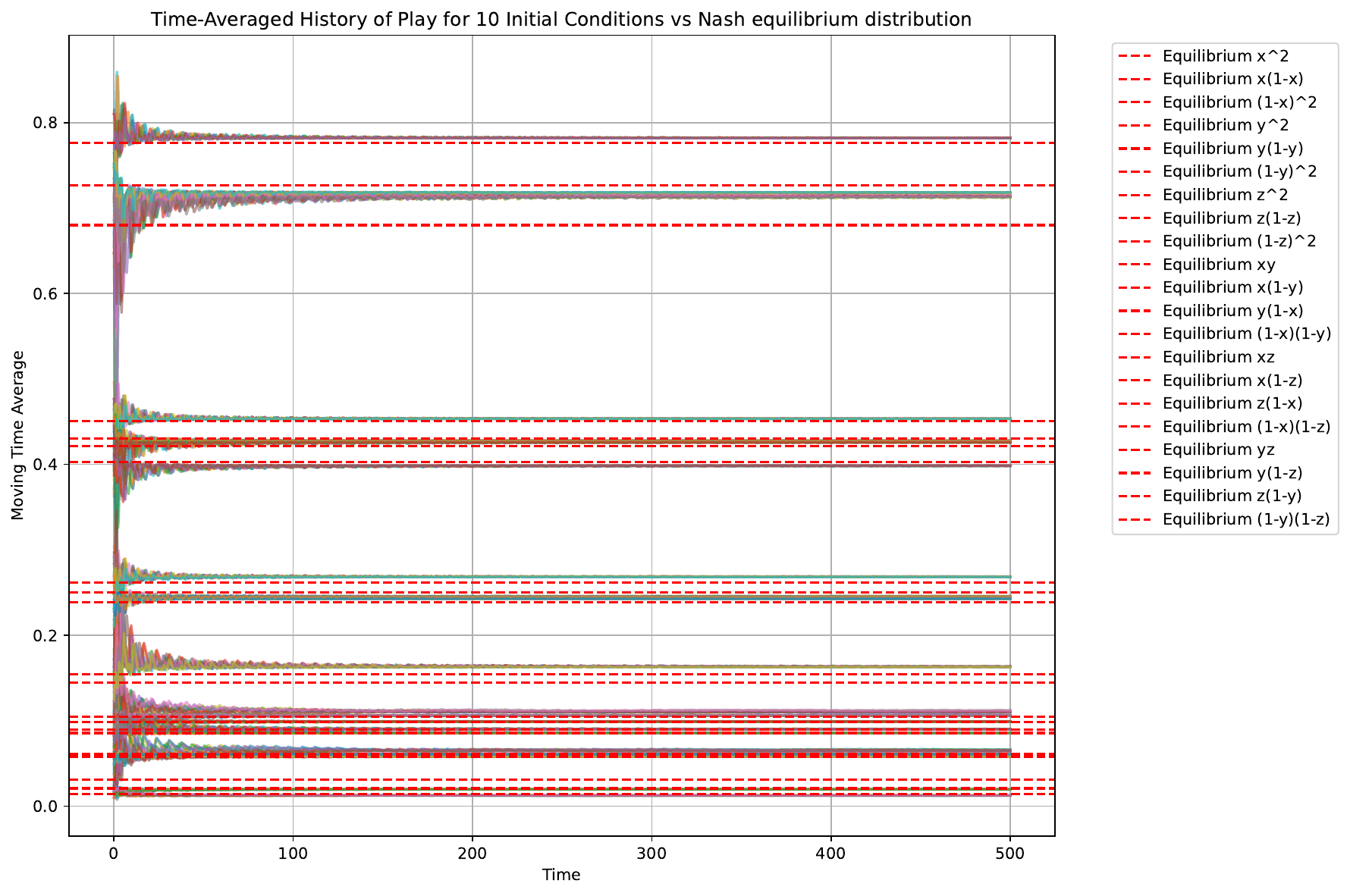}
        \caption{The time-averages computed in (c) strictly do not converge to their independent Nash equilibrium values (encoded by the dashed red lines).}
        \label{fig:notNE}
    \end{subfigure}

    \caption{\textbf{The Combinatorial Shadow of Chaos.} Replicator dynamics in normal-form games simultaneously exhibit macroscopic chaos, optimal $O(1/T)$ convergence to CE, and optimal state-average convergence to NE.}
    \label{fig:2x2grid}
\end{figure}

\begin{figure}[t]
    \centering
    \includegraphics[width=0.65\textwidth]{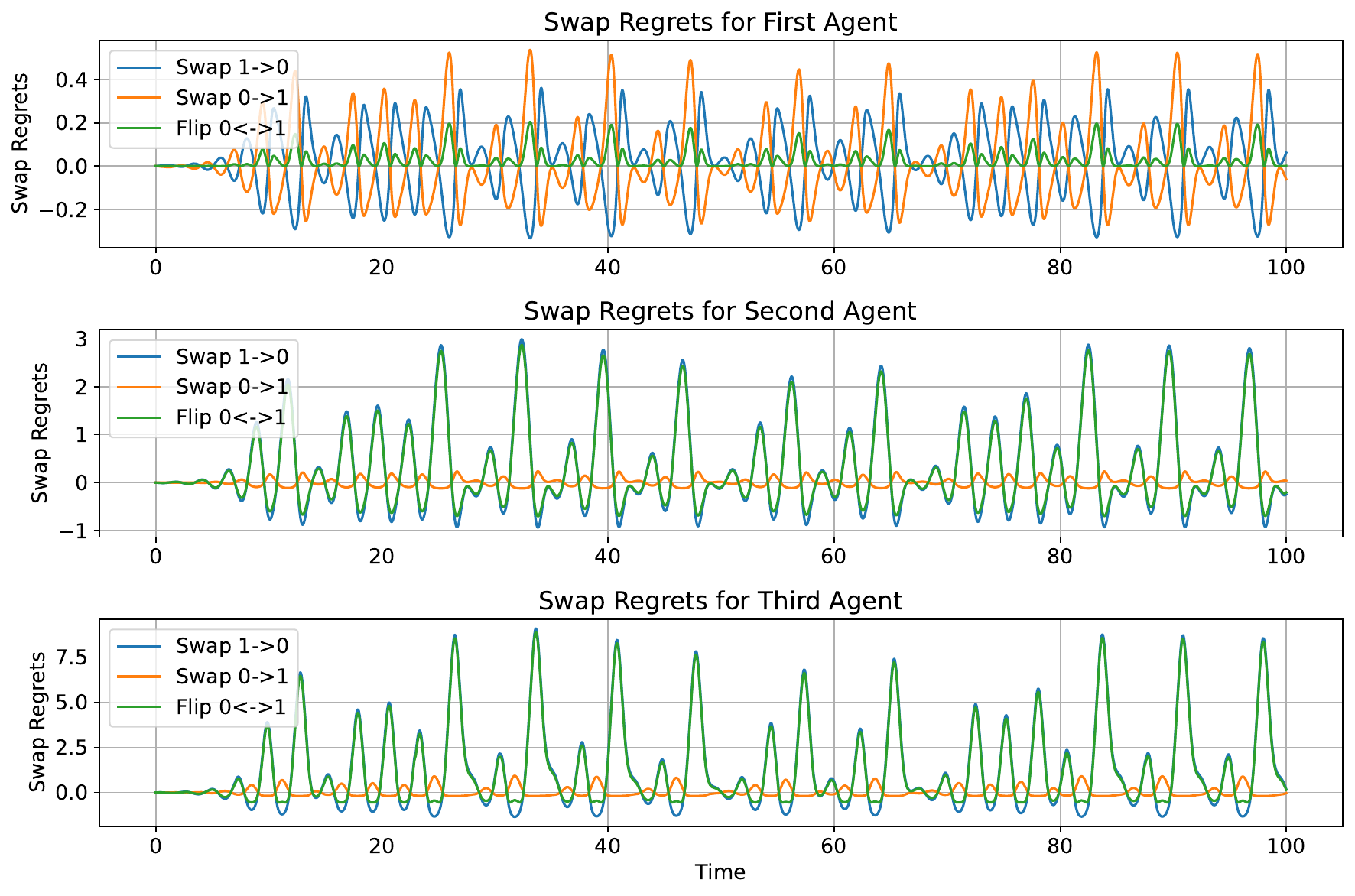}
    \caption{\textbf{Bounded Swap Regret amidst Turbulence.} The instantaneous swap regret of all agents remains strictly bounded, ensuring the time-average of the replicator dynamics converges provably at an optimal $O(1/T)$ rate to a chaotic correlated equilibrium.}
    \label{fig:swap_regret}
\end{figure}

Having established that Coarse Correlated Equilibria and their continuous proximal refinements (SCE/PCE) do not exclude strictly dominated strategies, a natural theoretical recourse is to demand swap-regret minimization. Constrained by the PPAD-hardness of computing exact Nash equilibria~\cite{daskalakis2006complexity,chen2009settling}, swap regret and its sequential extensions have emerged as standard benchmarks for rationalizable learning. 

Recent work has expanded the application of swap regret far beyond classical normal-form Correlated Equilibria (CE).\footnote{Beyond its traditional algorithmic justification as the limit of swap-regret minimization, CE has recently been established as the unique solution concept satisfying a fundamental set of normative logical axioms, including consistency under payoff uncertainty and invariance to strategy cloning~\cite{brandl2026axioms}.} It provides the theoretical foundation for online calibrated forecasting~\cite{foster1997calibrated,fishelson2025full}, and has been extended to imperfect-information, Bayesian, and general convex games via Extensive-Form, Linear, and Profile swap regret~\cite{celli2020no,farina2022near,daskalakis2024efficient,arunachaleswaran2025swap}. Concurrently, novel algorithmic reductions have circumvented the classical polynomial dependence on the action space; however, these theoretical improvements mathematically necessitate an exponential dependence on the precision parameter, requiring $\exp(\Omega(1/\epsilon))$ rounds to converge~\cite{dagan2024external,peng2024fast}. 

Beyond assigning  zero probability mass to strictly dominated strategies, swap-regret minimization imposes strict utility-theoretic constraints. It provably guarantees non-manipulability against strategic adversaries~\cite{deng2019strategizing,mansour2022strategizing}, ensures the Pareto-optimality of the learning algorithm's asymptotic menu~\cite{arunachaleswaran2025swap}, and forces frequent-iterate convergence to Nash equilibria in symmetric zero-sum settings~\cite{leme2024convergence}. Given these theoretical guarantees, it is a standard premise that optimal $O(1/T)$ convergence to the CE polytope imposes sufficient mathematical structure to ensure predictable, stable, and rationalizable trajectories.

We demonstrate that this assumption does not hold dynamically. Evaluating learning algorithms strictly through the discrete combinatorial shadow of swap regret projects away the continuous-time dimensions necessary to capture complex, non-equilibrating dynamics. By directly analyzing the multi-agent learning trajectories, we prove that even in minimal normal-form games, optimal convergence to CE can perfectly coincide with complex chaotic limit sets.

\begin{theorem}[Fast No-Swap-Regret Convergence of Replicator Dynamics to Chaotic Correlated Equilibria]\label{thm:chaos}
    There exists a one-parameter family of normal-form games with payoff symmetries such that continuous-time replicator dynamics, under symmetric interior initial conditions, exhibits chaos while simultaneously:
    \begin{enumerate}
        \item[(i)] Minimizing swap regret at an optimal rate of $O(1/T)$ (fast convergence to CE).
        \item[(ii)] Ensuring the time-average strategies of all agents converge precisely to an interior Nash equilibrium.
    \end{enumerate}
    The games in question are polymatrix games with $6$ agents, each possessing exactly two strategies.
\end{theorem}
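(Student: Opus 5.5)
The construction I would use is a symmetric polymatrix game that reduces, on an invariant subspace, to a known chaotic Hamiltonian or volume-preserving system. Six agents with two strategies each give a six-dimensional state space $x=(x_1,\dots,x_6)\in(0,1)^6$, where $x_i$ is the probability of agent $i$'s first strategy. I would arrange the agents in pairs, $\{1,2\},\{3,4\},\{5,6\}$, and impose payoff symmetries (an agent-permutation symmetry) so that the symmetric subspace, e.g. $x_1=x_2$, $x_3=x_4$, $x_5=x_6$, or a twisted version $x_{2k}=1-x_{2k-1}$, is invariant under replicator dynamics. I would take the bimatrix interactions to be zero-sum, or more generally rescaled zero-sum, so that replicator dynamics is conservative. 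In the cumulative-payoff coordinates $y_i=\ln\frac{x_i}{1-x_i}$, the dynamics then become Hamiltonian, as in the known conservation results for replicator dynamics in network zero-sum games (e.g.~\cite{mertikopoulos2018cycles}).

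The key steps, in order, are as follows.

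\textbf{(a)} Write the replicator vector field for the polymatrix game with a parameter $\epsilon$ and check invariance of the symmetric subspace. Pick the matrices so that the game restricted there has a unique interior NE $x^*$ (say at $1/2$).

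\textbf{(b)} Establish the conserved quantity. This is $H(x)=\sum_i w_i\, \mathrm{KL}(x_i^*\|x_i)$ for suitable positive weights, or equivalently a Hamiltonian in $y$-coordinates. Orbits therefore stay in compact level sets inside the interior, bounded away from the boundary.

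\textbf{(c)} Prove chaos on a level set, via a Hamiltonian reduction. For $\epsilon=0$ I would make the system integrable: a decoupled product of planar Matching-Pennies-type centers. For $\epsilon\neq 0$ the coupling produces a transversal homoclinic orbit or horseshoe, detected by a Melnikov integral. Alternatively, I would conjugate the restricted system to a known chaotic example, such as the Sato–Akiyama–Farmer or Hénon–Heiles-type dynamics, and invoke its chaos. I expect this step to be the main obstacle. Rigorous chaos proofs usually need either an explicit Melnikov computation that is nonzero, or a computer-assisted proof. My first attempt would be the Melnikov route on a perturbation of an integrable system that has a hyperbolic periodic orbit on the boundary of a level set.

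\textbf{(d)} Prove the regret claims, which are softer. Under replicator dynamics, each agent's swap regret for two strategies is controlled by the regrets of the two swap deviations. Each of these equals $\int_0^T x_i(t)\,[\text{payoff difference}]\,dt$ (and the analogous term with $1-x_i$). Using $\frac{d}{dt}\ln x_i = u_i(s_i)-\bar u_i$, each such integral can be expressed, up to the conditioning weight, as a difference of $\ln x_i$ and $\ln(1-x_i)$ increments plus a correction. Because the orbit stays in a compact interior set by (b), these logarithms are bounded, which gives $O(1)$ cumulative swap regret and hence $O(1/T)$ average swap regret. If the conditioning factor $x_i$ breaks the exact telescoping, I would exploit the symmetry instead. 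With two strategies, a swap deviation is either the identity, the full flip, or a constant map. Constant maps are external regret, which is $O(1)$ by the standard argument. Flip regret equals twice the integral of $(x_i-\tfrac12)$ times the payoff difference. I would show this is an exact derivative using the paired symmetric structure, via a suitable function of the $y_i$ that is bounded on the level set.

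\textbf{(e)} Time averages converge to the NE. Integrating $\dot y_i$ over $[0,T]$ and dividing by $T$ gives $\frac1T\int_0^T(\text{linear payoff difference in } x_{-i})\,dt\to0$, because $y_i$ is bounded. Polymatrix payoffs are linear in opponents' mixed strategies, and the interior NE is the unique solution of the resulting nonsingular linear system. Hence the time averages $\frac1T\int_0^T x\,dt$ converge to $x^*$, at rate $O(1/T)$.
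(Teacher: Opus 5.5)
\textbf{The gap is step (c).} As designed, it is not just hard: it is impossible for the initial conditions the theorem refers to.

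Symmetric initial conditions lie on your three-dimensional invariant subspace ($x_1=x_2$, $x_3=x_4$, $x_5=x_6$, or the twisted version). Your first integral $H=\sum_i w_i D_{\mathrm{KL}}(x_i^*\|x_i)$ is still conserved there. In logit coordinates $y\in\mathbb{R}^3$, its restriction is a positive combination of functions $\ln(1+e^{y_k})-\theta_k y_k$ with $\theta_k\in(0,1)$, hence strictly convex and coercive. So every symmetric orbit lies on a level set that is a point or a smooth $2$-sphere. A flow on $S^2$ has zero topological entropy, and by Poincar\'e--Bendixson its only recurrent orbits are fixed points and periodic orbits. Hence replicator dynamics from symmetric initial conditions in any (rescaled) zero-sum game of your form cannot be chaotic.

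Off the symmetric subspace, chaos in the three-degree-of-freedom Hamiltonian system is not ruled out, but you give no way to prove it:
\begin{itemize}
\item A product of planar Matching-Pennies centers has only elliptic tori in the interior. There is no separatrix or hyperbolic periodic orbit for a Melnikov integral to act on; the only hyperbolic structure is the network of pure profiles on the boundary of the cube, which no level set of $H$ reaches.
\item Sato--Akiyama--Farmer chaos is numerical evidence for a three-strategy game.
\item H\'enon--Heiles is a polynomial Hamiltonian in canonical coordinates, and you indicate no conjugacy to a two-strategy replicator field.
\end{itemize}

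\textbf{The missing idea} is that the three-dimensional symmetric manifold must carry a non-conservative flow, which forces you away from zero-sum interactions. Write $u_{i,1}-u_{i,2}=c_i+\sum_j a_{ij}x_j$. Zero-sum forces $a_{21}=-a_{12}$, while your pair swap forces $a_{21}=a_{12}$, so paired agents do not even interact on your subspace.

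The paper instead imports rigorous chaos from the Peixe--Rodrigues family \cite{peixe2022persistent}. This is a $3$-player, $2$-strategy evolutionary polymatrix replicator with self-interaction matrices $P_{ii}$, in which a Shilnikov homoclinic cycle to a saddle-focus produces H\'enon-type strange attractors. Because the $x_i^\top P_{ii}x_i$ terms break multilinearity, the paper clones each player into $x_i$ and $y_i$ and lets $X$ play only against $Y$ through $P_{ij}$. Now $P_{ii}$ is a genuine edge between $x_i$ and $y_i$, used in both directions: symmetric coupling, exactly what zero-sum forbids. The manifold $X=Y$ is invariant, and the restricted field is exactly the Peixe--Rodrigues field. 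The six-agent flow is still volume-preserving in logit coordinates, but it has no first integral, so a strange attractor can live on the manifold.

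\textbf{Steps (d) and (e)} agree with the paper. For two strategies the flip regret is exactly the sum of the two external regrets (the increment of $\ln[x_i(1-x_i)]$). Each is bounded by $D_{\mathrm{KL}}(e_j\|p(0))$ for an arbitrary payoff stream, so neither symmetry nor boundedness of the orbit is needed there. Step (e) does need the orbit bounded away from the boundary, which the paper takes from the attractor being interior.
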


The implications of this result are clarified by contrasting it with canonical $3\times3$ zero-sum games, such as Rock-Paper-Scissors. In standard $3$-strategy games, (even in symmetric RPS games under symmetric initializations) replicator dynamics minimize external regret but provably fail to minimize swap regret, thereby converging only to the CCE polytope~\cite{leme2024convergence}. However, because our construction utilizes exactly two strategies per agent, testing against fixed deviations is sufficient to check against all swap deviations. In this specific environment, replicator dynamics inherently functions as an optimal, no-swap-regret algorithm. 
When combined with provable time-average convergence of agents' marginal distributions to NE, it satisfies arguably the most stringent game-theoretic desiderata (outside pointwise convergence to NE) while simultaneously sustaining chaotic trajectories.

The emergence of this chaotic limit set relies on a bipartite mirroring technique that translates non-concave evolutionary game dynamics into multilinear normal-form optimization. By coupling a 3-agent chaotic polymatrix replicator~\cite{peixe2022persistent} with an exact clone population, the resulting 6-agent system preserves the chaotic vector field along a symmetric invariant manifold.

Crucially, standard topological constraints dictate that generic normal-form games cannot possess attracting limit sets strictly within the interior of the state space under continuous-time no-regret learning~\cite{omidshafiei2019alpha,flokas2020no}. Consequently, this chaotic regime mathematically manifests not as an open attractor, but as an invariant limit set strictly bound to the symmetric initialization manifold. 

From a measure-theoretic perspective, under a prior absolutely continuous with respect to the Lebesgue measure (e.g., a uniform prior over the entire state space), this exact chaotic limit set constitutes a measure-zero event. However, this invariant set acts as a topological mechanism for disequilibrium. While the strange attractor exhibits strictly positive internal Lyapunov exponents (verified via perturbation analysis in Appendix E.1) generating chaos within the manifold, the invariant manifold itself is also transversely unstable. As demonstrated numerically in Appendix E.2, introducing microscopic asymmetric noise to break the initial synchronization does not cause the system to relax to a static Nash equilibrium. Instead, the transverse instability exponentially amplifies the asymmetric perturbation, permanently breaking the mirror symmetry. Depending on the exact microscopic noise vector, arbitrarily close initial conditions are repelled toward entirely distinct, lower-dimensional limit cycles residing on different boundary faces of the state space. 

This extreme sensitivity to initial conditions ensures that even when the measure-zero chaos is transient, the ultimate macroscopic support of the agents' strategies remains fundamentally unpredictable. Whether the system remains trapped in the chaotic manifold under symmetric priors, or is ejected into distinct boundary cycles under generic perturbations, the game-theoretic conclusion is identical. The projection of these continuous orbits onto the discrete CE simplex systematically obscures the underlying dynamics. A proof of fast convergence to Correlated Equilibria guarantees neither behavioral predictability nor dynamic equilibration.

\section{Phase Transitions in Non-Atomic Games: The Dynamical Reality of Large Populations}
\label{sec:nonatomic_dynamics}

In the preceding sections, we observed how the combinatorial complexity of finite normal-form games inherently complicates the convergence of independent learning dynamics. One might naturally look to the continuous domain of non-atomic congestion games as a refuge of algorithmic stability. Indeed, this setting typically represents arguably the ideal of game-theoretic and algorithmic optimality. In non-atomic routing, Wardrop equilibria correspond to the global minima of a convex potential function, yielding effectively unique macroscopic flows \cite{beckmann1956studies,Nisan:2007:AGT:1296179}. The celebrated Price of Anarchy (PoA) results, based on tight two-link Pigou networks, provide worst-case inefficiency bounds of $\Theta(p/\ln p)$ for polynomial latency functions of degree $p$ \cite{roughgarden2002bad,roughgarden2004bounding}. These guarantees are significantly tighter than their atomic counterparts, a favorable property that atomic games are known to formally recover in the limit as the population size increases \cite{Feldman:2016:PAL:2897518.2897580}. Furthermore, unlike in atomic games where no-regret algorithms can be consistent with strictly dominated strategies, classic results establish that no-regret learning in non-atomic routing guarantees the day-to-day empirical flow converges to an $\epsilon$-equilibrium flow~\cite{blum2006routing}. 

Together, these results form a powerful macroscopic framework. However, a rigorous examination of the physical timescale of these guarantees reveals a fundamental tension in the day-to-day dynamical reality of the system. To achieve a physically meaningful relative error (e.g., bounding regret by a small fraction $\delta$ of the optimal equilibrium cost), no-regret algorithms must aggressively throttle their learning rate to account for the maximum possible cost in the network. For a network of $m$ paths with degree $p$ latencies, achieving this relative bound requires an effective step-size scaling of $\mathcal{O}(1/m^p)$, which e.g., given the canonical decreasing step-sizes of $\theta(1/\sqrt{t})$, in turn forces the convergence horizon to scale exponentially as $\Omega(m^{2p})$. For standard responsive learning rates, the system operates in a vast meta-stable regime long before asymptotic bounds become binding.

Recent important contributions have illuminated this meta-stable regime, demonstrating the emergence of Li-Yorke chaos and qualitative inefficiencies in non-atomic congestion games under Multiplicative Weights Update (MWU) and Follow-the-Regularized-Leader (FTRL) with steep regularizers \cite{CFMP2019,bielawski2021follow,bielawski2025heterogeneity}. Building upon this work, we provide a generalized and exact quantification of this inefficiency for polynomial latencies, contrasting the dynamical reality directly against the static PoA guarantees.

\begin{theorem}[Exponential Degradation of the Global Attractor]
\label{thm:2p_degradation}
Consider a symmetric non-atomic congestion game with two parallel paths and identical monomial latency functions $c(x) = x^p$ for $p \ge 1$. Under MWU with a fixed effective step-size parameter $a > 0$\footnote{The parameter $a$ formally couples the MWU learning rate $\epsilon$, the population size $N$, and the polynomial degree $p$. It is defined as $a = N^p \ln\left(\frac{1}{1-\epsilon}\right) \approx \epsilon N^p$. This relationship explicitly demonstrates that to maintain a stable effective step-size $a$, the raw learning rate $\epsilon$ must be aggressively throttled inversely to the maximum network cost $\mathcal{O}(N^p)$.}:
\begin{enumerate}
    \item The unique Wardrop equilibrium $x^* = 1/2$ is locally asymptotically stable if $a < \frac{2^{p+1}}{p}$.
    \item If $a > \frac{2^{p+1}}{p}$, the equilibrium strictly destabilizes and the interior system is captured by a global, period-2 attracting orbit $\{\sigma_a, 1-\sigma_a\}$. Accounting for the population size $N$, the empirical time-averaged social cost of this chaotic orbit is exactly $SC_{emp}(a) = N^{p+1}[\sigma_a^{p+1} + (1-\sigma_a)^{p+1}]$, which strictly exceeds the optimal social cost of $N^{p+1}(1/2)^p$.
    \item As the effective load $a \to \infty$ ($\sigma_a \to 0$), the ratio of this empirical time-averaged social cost to the optimal social cost converges exactly to $2^p$.\ian{why do we interpret $a$ as both step size and effective load? can you add a sentence explaining the equivalence?}\georgios{I renamed it effective step-size to be consistent with above. Footnote 3 explains how the effective step-size could both the step-size of MWU as well as the total load of the system. Does this make sense?}
\end{enumerate}
\end{theorem}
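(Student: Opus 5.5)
We reduce the game to a one-dimensional map. Let $x\in[0,1]$ be the fraction of the population on path 1, so the path costs are $(Nx)^p$ and $(N(1-x))^p$. MWU with rate $\epsilon$ updates $x$ by multiplying each path's weight by $(1-\epsilon)^{\text{cost}}$. Substituting $a=N^p\ln\frac{1}{1-\epsilon}$ gives
\[
f_a(x)=\frac{x}{x+(1-x)\exp\bigl(a\,g(x)\bigr)},\qquad g(x)=x^p-(1-x)^p .
\]
The key structural fact is the symmetry $f_a(1-x)=1-f_a(x)$. It implies that $x^*=1/2$ is the unique interior fixed point, and that period-2 orbits of the form $\{\sigma,1-\sigma\}$ are exactly the solutions of $f_a(\sigma)=1-\sigma$.

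\emph{Step 1 (linear stability).} A direct computation at the fixed point gives $f_a'(x)|_{x=1/2}=1-x(1-x)\,a\,g'(x)|_{x=1/2}$. Since $g'(1/2)=p\,2^{2-p}$, this is $1-a p 2^{-p}$. Hence $|f_a'(1/2)|<1$ if and only if $a<2^{p+1}/p$, which proves item 1. At $a=2^{p+1}/p$ the derivative passes through $-1$, the signature of a period-doubling (flip) bifurcation.

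\emph{Step 2 (the 2-cycle).} Solve $f_a(\sigma)=1-\sigma$ for $\sigma\in(0,1/2)$. Simplifying gives $e^{a g(\sigma)}=\sigma^2/(1-\sigma)^2$, that is,
\[
a=h(\sigma):=\frac{2\ln\frac{1-\sigma}{\sigma}}{(1-\sigma)^p-\sigma^p}.
\]
As $\sigma\to 0^+$ we have $h(\sigma)\to\infty$. Expanding in $u=1/2-\sigma$ gives $h\to 8u/(p2^{2-p}u)=2^{p+1}/p$ as $\sigma\to1/2$. I would next prove that $h$ is strictly decreasing on $(0,1/2)$, for example by writing it as a ratio of odd functions of $u$ and comparing their Taylor coefficients, or via a log-derivative argument. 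This yields, for every $a>2^{p+1}/p$, a unique $\sigma_a\in(0,1/2)$, with $\sigma_a$ decreasing to $0$ as $a\to\infty$.

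\emph{Step 3 (social cost).} Both points of the orbit have the same social cost by symmetry:
\[
N^{p+1}\bigl[\sigma_a^{p+1}+(1-\sigma_a)^{p+1}\bigr].
\]
The time average therefore equals this value. The optimum is attained at $x=1/2$ and equals $N^{p+1}(1/2)^p$. Strict convexity of $t\mapsto t^{p+1}$ gives the strict inequality in item 2. Letting $\sigma_a\to0$ sends the ratio to $1/(1/2)^p=2^p$, which is item 3.

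\emph{Main obstacle (global attraction).} The hardest part of item 2 is showing that the orbit $\{\sigma_a,1-\sigma_a\}$ attracts every interior initial condition, not merely that it is locally stable. My plan is to exploit the symmetry by introducing the conjugate map $\tilde f(x)=1-f_a(x)$. Fixed points of $\tilde f$ correspond to the 2-cycle, and $f_a^2=\tilde f^2$. I would then show that $\tilde f$ has exactly the interior fixed points $1/2$ (now repelling), $\sigma_a$ and $1-\sigma_a$. Next I would establish that $\tilde f$ has negative Schwarzian derivative, or alternatively that it is unimodal with its critical-point orbit attracted to $\sigma_a$; the approach follows Chotibut et al.'s analysis for $p=1$ and generalizes it with $g$ in place of the linear term. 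By Singer's theorem, every attracting or neutral cycle then attracts a critical point. Combined with a monotonicity or interval-invariance argument on $[\sigma_a,1-\sigma_a]$ and its complement, this would rule out other attractors.

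If the negative-Schwarzian computation is intractable for general $p$, I would instead try to prove directly that $\tilde f^2$ is increasing on suitable invariant intervals. Monotone convergence of $\tilde f^2$-orbits to $\sigma_a$ or $1-\sigma_a$ would then follow. I expect this global step to carry most of the technical weight; it is also where the paper's ``chaotic'' wording should be read, namely as chaos appearing at larger $a$ beyond the period-2 regime.
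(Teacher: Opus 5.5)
The monotonicity lemma in your Step~2 is false for $p\ge 4$; it holds only for $p=1,2,3$. The Taylor comparison you propose, with $\sigma=\tfrac12-u$, already shows this:
\[
h\bigl(\tfrac12-u\bigr)=\frac{2^{p+1}}{p}\Bigl[1+\frac{2p(3-p)}{3}\,u^{2}+O(u^{4})\Bigr].
\]
So for $p\ge4$, $h$ approaches $2^{p+1}/p$ from below and is increasing near $\sigma=\tfrac12$. Concretely, for $p=4$ one gets $h(0.4)=2\ln 1.5/(0.6^4-0.4^4)\approx 7.80<8=2^{p+1}/p$. The flip bifurcation is therefore subcritical for $p\ge 4$: for some $a$ slightly below the threshold, two symmetric $2$-cycles coexist with the stable equilibrium.

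Uniqueness of $\sigma_a$ for $a>2^{p+1}/p$ is still true, but it needs the finer fact that $h$ is decreasing on the set where $h>2^{p+1}/p$. The paper obtains this from $D(x)=\ln\frac{x}{1-x}-\frac a2 g(x)$, whose zeros are exactly your solutions of $h(\sigma)=a$. One has $D'(x)=\frac{1}{x(1-x)}\bigl[1-\frac{ap}{2}H_p(x)\bigr]$ with $H_p(x)=x(1-x)\bigl(x^{p-1}+(1-x)^{p-1}\bigr)$. For $p\le3$, $H_p$ is increasing on $(0,\tfrac12)$. For $p\ge4$, it rises to a single peak and then decreases to $H_p(\tfrac12)=2^{-p}$; this follows from the sign pattern of $K(z)=z^p-pz^{p-1}+pz-1$ on $z>1$. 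Since $a>2^{p+1}/p$ means $2/(ap)<2^{-p}$, that level is crossed exactly once. Hence $D$ rises from $-\infty$ and then falls to $D(\tfrac12)=0$, giving exactly one zero. The limit $\sigma_a\to0$ as $a\to\infty$ then follows because $h$ is bounded on $[\delta,\tfrac12)$ for every $\delta>0$, not from global monotonicity.

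For global attraction you have not found the mechanism, and the route you sketch is unfinished and partly mis-aimed. For $a>2^{p+1}/p$ the critical points of $f_a$ solve $H_p(x)=1/(ap)$, one in each half of $(0,1)$, so $f_a$ and $\tilde f$ are bimodal, not unimodal. No Schwarzian computation is available for general $p$. Singer's theorem alone only ties attracting cycles to critical points, so you would still need a hyperbolicity or no-wandering-interval theorem to get almost-sure convergence.

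The missing idea is to prove that \emph{every} $2$-cycle of $f_a$ is symmetric; your Step~2 only parametrizes the symmetric ones. From the identity $\frac{1-f_a(x)}{f_a(x)}=\frac{1-x}{x}e^{a g(x)}$, the equation $f_a^2(x)=x$ reduces to $g(f_a(x))=-g(x)=g(1-x)$. Hence $f_a(x)=1-x$, because $g$ is strictly increasing. So $\tilde f$ has no points of least period $2$ on a forward-invariant $[\epsilon,1-\epsilon]$, which exists since $f_a'(0)=f_a'(1)=e^a>1$. The classical theorem of Coppel and Sharkovsky then applies: a continuous self-map of a compact interval without period-$2$ points has every orbit converging to a fixed point. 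It sends every $\tilde f$-orbit to $\sigma_a$, $\tfrac12$ or $1-\sigma_a$, so every $f_a$-orbit converges to the $2$-cycle or to $\tfrac12$.

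Because $|f_a'(\tfrac12)|>1$, an orbit reaches $\tfrac12$ only by landing on it. The exceptional set is therefore $\bigcup_k f_a^{-k}(\tfrac12)$, which is countable (isolated zeros of real-analytic functions) but strictly larger than $\{\tfrac12\}$. Your target ``attracts every interior initial condition'' must accordingly be weakened to ``all but countably many.''

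The same argument, combined with Sharkovsky's theorem, shows that for every $a$ the only interior periodic points of $f_a$ have period $1$ or $2$. So there is no chaos at larger $a$ in this symmetric game; the period-$2$ regime persists for all $a>2^{p+1}/p$, as item~2 itself asserts. Steps~1 and~3 are fine.
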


Theorem \ref{thm:2p_degradation} establishes a severe divergence between static guarantees and discrete-time dynamics. While the classical Price of Anarchy ensures that the worst-case equilibrium is bounded by $\Theta(p/\ln p)$, the physical, day-to-day behavior of the learning dynamics falls into an attractor that is $2^p$ strictly worse than the optimum. \ian{is it possible to strengthen this argument by analyzing the step size $a=2^{p+1}/p$ relative to the maximum payoff in the game or the Lipschitz constant $L$? In other words, is $a$ large at that critical threshold or $O(L^{-1})$? Also, what is the social cost of the period-2 orbit?} \georgios{I have changed the theorem statement to state the social cost of the period-2 orbit.}

\begin{remark}[The Fragility  of Global Finetuning]
From a strictly centralized optimization perspective, one might object that in this highly stylized, symmetric two-path game, an omniscient designer could algebraically reverse-engineer a ``sweet spot'' learning rate $\epsilon$ that yields rapid convergence. Fastest local convergence occurs when the derivative of the update map at the equilibrium is zero ($f_a'(1/2) = 0$), which requires an optimal effective load of $a_{opt} = \frac{2^p}{p}$. 

Given the parameter coupling $a \approx \epsilon N^p$, the corresponding optimal raw learning rate is exactly:
\begin{equation}
\epsilon_{opt} \approx \frac{2^p}{p N^p} = \mathcal{O}(N^{-p})
\end{equation}
While this sweet spot algebraically exists, realizing it is challenging in practice and theoretically misaligned with decentralized learning. 

The fundamental issue is not merely the magnitude of the step-size (this itself is not scale invariant and is dependent on the choice of units), but the  fragility of the required finetuning. First, matching this optimal rate requires an independent agent to perfectly calibrate their learning against the global maximum network cost $\mathcal{O}(N^p)$. This demands oracle-like knowledge of global system parameters—the exact total population $N$ and the precise latency degree $p$—whereas real-world demand is inherently unpredictable and cost functions are only approximate. 

Crucially, rational, selfish agents do not coordinate such fragile global finetuning; they reactively adapt to localized historical cost signals. As established recently by \cite{bielawski2025heterogeneity}, even when populations consist of heterogeneous subpopulations of agents adapting from general initial conditions with generic, uncoordinated learning rates, the macroscopic system natively bifurcates into robust Li-Yorke chaos. Consequently, while an exact $\mathcal{O}(N^{-p})$ step-size guarantees rapid convergence in principle, its realization relies on a degree of systemic coordination and precise global knowledge that is fundamentally at odds with decentralized learning.
\end{remark}

\begin{figure}[htpb]
    \centering
    \includegraphics[width=0.8\textwidth]{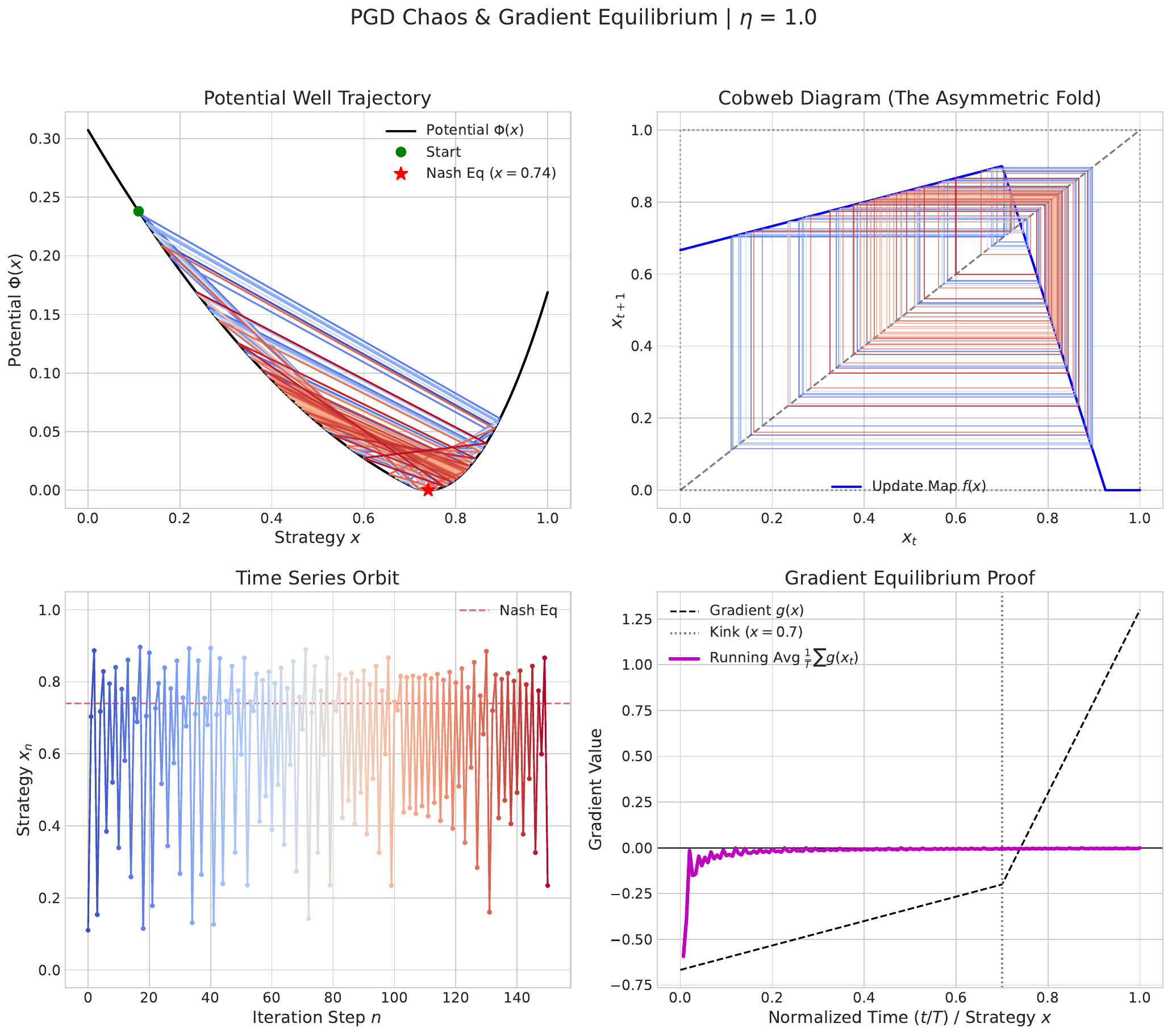}
    \caption{The emergence of Li-Yorke chaos and the concurrent satisfaction of Gradient Equilibrium (GEQ) under discrete-time Projected Gradient Descent (PGD) with $L_2$ regularization ($\eta = 1.0$) in an asymmetric two-link congestion game[cite: 5]. 
    \textbf{(Top-Left)} The state trajectory plotted on the exact convex potential well $\Phi(x)$. Rather than descending to the global minimum (Nash Equilibrium at $x=0.74$), the discrete step-size forces the system to perpetually overshoot, oscillating across the well[cite: 5]. 
    \textbf{(Top-Right)} The cobweb diagram of the unprojected update map $f(x)$. The asymmetric bottleneck induces a steep ``fold'' in the map[cite: 5]. Because the resulting chaotic attractor is strictly bounded away from the boundaries (0 and 1), the Euclidean projection operator is never activated, demonstrating that the chaos is native to the continuous potential landscape itself. 
    \textbf{(Bottom-Left)} The time series of the strategy $x_n$, showcasing persistent, non-equilibrating chaotic fluctuations around the equilibrium. 
    \textbf{(Bottom-Right)} The empirical validation of the GEQ condition~\cite{JMLR:v26:25-0356}. Despite the severe chaotic variance of the primal state $x_n$, the time-averaged gradient strictly converges to zero over time. This illustrates the fundamental paradox: time-averaged convergence of the dual/gradient variables perfectly masks the volatile, inefficient reality of the primal day-to-day flows. See Appendix \ref{app:l2_chaos} for the formal construction.}
    \label{fig:l2_chaos}
\end{figure}

Furthermore, this dynamical instability is not merely an artifact of the boundary-avoiding nature of steep regularizers (like Shannon entropy in MWU). It is a robust topological consequence of discrete-time updates on continuous potential landscapes. As illustrated in Figure \ref{fig:l2_chaos}, applying standard $L_2$ regularization (i.e., Projected Gradient Descent or Lazy Mirror Descent) in an asymmetric network natively induces Li-Yorke chaos within the interior of the simplex.

\begin{remark}[Gradient Equilibria, Jensen's Gap, and Physical Attractors]
An elegant phenomenon observed in these unstable regimes is that the time-averaged cost difference between the two paths strictly vanishes over time \cite{CFMP2019}. From the perspective of recent literature, this formally implies that the system satisfies the conditions of a \textit{Gradient Equilibrium} (GEQ)~\cite{JMLR:v26:25-0356}. This establishes a profound parallel: just as Correlated Equilibria accommodate complex chaotic attractors in continuous-time dynamics, Gradient Equilibria formally accommodate discrete-time Li-Yorke chaos, even in the most benign convex optimization settings.

The GEQ framework explicitly acknowledges that such states do not enforce standard optimality and can incur linear regret. Theorem \ref{thm:2p_degradation} provides the exact, worst-case quantification of this inefficiency via Jensen's gap: the system locks into a maximal-variance period-2 orbit where the gradients perfectly average to zero, yet the strict convexity of the macroscopic cost function guarantees a severe $2^p$ inefficiency.

However, the most critical limitation of GEQ lies not in its inefficiency, but in its predictive looseness. Because GEQ is fundamentally a kinematic condition on the time-averaged gradients, it structurally bypasses dynamical stability. By the definition of Li-Yorke chaos, the GEQ condition is simultaneously satisfied by infinitely many unstable periodic orbits, as well as an uncountable scrambled set of aperiodic trajectories. Yet, in well-behaved physical reality, almost all initial conditions are captured by just one or a few structurally attracting limit sets (such as the period-2 attractor). The GEQ solution concept lacks the mathematical machinery to distinguish between these uncountably many unobservable ``ghost'' orbits and the true physical attractor. Consequently, to generate more accurate predictions within the GEQ framework, one must deploy the tools of ergodic theory and dynamical systems to isolate the specific invariant measures that physically manifest.
\end{remark}

\FloatBarrier

\section{Related Work}
\label{sec:related_work}

This paper bridges classical algorithmic game theory with the continuous state-space analysis of dynamical systems, prompting a formal re-evaluation of prescriptive solution concepts. Our work extends beyond adjusting existing performance bounds to systematically analyze the geometric and topological validity of the target states anchoring three major pillars of the classical literature.

\subsection{Equilibrium Relaxations and the Topology of Potential Games}
The observation that natural learning dynamics fail to converge to Nash equilibria is well documented, tracing back to classical non-convergence results in cyclic environments by Shapley and Jordan~\cite{Shapley64,Jordan475}. Historically, the algorithmic game theory and multi-agent learning communities responded to these impossibility results---such as the non-convergence of uncoupled dynamics~\cite{hart2003uncoupled,hart2006stochastic,milionis2023impossibility}---by relaxing the target solution concepts (e.g., to Correlated and Coarse Correlated Equilibria) or by augmenting the learning algorithms with stochasticity and memory. Under these relaxations, the equilibrium-centric paradigm has remained foundational, scaling to extensive-form games~\cite{farina2019efficient,anagnostides2022faster,zhang2024efficient}, Markov and stochastic games~\cite{daskalakis2023complexity,jin2021v,mao2023provably,song2021can,cai2024near,gempconvex}, polyhedral games~\cite{farina2024polynomial}, convex and non-convex games~\cite{daskalakis2024efficient,hazan2017efficient}, time-varying environments~\cite{anagnostides2023convergence}, and continuous/differentiable games~\cite{balduzzi2018mechanics,mazumdar2020gradient,mertikopoulos2019optimistic,daskalakis2018training}. 

Rather than evaluating these equilibrium concepts in adversarial or cyclic environments where non-convergence is topologically expected, we focus our analysis on exact potential and congestion games. These environments form the canonical baseline for Algorithmic Game Theory as a whole~\cite{Nisan:2007:AGT:1296179,roughgarden2016twenty}; they serve as the foundation for seminal Price of Anarchy (PoA) bounds~\cite{KoutsoupiasP99WorstCE,christodoulou2005price,awerbuch2005price,roughgarden2015intrinsic} and are conventionally characterized by guaranteed convergence and macroscopic stability. By examining these constructions through the formalism of continuous optimization theory, we show that the pure Nash equilibria engineered to tightly bind the classical PoA inequalities are topologically unstable. Because their construction enforces exact indifference---embedding the algebraic structure of a mixed equilibrium onto the boundary of the simplex---they manifest locally as strict saddles of the extended potential. Furthermore, because their stable manifolds reside entirely outside the valid physical state space, they act dynamically as global repellers and local maxima of the potential function. Consequently, the canonical tight bounds are governed by implicit geometric constraints, anchoring efficiency metrics to topological repellers that diverge under gradient-like learning trajectories.

%ather than evaluating these equilibrium concepts in adversarial or zero-sum environments where non-convergence is topologically expected, we focus our analysis on exact potential and congestion games. These environments serve as the foundation for seminal Price of Anarchy (PoA) bounds~\cite{poa,christodoulou2005price,awerbuch2005price,Nisan:2007:AGT:1296179,roughgarden2016twenty} and are conventionally characterized by guaranteed stability. By examining these constructions through the formalism of continuous optimization theory, we show that the pure Nash equilibria engineered to tightly bind the classical PoA inequalities are mathematically degenerate. Because their construction enforces exact indifference---embedding the algebraic structure of a mixed equilibrium onto the boundary of the simplex---they manifest locally as strict saddles of the extended potential. Furthermore, because their stable manifolds reside entirely outside the valid physical state space, they act dynamically as global repellers and local maxima of the potential function. Consequently, the canonical tight bounds are governed by implicit geometric constraints, anchoring efficiency metrics to topological repellers that strictly diverge under gradient-like learning trajectories.

\subsection{Correlated Play and the Limitations of $C^0$ Analysis}
Given the topological instability of these deterministic anchors, their probabilistic generalizations require analogous geometric scrutiny. The reduction of multi-agent learning to regret minimization~\cite{littlestone1994weighted,freund1999adaptive,hart2000simple} and its extension into robust Price of Anarchy~\cite{roughgarden2015intrinsic} provided a tractable algebraic framework. Supported by results establishing near-optimal convergence rates for time-averaged (social) regret and robust PoA social welfare guarantees   ~\cite{Syrgkanis:2015:FCR:2969442.2969573,daskalakis2021near}, this paradigm has motivated numerous follow-ups~\cite{foster2016learning,anagnostides2022faster,daskalakis2023complexity,farina2020coarse,jin2021v,mao2023provably,song2021can,piliouras2022beyond,soleymani2025faster,peng2024fast,celli2020no} including proximal refinements as well as gradient-based variations~\cite{angelopoulos2025gradient,cai2025proximal,ahunbay2024first,ahunbay2025semicoarse}.

%li2024survey,peng2024fast,anagnostides2022near,anagnostides2022faster,dagan2024external,Daskalakis2024:efficient,papadimitriou2008computing,jiang2015polynomial,farina2024polynomial

We establish that reliance on discrete empirical histories (time-averaging) introduces  structural slackness. By evaluating trajectories exclusively through time-averaged distributions, optimal regret minimization can obscure underlying physical instability. We prove that uncoupled learning and continuous proximal refinements can be supported on strictly dominated strategies. Furthermore, optimal zero swap-regret dynamics can admit chaos and positive Lyapunov exponents even in minimal normal-form games. %Thus, while the algorithms satisfy standard algebraic regret criteria, the underlying physical system  remains in a non-equilibrating chaotic regime.

%This over-inclusiveness inadvertently  carries over to the set of 
%approximately/near-optimal states~\cite{rough09,Syrgkanis:2015:FCR:2969442.2969573} (strict relaxation of CCEs), %which are shown to be consistent 
%the states of approximately optimal social welfare in the sense of robust PoA. Specifically, this type of approximate socially optimality is consistent 
%in which every agent can be arbitrarily worse off than their minmax safety performance even in games with strongly aligned incentives.

This over-inclusiveness inadvertently carries over to the set of approximate or near-optimal states~\cite{rough09,Syrgkanis:2015:FCR:2969442.2969573} (a strict relaxation of CCEs), %which are shown to be consistent
%the states of approximately optimal social welfare in the sense of robust PoA. Specifically, this type of approximate socially optimality is consistent
in which every agent can be arbitrarily worse off than their minmax safety performance even in settings with strongly aligned incentives.
Such inefficiencies do not require artificial constructions. They emerge from the strategic insensitivity of interior NE. Such interior states emerge as solutions in all doubly symmetric load balancing games (symmetric players and machines), while the expected cost of each agent is equal to their independent min-max safety costs. Moving from interior NE to relaxations allowing for correlated play further weakness these performance guarantees.

\subsection{Non-Atomic Games and Macroscopic Instability}
Finally, we re-evaluate the stability of the non-atomic limit, an environment historically characterized by tight sub-linear efficiency bounds~\cite{roughgarden2002bad,roughgarden2004bounding,blum2006routing}. Recent literature has identified instability and Li-Yorke chaos in non-atomic routing games for numerous dynamics including Multiplicative Weights, Follow-the-Regularized-Leader dynamics with steep regularizers, Experienced Weighted Attraction learning~\cite{CFMP2019,bielawski2021follow,bielawski2025heterogeneity,bielawski2024memory}. Despite the prevalence of such results, the impact of such unstable behavior to system performance is 
less understood. The only  
prior known result  established globally attracting period-2 limit cycles only in the case of linear congestion games, corresponding only to a mild degradation of social welfare over the canonical PoA bounds (from $4/3$ to $2$ times worse than optimal).

Our analysis extends the scope of this macroscopic instability. We establish the first Li-Yorke chaos result for non-steep regularizers (such as $L_2$), demonstrating that instability is a robust topological feature of standard discrete-time optimization algorithms such as (Projected) Gradient Descent. Second, we show that addressing polynomial cost functions of degree $p \ge 3$ necessitates more intricate arguments compared to typically analyzed linear variants. We prove that in these higher-degree environments, the system destabilizes into a global period-two attractor whose the time-averaged inefficiency scales exponentially as $2^p$, diverging fundamentally from the sub-linear $\Theta(p/\ln p)$ static bounds. This formally decouples the classical algebraic PoA guarantees from the worst-case discrete-time dynamical reality raising many novel questions and analytical challenges.

\section{Conclusion}
\label{sec:conclusion}

Algorithmic Game Theory has traditionally derived its prescriptive guarantees from the efficient computation of static equilibria and the formulation of worst-case performance bounds. Concurrently, the reduction of multi-agent learning to black-box regret minimization has provided a standard mathematical framework for bounding system inefficiency. In this paper, we demonstrated that this analytical paradigm is structurally decoupled from the topological and measure-theoretic realities of game dynamics. By evaluating multi-agent systems purely through $C^0$ fixed-point analysis or discrete empirical distributions, standard solution concepts inherently abstract away the $C^1$ continuation of the underlying learning maps and expected game payoffs.

Through a series of constructive results, we established that this $C^0$ restriction introduces fundamental analytical limitations across five dimensions:

\begin{enumerate}
    \item \textbf{The Strategic Insensitivity of (Interior) Equilibria:} (Interior) Nash equilibria lack the local gradient information necessary to distinguish between aligned and strictly opposing incentives. Up to strategic equivalence, they function as adversarial ``each-vs.-all'' games, locking agents into minimal, min-max safety payoffs even in perfectly cooperative potential games.
    
    \item \textbf{The Topological Instability of Worst-Case Pure Nash:} The worst-case pure Nash equilibria that dictate the tightness of robust Price of Anarchy bounds mathematically manifest as strict saddles of the extended potential. Consequently, under natural learning dynamics, they are topologically unstable and possess a region of attraction of measure zero. In canonical instances, they are dynamically inaccessible and physically manifest as states of maximum potential, actively repelling valid trajectories and supported on \textit{almost everywhere strictly dominated} strategies---actions that are strictly inferior against all but a measure-zero set of mixed opponent profiles.
    
    \item \textbf{The Sensitivity of Efficiency Metrics:} Classical PoA guarantees in affine congestion games are highly sensitive to specific algebraic representations. When standard non-negativity constraints are relaxed to accommodate strictly increasing, data-driven cost models ($x \ge 1$), both the worst-case PoA and the Average PoA become strictly unbounded. Consequently, the metric evaluates performance based on parameters that are not invariant under strategic equivalence.
    
    \item \textbf{The Combinatorial Shadow of Correlated Play:} Projecting trajectories from the continuous state space of learning algorithms onto the discrete simplex of correlated equilibria is structurally insufficient to preclude non-rationalizable behavior. Optimal uncoupled learning---and continuous proximal refinements (SCE/PCE)---can be supported entirely on strictly dominated strategies. Furthermore, optimal $O(1/T)$ swap-regret minimization does not preclude complex, non-equilibrating dynamics, manifesting as chaotic limit sets even in minimal normal-form games.

    \item \textbf{Exponential Inefficiency and Chaos in the Non-Atomic Limit:} In non-atomic congestion games, standard discrete-time learning dynamics natively induce Li-Yorke chaos. Rather than stabilizing at the Wardrop equilibrium, the system is captured by a global chaotic attractor whose empirical time-averaged social cost degrades exponentially as $2^p$, severely diverging from static sub-linear bounds.
\end{enumerate}

These divergences stem directly from a worst-case methodology that assigns equal analytical weight to all algebraically feasible states, regardless of their dynamic stability. By constructing efficiency bounds that must mathematically account for measure-zero topological deadlocks, dynamically inaccessible states, and chaotic limit sets, the current framework risks structurally decoupling theoretical guarantees from the typical trajectories of learning agents.

The paradigms of worst-case equilibrium analysis and black-box regret minimization have undeniably served as indispensable foundations for algorithmic game theory. They provided the essential computational tractability required to navigate the PPAD-hardness of multi-agent systems and established the first rigorous benchmarks for distributed learning. However, as the field matures toward modeling data-driven, empirically grounded environments, our analytical tools must evolve to capture higher-fidelity dynamic realities. 

The  path forward requires directly interrogating the continuous state-space geometry of the learning dynamics---irrespective of whether the algorithmic updates occur in discrete or continuous time---using tools from ergodic theory and the study of dynamical invariant measures. By shifting focus to physical and natural invariant measures---limit sets possessing basins of attraction with strictly positive Lebesgue measure---we can systematically isolate the specific, dynamically stable outcomes within the broader worst-case sets (see e.g.,~\cite{bielawski2026natural} for some first applications of these ideas to game dynamics). Integrating the topological and measure-theoretic mechanics of dynamical systems into Algorithmic Game Theory provides the necessary foundation for formulating robust, predictive, and empirically grounded guarantees for complex multi-agent systems.

% Bibliography
\bibliographystyle{ACM-Reference-Format}
\bibliography{references/refs,references/refs2,references/limits,references/sigproc5}

%\appendix

\appendix

\section*{}

\section{Proofs of Claims in Section~\ref{sec:OPT}}
\label{app:A}

\subsection{Proof of Theorem~\ref{thm:zero-measure}}

Before formally stating the theorem~\ref{thm:zero-measure}, we introduce the primary mathematical tool utilized to establish topological instability in cases where the boundary metric becomes degenerate. When the Jacobian evaluates identically to the zero matrix, standard linear approximation fails, and we must rely entirely on the exact geometric properties of the non-linear vector field. For this, we employ Chetaev's Instability Theorem, a foundational result in non-linear dynamical systems theory~\cite{khalil2002nonlinear}.

\begin{theorem}[Chetaev's Instability Theorem \cite{khalil2002nonlinear}, Theorem 4.3]
\label{thm:chetaev}
Let $\mathbf{0}$ be an equilibrium point of the continuous dynamical system $\dot{x} = f(x)$, where $f$ is continuously differentiable on a local neighborhood $D \subset \mathbb{R}^N$ around the origin. Suppose there exists a continuously differentiable function $W(x): D \to \mathbb{R}$ and an open sub-region $\Omega \subset D$ such that:
\begin{enumerate}
    \item The origin $\mathbf{0}$ lies on the boundary of the region $\Omega$.
    \item $W(x) > 0$ for all $x \in \Omega$.
    \item $W(x) = 0$ for all $x \in \partial\Omega \cap D$ (the boundary of $\Omega$ that lies strictly inside $D$).
    \item $\dot{W}(x) = \nabla W(x) \cdot f(x) > 0$ for all $x \in \Omega$ (meaning $W$ strictly increases along trajectories).
\end{enumerate}
Then, the equilibrium point $\mathbf{0}$ is topologically unstable.
\end{theorem}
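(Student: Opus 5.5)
The plan is the classical contradiction argument: assume $\mathbf{0}$ is Lyapunov stable, then exhibit initial conditions arbitrarily close to the origin whose trajectories must leave a fixed ball. Since $f$ is $C^1$ on $D$, solutions of $\dot{x}=f(x)$ exist, are unique, and depend continuously on initial data. First fix $r>0$ such that the closed ball $\bar{B}_r=\{\|x\|\le r\}$ is contained in $D$. Because $W$ is continuous on the compact set $\bar{B}_r$, it is bounded there; let $M=\max_{\bar{B}_r} W$.

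Given any $\delta\in(0,r)$, hypothesis (1), that the origin lies on $\partial\Omega$, lets us choose $x_0\in\Omega$ with $\|x_0\|<\delta$. Set $a=W(x_0)>0$, which is positive by hypothesis (2). Define
\[
U=\{x\in\bar{B}_r\cap\bar{\Omega} : W(x)\ge a\}.
\]
This set is compact: it is closed and bounded. Moreover, no point of $\partial\Omega\cap D$ lies in $U$, because $W=0<a$ there by hypothesis (3). Hence $U\subset\Omega$, and the strict inequality in hypothesis (4) applies on all of $U$. Since $\dot W$ is continuous and $U$ is compact, $\gamma=\min_{U}\dot W>0$.

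The key step is to show that the trajectory $x(t)$ starting at $x_0$ stays in $U$ for as long as it stays in $\bar{B}_r$.
\begin{itemize}
\item It cannot leave $\Omega$ while inside $\bar{B}_r$. To exit, it would have to cross $\partial\Omega\cap D$, where $W=0$.
\item But while it remains in $\Omega$, $W$ is increasing along it, so $W(x(t))\ge a>0$.
\item Continuity of $W(x(t))$ therefore rules out reaching the value $0$.
\end{itemize}
Making this first-exit-time argument airtight is the only delicate point, and I expect it to be the main obstacle. I would define $t^*$ as the supremum of times $\tau$ such that $x([0,\tau])\subset U$. At $t^*$, if $\|x(t^*)\|<r$, then $W(x(t^*))\ge a+\gamma t^*>a$ and $x(t^*)\in\Omega$, which is open. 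Continuity then extends the interval beyond $t^*$, a contradiction unless the trajectory has reached $\|x\|=r$.

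On this interval we have $W(x(t))\ge a+\gamma t$. Since $W\le M$ on $\bar{B}_r$, the trajectory must reach $\|x(t)\|=r$ no later than time $(M-a)/\gamma$. Because $\delta$ was arbitrary, no neighborhood of the origin keeps all nearby trajectories inside $B_r$. Hence $\mathbf{0}$ is not Lyapunov stable, that is, it is topologically unstable.
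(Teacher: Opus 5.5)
The paper gives no proof of its own; it imports the result from Khalil (Theorem 4.3), and your argument is exactly that textbook proof, correctly adapted to the $\Omega$, $\partial\Omega\cap D$ formulation: the compact set $U\subset\Omega$, then $\gamma=\min_U \dot W>0$, then the bound $a+\gamma t\le W(x(t))\le M$, which forces the trajectory out through $\|x\|=r$ from initial points arbitrarily close to $\mathbf{0}$. The only slip is the case $t^*=0$, where $W(x(t^*))=a$ is not strictly above $a$; there, instead of continuity, use that $x(t^*)\in\Omega$ is an interior point, so $\frac{d}{dt}W(x(t))=\dot W(x(t))>0$ just after $t^*$ and $W$ stays $\ge a$ (the solution extends past $t^*$ because it has remained in the compact set $U\subset D$).
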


This theorem provides the rigorous mathematical foundation for Property (iii) of our main result. By defining a localized escape region inside the physically valid probability space and demonstrating that an inverted potential function strictly grows within it, Chetaev's Theorem allows us to bypass the zero-Jacobian boundary degeneracy entirely and guarantee that the equilibrium actively repels interior trajectories.

\begin{reptheorem}{thm:zero-measure}[Topological Instability of Worst-Case Nash Equilibria]
Let $\Gamma$ be a canonical exact potential congestion game utilized to establish tight robust Price of Anarchy (PoA) bounds (e.g., the affine double-cycle of~\cite{christodoulou2005price,roughgarden2015intrinsic}, or the multi-hop network of~\cite{awerbuch2005price,Nisan:2007:AGT:1296179}) where its worst-case pure Nash equilibrium $y$ perfectly binds the smoothness inequalities. Under these exact algebraic constraints, the following properties hold:
\begin{enumerate}
    \item[(i)] The worst-case pure Nash equilibrium $y$ manifests as a \textbf{strict saddle} of the exact continuous potential function $\Phi$.
    \item[(ii)] Under any smooth learning dynamic defined by an interior-regular Riemannian metric (strictly positive-definite at the boundary), the Lebesgue measure of the basin of attraction for $y$ within the physical probability simplex is exactly zero.
    \item[(iii)] Under any continuous learning dynamic where the exact potential $\Phi$ acts as a strict Lyapunov function outside the equilibrium set, the equilibrium $y$ is \textbf{topologically unstable}, actively repelling physically valid learning trajectories away from it.
\end{enumerate}
\end{reptheorem}

\begin{proof}
The proof proceeds in four structural parts. First, we analytically extend the potential function to circumvent the boundary topology and link algebraic tightness to an interior critical point. Second, we calculate the exact Hessians to establish the strict saddle. Third, we compute explicit eigenvectors to guarantee the unstable directions intersect the physical interior. Finally, we map this geometry to continuous learning dynamics, proving physical repulsion and topological instability.

\paragraph{Part I: Analytic Extension and the Critical Point}
The Smoothness Framework algebraically bounds the inefficiency of a game via parameters $(\lambda, \mu)$ such that for any states $y$ and $y^*$:
\begin{equation}
\sum_{i=1}^N C_i(y_i^*, y_{-i}) \le \lambda C(y^*) + \mu C(y)
\end{equation}
To prove that this bound is perfectly tight, the framework requires a game parameterization where the inequality becomes a strict equality at the worst-case pure Nash equilibrium $y$. This algebraically mandates that every player $i$ is exactly indifferent between their worst-case strategy $y_i$ and a unilateral deviation to their socially optimal strategy $y_i^*$. 

Let the independent mixed strategy profile be represented by $q \in [0,1]^N$, where $q_i$ represents the probability that player $i$ plays $y_i^*$, and $1-q_i$ is the probability they play $y_i$. Let the worst-case state $y$ be mapped to the origin, $q = \mathbf{0}$. 

Because the latency functions of canonical bounding games are polynomial, the exact potential function $\Phi(q)$ is a continuous polynomial over the closed simplex $[0,1]^N$. Because polynomials are infinitely differentiable ($C^\infty$) and analytic everywhere, we smoothly extend $\Phi$ from the closed simplex to the open, unconstrained space $\mathbb{R}^N$. Within $\mathbb{R}^N$, the boundary origin $q = \mathbf{0}$ formally acts as an interior point of a manifold without boundary.

Let $\Delta_i(q) = \mathbb{E}[C_i(y_i, q_{-i})] - \mathbb{E}[C_i(y_i^*, q_{-i})]$ denote the expected cost difference. The strict indifference condition mathematically mandates:
\begin{equation}
\Delta_i(\mathbf{0}) = 0 \quad \forall i \in \{1, \dots, N\}
\end{equation}
In an exact potential game, the expected cost difference corresponds to the negative gradient of the potential function: $\Delta_i(q) = -\frac{\partial \Phi}{\partial q_i}$. Because $\Delta_i(\mathbf{0}) = 0$ for all $i$, the gradient vanishes locally at the origin in the extended space:
\begin{equation}
\nabla \Phi(\mathbf{0}) = \mathbf{0}
\end{equation}
Therefore, the algebraic requirement of tightness formally forces the worst-case Nash equilibrium to act as an interior critical point of the extended potential function.

\paragraph{Part II: Explicit Calculation of the Hessian and the Strict Saddle}
We evaluate the local geometry of the potential function at the interior critical point by computing its Hessian matrix, $H$, where $H_{ij} = \frac{\partial^2 \Phi}{\partial q_i \partial q_j} = -\frac{\partial \Delta_i}{\partial q_j}$.

First, we examine the main diagonal elements $H_{ii} = -\frac{\partial \Delta_i}{\partial q_i}$. In any congestion game utilizing independent mixed strategies, the expected latency of a path depends exclusively on the background traffic generated by the opposing players ($q_{-i}$). Player $i$'s own mixed probability $q_i$ dictates the weighting of their choice, but it does not alter the physical distribution of the opposing traffic. Therefore, the partial derivative of player $i$'s marginal cost with respect to their own probability is identically zero:
\begin{equation}
\frac{\partial \Delta_i}{\partial q_i} = 0 \quad \forall i \in \{1, \dots, N\}
\end{equation}
Consequently, the main diagonal of the Hessian is entirely zero, guaranteeing that the trace of the Hessian vanishes universally:
\begin{equation}
\text{Trace}(H) = \sum_{i=1}^N H_{ii} = 0
\end{equation}

By Clairaut's Theorem, $H$ is a real symmetric matrix. To definitively establish that $q = \mathbf{0}$ is a strict saddle, we must prove that $H \neq \mathbf{0}$. We achieve this by explicitly calculating the cross-derivatives for the canonical tightness construction.

\textbf{Case 1: The 4-Player \cite{awerbuch2005price} Network} \\
In the standard 4-player construction yielding the $5/2$ bound, the expected cost differences $\Delta_i(q)$ at the worst-case pure Nash equilibrium evaluate to:
\begin{align*}
\Delta_1(q) &= 2q_2 - q_3 + 2q_4 \\
\Delta_2(q) &= 2q_1 + 2q_3 - q_4 \\
\Delta_3(q) &= -q_1 + 2q_2 \\
\Delta_4(q) &= 2q_1 - q_2
\end{align*}
Taking the partial derivatives $H_{ij} = -\frac{\partial \Delta_i}{\partial q_j}$ yields the exact, explicit Hessian matrix at the origin:
\begin{equation}
\label{eq:hessian_4p}
H = \begin{pmatrix}
0 & -2 & 1 & -2 \\
-2 & 0 & -2 & 1 \\
1 & -2 & 0 & 0 \\
-2 & 1 & 0 & 0
\end{pmatrix}
\end{equation}
This matrix is explicitly non-zero and symmetric. Because its trace is $0$, its eigenvalues must cross the origin. Calculations verify the eigenvalues are $-1+\sqrt{2} \approx 0.414$, $-1-\sqrt{2} \approx -2.414$, $1+\sqrt{10} \approx 4.162$, and $1-\sqrt{10} \approx -2.162$. The existence of strictly negative eigenvalues formally guarantees the state is a strict saddle.

\textbf{Case 2: The Worst-Case Topologies of~\cite{christodoulou2005price},\cite{roughgarden2015intrinsic}} \\
We evaluate the exact topological construction that tightly binds the affine Price of Anarchy at $5/2$ (\cite{christodoulou2005price}, \cite{roughgarden2015intrinsic} Theorem 5.6, Example 5.7). The game consists of $k=3$ players $\{0, 1, 2\}$ and two disjoint cycles of resources $u$ and $v$, totaling 6 resources: $\{u_0, u_1, u_2, v_0, v_1, v_2\}$. The cost functions are the identity $c(x) = x$. 

Each player $i$ chooses between an optimal strategy $Q_i = \{u_i, v_i\}$ and a worst-case strategy $P_i = \{u_{i+1}, v_{i+1}, v_{i+2}\}$ (indices modulo 3). Let $q_i$ represent the probability that player $i$ plays $Q_i$.

To determine if the four-way interaction between $P_i, Q_i$ and $P_{i-1}, Q_{i-1}$ cancels out, we explicitly calculate the expected cost difference $\Delta_0(q)$ for Player 0 as a function of the probabilities of the other players ($q_1, q_2$).

First, we calculate the expected cost of Player 0's worst-case strategy, $\mathbb{E}[C_0(P_0)]$, evaluating the load on its specific edges $P_0 = \{u_1, v_1, v_2\}$. Assuming Player 0 plays $P_0$ deterministically, they add a load of $1$ to these edges:
\begin{itemize}
    \item Load on $u_1$: Player 0 plus Player 1 (if playing $Q_1$). Expected load $= 1 + q_1$.
    \item Load on $v_1$: Player 0 plus Player 1 (if playing $Q_1$) plus Player 2 (if playing $P_2$). Expected load $= 1 + q_1 + (1 - q_2) = 2 + q_1 - q_2$.
    \item Load on $v_2$: Player 0 plus Player 1 (if playing $P_1$) plus Player 2 (if playing $Q_2$). Expected load $= 1 + (1 - q_1) + q_2 = 2 - q_1 + q_2$.
\end{itemize}
Summing these yields the expected cost of the worst-case path:
\begin{equation}
\mathbb{E}[C_0(P_0)] = (1 + q_1) + (2 + q_1 - q_2) + (2 - q_1 + q_2) = 5 + q_1
\end{equation}

Next, we calculate the expected cost of Player 0's optimal strategy, $\mathbb{E}[C_0(Q_0)]$, evaluating the load on $Q_0 = \{u_0, v_0\}$ assuming Player 0 plays $Q_0$ deterministically:
\begin{itemize}
    \item Load on $u_0$: Player 0 plus Player 2 (if playing $P_2$). Expected load $= 1 + (1 - q_2) = 2 - q_2$.
    \item Load on $v_0$: Player 0 plus Player 1 (if playing $P_1$) plus Player 2 (if playing $P_2$). Expected load $= 1 + (1 - q_1) + (1 - q_2) = 3 - q_1 - q_2$.
\end{itemize}
Summing these yields the expected cost of the optimal path:
\begin{equation}
\mathbb{E}[C_0(Q_0)] = (2 - q_2) + (3 - q_1 - q_2) = 5 - q_1 - 2q_2
\end{equation}

The expected marginal cost difference for Player 0 is:
\begin{equation}
\Delta_0(q) = \mathbb{E}[C_0(P_0)] - \mathbb{E}[C_0(Q_0)] = (5 + q_1) - (5 - q_1 - 2q_2) = 2q_1 + 2q_2
\end{equation}

At the worst-case Nash equilibrium ($q = \mathbf{0}$), the indifference condition holds perfectly: $\Delta_0(\mathbf{0}) = 0$. However, the four-way interaction structurally does not cancel out. Taking the partial derivatives yields strictly non-zero values:
\begin{equation}
\frac{\partial \Delta_0}{\partial q_1} = 2 \quad \text{and} \quad \frac{\partial \Delta_0}{\partial q_2} = 2
\end{equation}
By the symmetric construction of the cycles, this holds for all players: $\Delta_1(q) = 2q_0 + 2q_2$ and $\Delta_2(q) = 2q_0 + 2q_1$. Thus, the Hessian matrix evaluated at the origin is explicitly:
\begin{equation}
\label{eq:hessian_3p}
H = \begin{pmatrix}
0 & -2 & -2 \\
-2 & 0 & -2 \\
-2 & -2 & 0
\end{pmatrix}
\end{equation}
Because this matrix is generated by the structure of the universal worst-case topology, $H \neq \mathbf{0}$ is a physical mandate of theoretical tightness. 

The characteristic equation of $H$ yields the explicit eigenvalues $\lambda \in \{2, 2, -4\}$. Because the spectrum contains at least one strictly negative eigenvalue ($\lambda_{\min} = -4$), the critical point $q = \mathbf{0}$ satisfies the definition of a \textbf{strict saddle}, establishing Property (i).

\paragraph{Part III: Physical Intersection via Explicit Local Geometry}
The standard invariant manifold theorems apply to vector fields defined on open subsets. Because we smoothly extended the potential function from the closed simplex to an open neighborhood $U \subset \mathbb{R}^N$, the state $q = \mathbf{0}$ formally acts as an interior fixed point within $U$. 

We must guarantee that the local stable manifold $W^s$ of any continuous learning dynamic does not unilaterally consume the physically valid probability simplex $(0,1)^N$. First, we note that because the explicit Hessians derived in Part II are non-singular, all eigenvalues of the Jacobian are non-zero. Consequently, the boundary state $\mathbf{0}$ is a strictly hyperbolic fixed point, and the center manifold is empty. 

By the Stable/Unstable Manifold Theorem, we must prove that the unstable manifold $W^u$ points into the positive orthant. This requires establishing that the eigenvectors corresponding to the negative eigenvalues of the Hessian $H$ (which define the unstable, repelling directions of the potential landscape) intersect the physical interior. Rather than relying on aggregate quadratic forms, we evaluate these unstable eigenvectors directly. 

For the 3-player worst-case topology (Equation \ref{eq:hessian_3p}), evaluating the vector $v = (1, 1, 1)^T$ yields:
$$H v = \begin{pmatrix} 0 & -2 & -2 \\ -2 & 0 & -2 \\ -2 & -2 & 0 \end{pmatrix} \begin{pmatrix} 1 \\ 1 \\ 1 \end{pmatrix} = \begin{pmatrix} -4 \\ -4 \\ -4 \end{pmatrix} = -4 v$$
Thus, $v = (1, 1, 1)^T$ is an exact eigenvector corresponding to the strictly negative eigenvalue $\lambda = -4$. 

Similarly, for the 4-player topology (Equation \ref{eq:hessian_4p}), the characteristic polynomial yields a strictly negative eigenvalue $\lambda = -1 - \sqrt{2}$. Solving the system $(H - \lambda I)v = 0$ yields the corresponding exact eigenvector:
$$v = \begin{pmatrix} 1 \\ 1 \\ \sqrt{2} - 1 \\ \sqrt{2} - 1 \end{pmatrix}$$

Because $\sqrt{2} > 1$, every component of this eigenvector is strictly positive. In both canonical topologies, the explicit eigenvectors dictating the unstable manifold $W^u$ consist strictly of positive components. Therefore, the tangent space of the repelling dynamics at the origin points strictly into the valid probability simplex $(0,1)^N$. This establishes the geometric foundation for physical repulsion.

\paragraph{Part IV: Continuous Learning Dynamics and Topological Instability}
We now link the local geometry of the potential function to the vector field of the continuous-time learning dynamic. We evaluate the local stability of the critical point $q=\mathbf{0}$ by bifurcating our analysis into two broad classes of gradient-like dynamics. 

First, we consider smooth Riemannian gradient flows of the form:
$$V(q) = M(q)\Delta(q) = -M(q)\nabla\Phi(q)$$
where $M(q)$ is a smooth, positive semi-definite matrix representing the inverse Riemannian metric tensor. Second, we generalize our analysis to encompass any continuous learning dynamic for which the exact potential function $\Phi$ acts as a strict Lyapunov function outside the equilibrium set. The mathematical machinery required to establish instability depends strictly on whether the dynamic's boundary behavior permits linear approximation.

\textbf{Case 1: Interior-Regular Metrics.} 
Assume the continuous dynamic is governed by an interior-regular Riemannian metric, where $M(q)$ remains symmetric and strictly positive-definite at the boundary ($M(\mathbf{0}) \succ 0$). We evaluate local stability via the Jacobian matrix, $J = DV(\mathbf{0})$. Because $q=\mathbf{0}$ is a critical point $(\nabla\Phi(\mathbf{0})=\mathbf{0})$, the product rule yields $J = -M(\mathbf{0})H$. 

Because $M(\mathbf{0})$ is positive-definite, it admits a unique positive-definite square root $M^{1/2}$, allowing us to express the Jacobian as similar to a symmetric matrix: $J = -M^{1/2}(M^{1/2} H M^{1/2})M^{-1/2}$. By Sylvester's Law of Inertia, the congruent matrix $M^{1/2} H M^{1/2}$ shares the exact same inertia as the Hessian $H$. As established in Part II, $H$ is non-singular and possesses at least one strictly negative eigenvalue, meaning $-H$ possesses at least one strictly positive eigenvalue. 

Consequently, $J$ is guaranteed to possess strictly positive eigenvalues for any interior-regular metric, proving the state is a strictly hyperbolic fixed point. By the Stable Manifold Theorem, the dimension of the stable convergent manifold $W^s$ is bounded to at most $N-1$. Because any smooth manifold of dimension strictly less than $N$ has a Lebesgue measure of exactly zero in $\mathbb{R}^N$, its intersection with the open $N$-dimensional physical state space $(0,1)^N$ trivially has a Lebesgue measure of exactly zero. This globally establishes Property (ii) without requiring the dynamic's specific unstable eigenvectors to perfectly align with those of the potential landscape.

\textbf{Case 2: Degenerate Metrics and General Gradient-Like Dynamics.} 
For proportional evolutionary dynamics (e.g., Replicator Dynamics) or other highly non-linear learning rules, the metric tensor may evaluate to exactly zero at pure strategy profiles ($M(\mathbf{0}) = \mathbf{0}$). In such cases, the Jacobian evaluates identically to the zero matrix ($J=\mathbf{0}$), rendering standard linear stability and manifold dimension theorems inapplicable.

However, we can establish Property (iii) universally for any gradient-like learning dynamic where the potential $\Phi$ acts as a strict Lyapunov function outside the equilibrium set. We do this directly via Chetaev's Instability Theorem. We define a continuously differentiable Chetaev function by inverting the potential function relative to the boundary equilibrium:
$$W(q) = \Phi(\mathbf{0}) - \Phi(q)$$
We define an escape region $\Omega$ as the set of physical states within a local neighborhood $B_\epsilon(\mathbf{0})$ where the potential is strictly less than the boundary equilibrium:
$$\Omega = \{ q \in (0,1)^N \cap B_\epsilon(\mathbf{0}) \mid \Phi(q) < \Phi(\mathbf{0}) \}$$

Because the explicit eigenvectors derived in Part III correspond to strictly negative eigenvalues ($\lambda < 0$), the quadratic form evaluates to $v^T H v = \lambda \|v\|^2 < 0$ along these interior rays. The exact potential function strictly decreases as trajectories move along these vectors into the interior. Thus, $\Omega$ is non-empty, and the origin $\mathbf{0}$ lies exactly on its boundary. By definition, $W(q) > 0$ strictly inside $\Omega$, and $W(q) = 0$ on the boundary of $\Omega$ adjacent to the origin.

We evaluate the time derivative of $W(q)$ along the learning trajectories inside $\Omega$. Because $\Phi(q)$ is a strict Lyapunov function outside the equilibrium set, and $\mathbf{0}$ is an isolated worst-case equilibrium, the potential must strictly decrease along any physical trajectory within the non-equilibrium region $\Omega$. Therefore:
$$\frac{d}{dt}\Phi(q) < 0 \implies \dot{W}(q) = - \frac{d}{dt}\Phi(q) > 0$$

Because $W(q)$ strictly increases along trajectories inside $\Omega$, the vector field actively forces physically valid states away from the origin toward higher values of $W(q)$ (lower values of $\Phi(q)$). By Chetaev's Theorem, this mathematically guarantees that the worst-case Nash equilibrium $y$ is topologically unstable. The dynamic actively repels interior trajectories away from it, establishing Property (iii).
\end{proof}

%\end{proof}

\begin{remark}[Genericity in Arbitrary Cost Functions]
We have explicitly established topological instability for the canonical bounding topologies of \cite{awerbuch2005price, christodoulou2005price, roughgarden2015intrinsic}. For extensions to arbitrary non-decreasing cost functions, the analytical recipe established in this proof fundamentally persists. First, the algebraic tightness of the efficiency bound universally forces the worst-case equilibrium to act as an interior critical point. Second, the independence of mixed strategies universally guarantees a zero-trace Hessian. Therefore, the sole remaining requirement to establish a strict saddle is that the symmetric Hessian does not perfectly vanish ($H \neq \mathbf{0}$). Because exact tightness requires highly coupled, overlapping resource cycles---such as the generalized families provided in~\cite{roughgarden2015intrinsic}---the presence of non-zero cross-derivatives is a generic algebraic condition of the network design. Consequently, we expect  network topologies constructed to perfectly bind the Price of Anarchy to structurally inherit this strict saddle geometry, rendering the topological instability of worst case NE a generic feature of tight bounding games.
\end{remark}

This topological instability follows a line of reasoning first explored in learning in congestion games in ~\cite{Kleinberg09multiplicativeupdates}. In that context, it was established that under the continuous-time limit of the Multiplicative Weights Update (replicator dynamics), players inherently avoid unstable mixed Nash equilibria, converging almost everywhere to weakly stable states, typically pure/strict Nash. By proving that the exact worst-case configurations of canonical PoA bounds are structurally mandated to be strict saddles (algebraically behave as mixed Nash), we effectively bridge these two narratives: the worst-case static bounds correspond precisely to the unstable fixed points that natural evolutionary dynamics are mathematically guaranteed to escape. Furthermore, this evasion of strict saddles is not merely a continuous-time artifact. Recent advances have generalized this phenomenon to a vast family of discrete-time algorithms. It is now well-established that standard first-order methods—including gradient descent even in the presence of non-isolated critical points, as well as mirror descent and block coordinate descent—almost always avoid strict saddles~\cite{panageas2016gradient,lee2019first}. Consequently, the algebraic necessity of the strict saddle ensures that practically any natural, decentralized learning algorithm will dynamically bypass the theoretical worst-case Price of Anarchy.

\subsection{Proof of Theorem~\ref{thm:Repeller_pure_NE}}
%Worst case pure Nash equilibria are global maxima of the potential in~\cite{christodoulou2005price,roughgarden2015intrinsic}}
\label{app:global_max}

\begin{reptheorem}{thm:Repeller_pure_NE}
Let $G$ be the congestion game constructed via the canonical smoothness arguments to yield the worst-case Price of Anarchy for affine cost functions in ~\cite{christodoulou2005price,roughgarden2015intrinsic} for $N=3$ agents. Let 
%$y^*$ be the socially optimal pure Nash equilibrium and 
 $y$ be the worst-case pure Nash equilibrium, then:
\begin{enumerate}
    \item The strategies supporting $y$ are almost everywhere strictly dominated in the measure-theoretic sense.
    \item The state $y$, alongside the manifold of states formed by unilateral deviations from it, constitutes the global maximum of the exact potential function $\Phi$.
    \item Under any gradient-like learning dynamic (where $\Phi$ acts as a strict Lyapunov function outside the equilibrium set), the worst-case equilibrium $y$ lies strictly in the global repeller of the interior state space.
\end{enumerate}
\end{reptheorem}

\begin{proof}
The proof proceeds by explicitly unpacking the canonical lower-bound construction that establishes the tightness of robust PoA bounds for affine latencies (formalized in \cite{roughgarden2015intrinsic}, Theorem 5.6, and \cite{christodoulou}, Theorem 2). We proceed in three stages: explicitly defining the worst-case game geometry, establishing the measure-theoretic dominance via partial derivatives, and deriving the topological consequences.

\vspace{0.2cm}
\noindent \textit{Step 1: The Geometry of the Worst-Case Double-Cycle Game} \\
To achieve the exact worst-case Price of Anarchy ratio of $5/2$ for affine cost functions $c(x) = ax + b$, the canonical construction requires a highly specific topological structure: the $N=3$ ``double-cycle'' congestion game. 

The game consists of exactly $3$ active players and a ground set of $6$ resources partitioned into two disjoint sets, $E_1 = \{h_1, h_2, h_3\}$ and $E_2 = \{g_1, g_2, g_3\}$. The resources in each set are arranged in a cyclic topology. Following the formal construction (e.g., Theorem 5.6 in Roughgarden [2015]), the resources in $E_1$ and $E_2$ are assigned the proportionally weighted cost functions $\eta \cdot c(x)$ and $(1-\eta) \cdot c(x)$, respectively.

Each player $i \in \{1, 2, 3\}$ possesses exactly two strategies: a suboptimal strategy $P_i$ and an optimal strategy $Q_i$. The optimal block sizes required to strictly bind the $5/2$ affine inequalities dictate the lengths of these contiguous segments:
\begin{itemize}
    \item Strategy $P_i$ consists of precisely $x_1 = 2$ consecutive resources in $E_1$ and $x_2 = 1$ resource in $E_2$, constructed cyclically: $P_i = \{g_{i+1}, h_{i-1}, h_{i+1}\}$ (evaluating indices modulo $3$).
    \item Strategy $Q_i$ consists of precisely $x_1^* = 1$ resource in $E_1$ and $x_2^* = 1$ resource in $E_2$, constructed as $Q_i = \{h_i, g_i\}$.
    \item By construction, for every player $i$, the strategies $P_i$ and $Q_i$ are strictly disjoint ($P_i \cap Q_i = \emptyset$), while $Q_i \cap Q_j = \emptyset$ for all $i \neq j$.
\end{itemize}

Let $y$ denote the worst-case pure Nash equilibrium where every player $i$ selects $P_i$, and let $y^*$ denote the socially optimal state where every player selects $Q_i$. 

In order to maximize the PoA ratio, the parameter $\eta$ must be algebraically tuned such that at the state $y$, every player is exactly indifferent between their two strategies ($C_i(y) = C_i(Q_i, y_{-i})$). Equating the expected costs under the base latency $c(x) = ax$ yields the exact balancing condition:
$$\eta a(x_1^2) + (1-\eta) a(x_2^2) = \eta a(x_1^*)(x_1+1) + (1-\eta) a(x_2^*)(x_2+1)$$
Substituting the exact block sizes ($x_1=2, x_2=1, x_1^*=1, x_2^*=1$) yields:
$$\eta a(4) + (1-\eta) a(1) = \eta a(3) + (1-\eta) a(2)$$
$$3\eta + 1 = \eta + 2 \implies 2\eta = 1 \implies \eta = 1/2$$

The indifference constraint mathematically mandates that $\eta = 1/2$. (see also Example 5.7 in  \cite{roughgarden2015intrinsic}). Consequently, the affine cost functions for both cycles possess identical slopes ($a/2$), ensuring uniform marginal congestion costs across the entire topology. To simplify notation moving forward with absorb the multiplicative constant $\eta = 1/2$ into the slope and intercept variables $a\leftarrow a/2, b\leftarrow b/2$.   
For generalized affine costs where the static intercept $b > 0$, this perfect indifference is preserved by strategically assigning non-strategic ``dummy players'' to specific resources to absorb the asymmetric static costs, locking the active players into the exact equilibrium:
$$C_i(y) = C_i(Q_i, y_{-i})$$
\vspace{0.2cm}
\noindent \textit{Step 2: Almost Everywhere Strict Dominance} \\
Let $q \in [0,1]^3$ represent the mixed strategy profile of the population, where $q_j$ denotes the probability that agent $j$ plays their optimal strategy $Q_j$ (and $1-q_j$ is the probability they play $P_j$). For any agent $i$, let $\Delta_i(q)$ denote the expected cost difference between their two strategies:
$$\Delta_i(q) = \mathbb{E}[C_i(P_i, q_{-i})] - \mathbb{E}[C_i(Q_i, q_{-i})]$$
By the strict indifference condition established in Step 1, when all opponents play the suboptimal strategy ($q_{-i} = \mathbf{0}$), we have $\Delta_i(\mathbf{0}) = 0$.

We evaluate the stability of this indifference by calculating the partial derivative of the cost gap with respect to an opponent's shifting probability $q_j$. By the linearity of expectation for affine costs $c(x) = ax + b$, the intercept $b$ and the constant dummy loads vanish under differentiation. The rate of change depends strictly on the slope $a > 0$ and the collision combinatorics between the sets:
$$\frac{\partial \mathbb{E}[C_i(P_i)]}{\partial q_j} = a \left( |P_i \cap Q_j| - |P_i \cap P_j| \right)$$
$$\frac{\partial \mathbb{E}[C_i(Q_i)]}{\partial q_j} = a \left( |Q_i \cap Q_j| - |Q_i \cap P_j| \right)$$

Evaluating Player $1$ against Opponent $2$ ($j=2$):
\begin{itemize}
    \item $P_1 \cap Q_2 = \{h_2, g_2\} \implies$ Size $2$. $P_1 \cap P_2 = \{h_3\} \implies$ Size $1$. Thus, $\frac{\partial \mathbb{E}[C_1(P_1)]}{\partial q_2} = a(2 - 1) = a$.
    \item $Q_1 \cap Q_2 = \emptyset \implies$ Size $0$. $Q_1 \cap P_2 = \{h_1\} \implies$ Size $1$. Thus, $\frac{\partial \mathbb{E}[C_1(Q_1)]}{\partial q_2} = a(0 - 1) = -a$.
    \item The gradient of the gap is: $\frac{\partial \Delta_1}{\partial q_2} = a - (-a) = 2a > 0$.
\end{itemize}

Evaluating Player $1$ against Opponent $3$ ($j=3$):
\begin{itemize}
    \item $P_1 \cap Q_3 = \{h_3\} \implies$ Size $1$. $P_1 \cap P_3 = \{h_2\} \implies$ Size $1$. Thus, $\frac{\partial \mathbb{E}[C_1(P_1)]}{\partial q_3} = a(1 - 1) = 0$.
    \item $Q_1 \cap Q_3 = \emptyset \implies$ Size $0$. $Q_1 \cap P_3 = \{h_1, g_1\} \implies$ Size $2$. Thus, $\frac{\partial \mathbb{E}[C_1(Q_1)]}{\partial q_3} = a(0 - 2) = -2a$.
    \item The gradient of the gap is: $\frac{\partial \Delta_1}{\partial q_3} = 0 - (-2a) = 2a > 0$.
\end{itemize}

Because both directional derivatives are strictly positive ($2a > 0$), any infinitesimal shift by any opponent away from $q = \mathbf{0}$ strictly increases the cost gap $\Delta_i(q)$, making $P_i$ strictly more expensive than $Q_i$. Consequently, for any profile $q_{-i} \neq \mathbf{0}$ (which constitutes the entire Lebesgue measure of the opponents' mixed strategy space), $Q_i$ strictly outperforms $P_i$.\footnote{As a reminder, the cost functions are linear in the load probabilities jointly so the gradient analysis is sufficient to judge the cost gaps for joint changes in $q_j$'s, i.e., all $q_{-i} \ne 0$.}  
This mathematically guarantees that the strategy $P_i$ supporting the worst-case pure NE is strictly dominated everywhere except at the exact, measure-zero coordinate of the equilibrium itself.

\vspace{0.2cm}
\noindent \textit{Step 3: Maximum Potential} \\
Every congestion game admits an exact potential function $\Phi(s)$ that precisely mirrors the change in cost for any unilateral deviator:
$$ \Phi(s) - \Phi(s_i', s_{-i}) = C_i(s) - C_i(s_i', s_{-i}) $$
Because $y$ is constructed as a weakly strict Nash equilibrium ($\Delta_i(\mathbf{0}) = 0$), a unilateral deviation by agent $i$ to $Q_i$ yields zero marginal change in cost. Substituting this into the potential function identity yields $\Phi(y) - \Phi(Q_i, y_{-i}) = 0$. This establishes that the potential is completely flat along the $k$ edges of the hypercube connecting $y$ to the states $(Q_i, y_{-i})$. 

Crucially, by the strict dominance established in Step 2, for any state $s$ not residing on these specific edges, unilaterally moving toward the suboptimal strategy $P_i$ strictly increases the agent's expected cost. Because the potential function tracks the agent's marginal cost perfectly, moving toward $y$ strictly increases the global potential $\Phi$. Therefore, the set containing $y$ and its immediately adjacent degenerate edges constitutes the strict global maximum of the potential landscape.

\vspace{0.2cm}
\noindent \textit{Step 4: The Global Repeller} \\
Consider any continuous-time, gradient-like learning dynamic (e.g., replicator dynamics) operating on the state space. In exact potential games, the potential function $\Phi$ serves as a strict Lyapunov function for these dynamics, satisfying $\frac{d}{dt}\Phi(p(t)) \le 0$, with equality holding strictly if and only if the state $p(t)$ is a stationary fixed point. 

Let $p(0) \in (0,1)^k$ be an arbitrary interior initial condition. Since the state space $[0,1]^k$ is compact and $\Phi$ is bounded below, the sequence $\Phi(p(t))$ must converge monotonically to some limit $\Phi^*$. Let $\Omega(p(0))$ denote the $\omega$-limit set of the trajectory (the set of all accumulation points). By standard topological arguments, $\Omega(p(0))$ is a non-empty, compact, and forward-invariant set. For any accumulation point $q \in \Omega(p(0))$, the continuity of $\Phi$ guarantees that $\Phi(q) = \lim_{t \to \infty} \Phi(p(t)) = \Phi^*$. 

Because $\Omega(p(0))$ is an invariant set, any trajectory initialized at a point $q \in \Omega(p(0))$ must remain entirely within $\Omega(p(0))$ for all future times. This implies that $\Phi$ must remain perfectly constant along any such trajectory. However, the strict Lyapunov property dictates that $\Phi$ strictly decreases everywhere except at equilibria. Therefore, $\Omega(p(0))$ must consist exclusively of equilibrium points.

We now evaluate the dynamical viability of the worst-case pure Nash equilibrium $y$. Because $y$ strictly maximizes the potential function across the interior (as proven in Step 3), we have $\Phi(p(0)) < \Phi(y)$ for any interior initialization. The strict monotonic decrease of the potential function mathematically guarantees that $\Phi(p(t)) \le \Phi(p(0)) < \Phi(y)$ for all $t \ge 0$. Consequently, the entire trajectory $p(t)$ is permanently trapped in a lower contour set of $\Phi$, strictly bounded away from $y$. The state $y$, therefore, cannot be an element of the limit set $\Omega(p(0))$ for any interior trajectory. It acts as a global topological repeller, mathematically forbidding the learning dynamics from ever approaching the worst-case efficiency bound.
\end{proof}

\subsection{Proof of Theorem~\ref{thm:inverse_gt}}

Prior to the formal proof, we rigorously define the genericity condition necessary to preclude degenerate collinearity in the agents' payoff spaces. By fixing the opponents' strategy profile $s_{-i} \in S_{-i}$, we can express the cost function of agent $i$ for a pure strategy $j$ as a vector $\mathbf{c}_{i,j} \in \mathbb{R}^{|S_{-i}|}$. 

\begin{definition}[Non-Collinear Genericity]
    Let $G$ be an $N$-player game where each agent possesses $m \ge 3$ pure strategies. Let $p \in \prod_i \Delta(S_i)$ be an interior Nash equilibrium. For each agent $i$ and pure strategy $j$, define the normalized cost vector $\mathbf{c}'_{i,j} \in \mathbb{R}^{|S_{-i}|}$ such that its coordinate for opponent profile $s_{-i}$ is given by $c'_i(j, s_{-i}) = c_i(j, s_{-i}) - c_i(p_i, s_{-i})$. The game $G$ satisfies \textit{non-collinear genericity} if, for every agent $i$:
    \begin{enumerate}
        \item[(i)] No normalized cost vector is identically zero ($\mathbf{c}'_{i,j} \neq \mathbf{0}$ for all $j$).
        \item[(ii)] The set of normalized cost vectors $\{\mathbf{c}'_{i,1}, \dots, \mathbf{c}'_{i,m}\}$ spans a subspace of $\mathbb{R}^{|S_{-i}|}$ of dimension strictly greater than $1$.
    \end{enumerate}
\end{definition}

\begin{reptheorem}{thm:inverse_gt}
Let $G$ be a generic game with $N$ agents and $m \ge 3$ strategies each, and let $p$ be an interior Nash equilibrium of $G$. The equivalence class $INV_G(p)$ of games for which $p$ is a Nash equilibrium includes at least $2^{Nm}$ distinct games, no two of which are game-theoretically equivalent to each other.
\end{reptheorem}

\begin{proof}
The proof proceeds constructively in four stages: (1) extracting the normalized base game, (2) generating a combinatorial explosion of $2^{Nm}$ derivative games, (3) verifying the strict preservation of the equilibrium, and (4) proving the mutually exclusive game-theoretic non-equivalence of the entire set.

\vspace{0.2cm}
\noindent \textit{Step 1: Normalization of the Base Game} \\
Given the base game $G$ with cost functions $c_i$, we construct a strategically equivalent normalized game $G'$ by applying a \textit{strategy-independent cost offset} corresponding to the expected cost  of the equilibrium strategy $p_i$ against the opponents' profile:
$$c'_i(j, s_{-i}) = c_i(j, s_{-i}) - c_i(p_i, s_{-i})$$
where $j \in \{1, \dots, m\}$ is a pure strategy and $c_i(p_i, s_{-i}) = \sum_{k=1}^m p_i(k) c_i(k, s_{-i})$. Because the subtracted term $c_i(p_i, s_{-i})$ is strictly independent of agent $i$'s chosen action $j$, $G'$ is strategically equivalent to $G$.

By construction, the expected cost of playing the mixed strategy $p_i$ against any $s_{-i}$ in $G'$ is identically zero:
$$\sum_{j=1}^m p_i(j) c'_i(j, s_{-i}) = 0$$
Since $p$ is an interior Nash equilibrium in $G$, it remains an interior equilibrium in $G'$. A fundamental property of interior equilibria is that all pure strategies in the support must yield the exact same expected cost. Therefore, the expected cost of any pure strategy $j$ evaluated against the equilibrium profile $p_{-i}$ must also be exactly zero:
$$\mathbb{E}_{s_{-i} \sim p_{-i}}[c'_i(j, s_{-i})] = 0 \quad \mbox{for all } i, j$$

\vspace{0.2cm}
\noindent \textit{Step 2: Constructing the $2^{Nm}$ Derivative Games} \\
We establish a family of combinatorial sign assignments. For each agent $i$ and each pure strategy $j$, we assign a scalar sign $\sigma_{i,j} \in \{-1, 1\}$. Across $N$ agents and $m$ strategies, this yields $2^{Nm}$ distinct sign matrices $\sigma \in \{-1, 1\}^{N \times m}$.

For each sign matrix $\sigma$, we define a unique game $G^\sigma$ with cost functions:
$$c^\sigma_i(j, s_{-i}) = \sigma_{i,j} c'_i(j, s_{-i})$$

\vspace{0.2cm}
\noindent \textit{Step 3: Verification of Equilibrium Preservation} \\
We now verify that $p$ remains an exact interior Nash equilibrium in every derived game $G^\sigma$. We calculate the expected cost of playing pure strategy $j$ against $p_{-i}$ in $G^\sigma$:
$$\mathbb{E}_{s_{-i} \sim p_{-i}}[c^\sigma_i(j, s_{-i})] = \mathbb{E}_{s_{-i} \sim p_{-i}}[\sigma_{i,j} c'_i(j, s_{-i})] = \sigma_{i,j} \mathbb{E}_{s_{-i} \sim p_{-i}}[c'_i(j, s_{-i})]$$
Because we established in Step 1 that $\mathbb{E}_{s_{-i} \sim p_{-i}}[c'_i(j, s_{-i})] = 0$, it immediately follows that:
$$\sigma_{i,j} \cdot 0 = 0$$
Because every pure strategy yields an expected cost of exactly zero against $p_{-i}$, no agent has a strictly profitable deviation. Thus, $p$ is a valid interior Nash equilibrium across all $2^{Nm}$ games.

\vspace{0.2cm}
\noindent \textit{Step 4: Proof of Strict Non-Equivalence} \\
We demonstrate that the constructed set of $2^{Nm}$ games consists entirely of pairwise game-theoretically non-equivalent instances, thereby establishing a strict exponential lower bound on the size of the equivalence class $INV_G(p)$. Assume, for the sake of contradiction, that there exist two distinct sign matrices, $\sigma \neq \tau$, such that $G^\sigma$ and $G^\tau$ are game-theoretically equivalent.

By definition of game-theoretic equivalence, this implies that for every agent $i$, there exists a positive scalar $a_i > 0$ and a \textit{strategy-independent offset} $d_i(s_{-i})$ such that for all $j$ and all $s_{-i}$:
$$c^\sigma_i(j, s_{-i}) = a_i c^\tau_i(j, s_{-i}) + d_i(s_{-i})$$
Substituting the definition of our derivative games from Step 2 yields:
$$\sigma_{i,j} c'_i(j, s_{-i}) = a_i \tau_{i,j} c'_i(j, s_{-i}) + d_i(s_{-i})$$
Let $\gamma_{i,j} = \sigma_{i,j} - a_i \tau_{i,j}$. We can rewrite the equation as:
$$\gamma_{i,j} c'_i(j, s_{-i}) = d_i(s_{-i})$$

Crucially, the right-hand side of this equation, $d_i(s_{-i})$, is entirely independent of the chosen strategy $j$. Formulated in $\mathbb{R}^{m^{N-1}}$, this requires that the scaled vector $\gamma_{i,j} \mathbf{c}'_{i,j}$ must be identical for all pure strategies $j$. Let us denote this common vector as $\mathbf{D}_i \in \mathbb{R}^{m^{N-1}}$.

We now face two mutually exclusive cases:

\begin{itemize}
    \item \textbf{Case 1: $\mathbf{D}_i \neq \mathbf{0}$ $(\mathbf{D}_i \neq[0, 0, \dots, 0])$.} If $\mathbf{D}_i$ is a strictly non-zero vector, then the scalar multiplier $\gamma_{i,j}$ must be non-zero for all $j$. If there existed even a single strategy $j$ where $\gamma_{i,j} = 0$, the scalar multiplication would yield $\gamma_{i,j} \mathbf{c}'_{i,j} = \mathbf{0}$, forcing the common vector $\mathbf{D}_i = \mathbf{0}$ and contradicting the premise of this case. Because $\gamma_{i,j} \neq 0$ for all $j$, we can divide by it to obtain $\mathbf{c}'_{i,j} = \frac{1}{\gamma_{i,j}} \mathbf{D}_i$. This implies that the normalized payoff vectors for all pure strategies $j$ are merely scalar multiples of the single vector $\mathbf{D}_i$. Consequently, the set $\{\mathbf{c}'_{i,1}, \dots, \mathbf{c}'_{i,m}\}$ spans a 1-dimensional subspace. However, by condition (ii) of Non-Collinear Genericity for $m \ge 3$, these vectors must span a subspace of dimension strictly greater than 1. This yields a strict contradiction.
    
    \item \textbf{Case 2: $\mathbf{D}_i = \mathbf{0}$.} In this case, we have $\gamma_{i,j} \mathbf{c}'_{i,j} = \mathbf{0}$ for all $j$. By condition (i) of Non-Collinear Genericity, the normalized cost vector $\mathbf{c}'_{i,j}$ is strictly non-zero. Therefore, the scalar multiplier must be zero: $\gamma_{i,j} = 0$ for all $j$.
\end{itemize}

Since Case 1 yields a contradiction, Case 2 must hold. Thus, for all $j$:
$$\gamma_{i,j} = 0 \implies \sigma_{i,j} = a_i \tau_{i,j}$$
We are given that $\sigma_{i,j}, \tau_{i,j} \in \{-1, 1\}$ and $a_i > 0$. The only way for a strictly positive scalar to map one sign element identically to another is if $a_i = 1$. Consequently, $\sigma_{i,j} = \tau_{i,j}$ for all $i, j$.

This implies the matrices are identical ($\sigma = \tau$), which directly contradicts our initial assumption that the games were derived from distinct sign assignments. Therefore, all $2^{Nm}$ games in $INV_G(p)$ are strictly distinct and game-theoretically non-equivalent.
\end{proof}

\subsection{Proof of Theorem~\ref{prop:zero_trace}}

\begin{reptheorem}{prop:zero_trace}
In any potential game where players randomize independently, let $q^*$ be a Nash equilibrium. Let $\mathcal{S}^*$ denote the restricted state space defined by the support of $q^*$. If $q^*$ is partially or fully mixed, and generic within its support (i.e., the restricted Hessian of the potential evaluated at $q^*$ is non-zero), then $q^*$ is a strict saddle point of the potential function restricted to $\mathcal{S}^*$. 
\end{reptheorem}

\begin{proof}
Let $\Phi(q)$ be the exact expected potential function of the game. Because the agents randomize independently, $\Phi(q)$ is formed via the standard multilinear extension over the players' mixed strategy spaces. 

Geometrically, if $q^*$ is partially mixed, it resides in the relative interior of a specific face of the global probability polytope. We restrict our analysis to this face, denoted $\mathcal{S}^*$, which corresponds to the subspace spanned by the strategies in the support of $q^*$. To rigorously evaluate the local geometry within $\mathcal{S}^*$, we formulate the restricted gradient $\nabla_{\mathcal{S}^*} \Phi$ and the restricted Hessian matrix $H_{\mathcal{S}^*} = \nabla^2_{\mathcal{S}^*} \Phi(q^*)$ using strictly independent variables. 

For an arbitrary player $i$ with $m \ge 2$ strategies in their support, let $x_1, \dots, x_{m-1}$ be their independent probability variables within $\mathcal{S}^*$, such that the final support strategy is constrained as $q_{i,m} = 1 - \sum_{k=1}^{m-1} x_k$. The expected potential restricted to $\mathcal{S}^*$ can be expressed as a linear combination of the conditional potentials when player $i$ plays a pure strategy within their support:
$$ \Phi(x, q_{-i}) = \sum_{k=1}^{m-1} x_k \Phi_{i,k}(q_{-i}) + \left(1 - \sum_{k=1}^{m-1} x_k\right) \Phi_{i,m}(q_{-i}) $$
where $\Phi_{i,k}(q_{-i})$ is the expected potential conditioned on player $i$ deterministically playing strategy $k$. Regrouping the terms yields:
$$ \Phi(x, q_{-i}) = \Phi_{i,m}(q_{-i}) + \sum_{k=1}^{m-1} x_k \left[ \Phi_{i,k}(q_{-i}) - \Phi_{i,m}(q_{-i}) \right] $$

First, we establish that $q^*$ is a critical point within the restricted subspace $\mathcal{S}^*$. Taking the first derivative with respect to an arbitrary independent variable $x_k$ yields the exact marginal expected potential difference:
$$ \frac{\partial \Phi}{\partial x_k} = \Phi_{i,k}(q_{-i}) - \Phi_{i,m}(q_{-i}) $$
Because $q^*$ is a Nash equilibrium, player $i$ must be exactly indifferent between all strategies actively played in their support. Consequently, the marginal potentials evaluate to exactly the same value: $\Phi_{i,k}(q_{-i}^*) = \Phi_{i,m}(q_{-i}^*)$. Therefore, $\frac{\partial \Phi}{\partial x_k} = 0$ for all independent variables defining $\mathcal{S}^*$, mathematically guaranteeing that $\nabla_{\mathcal{S}^*} \Phi(q^*) = \mathbf{0}$. Thus, $q^*$ is a critical point of the restricted potential.

Next, we evaluate the diagonal entries of the restricted Hessian $H_{\mathcal{S}^*}$ to classify this critical point. Taking the second partial derivative with respect to $x_k$ requires differentiating the marginal difference:
$$ [H_{\mathcal{S}^*}]_{x_k x_k} = \frac{\partial^2 \Phi}{\partial x_k^2} = \frac{\partial}{\partial x_k} \left[ \Phi_{i,k}(q_{-i}) - \Phi_{i,m}(q_{-i}) \right] $$
Crucially, $\Phi_{i,k}(q_{-i})$ and $\Phi_{i,m}(q_{-i})$ depend exclusively on the opposing players' distributions. They possess strictly zero dependence on player $i$'s own mixing probabilities. Consequently, the second derivative vanishes universally: 
$$ [H_{\mathcal{S}^*}]_{x_k x_k} = 0 $$

Because this property holds for every independent probability variable defining the support for every player, the main diagonal of the restricted Hessian $H_{\mathcal{S}^*}$ is entirely populated by zeros. This mandates that the trace of the restricted Hessian is equal to zero:
$$ \text{Trace}(H_{\mathcal{S}^*}) = \sum [H_{\mathcal{S}^*}]_{x_k x_k} = 0 $$

By the spectral theorem for symmetric matrices, the trace is equal to the sum of the eigenvalues. Since $q^*$ is generic within its support ($H_{\mathcal{S}^*} \neq \mathbf{0}$), it is  impossible for the non-zero eigenvalues to sum to zero unless the restricted spectrum contains at least one strictly negative eigenvalue and at least one strictly positive eigenvalue. Because $q^*$ is a restricted critical point possessing directions of both strict positive and negative curvature within its own support manifold, it  satisfies the definition of a strict saddle within $\mathcal{S}^*$. 
\end{proof}

\section{Proofs of Claims in Section 4}
\label{app:PoA_metrics}

This appendix provides the formal proofs establishing the structural collapse of both individual rationality baselines and macroscopic efficiency metrics (Price of Anarchy and Average Price of Anarchy) in congestion games.

\subsection{Proof of Theorem~\ref{thm:not_minmax}}

\begin{reptheorem}{thm:not_minmax}[Divergence from Baseline Rationality in Balls and Bins Games]
    Consider the class of canonical balls and bins games (symmetric load balancing games with $N$ agents, $M \ge 2$ identical resources, and monomial cost functions $c(x) = x^d$ for $d \ge 1$). The set of $(\lambda, \mu)$-approximate optimal states dictated by the robust Price of Anarchy framework strictly includes states where every agent simultaneously incurs a cost arbitrarily worse---scaling as $-\Theta(N^d)$---than their absolute (correlated) min-max safety guarantee.
\end{reptheorem}

\begin{proof}
    We start the theorem proof by explicitly constructing a smooth affine congestion game where the bounds guaranteed by robust Price of Anarchy are  permissive, mathematically classifying strictly dominated outcomes as ``approximately optimal.''

    \vspace{0.2cm}
    \noindent \textit{Step 1: The Centralized Optimum and Min-Max Baseline} \\
    Consider a symmetric congestion game with $N$ agents (assume $N$ is even for simplicity) and $M=2$ identical resources. Let the cost function for each resource be the standard identity affine function, $c_e(x) = x$. 

    The centralized social optimum $s^*$ is achieved when the agents distribute themselves equally across both resources, inducing a load of $N/2$ on each. The corresponding optimal social cost is $C(s^*) = 2(N/2)^2 = N^2/2$.

    We compute the baseline individual rationality threshold (the min-max safety cost) for an arbitrary agent $i$. The adversarial coalition of $N-1$ opponents maximizes agent $i$'s minimum expected cost by uniformly distributing their mass across both resources, inducing an expected load of $(N-1)/2$ on each. Agent $i$'s best response to this symmetric environment yields an expected cost of exactly $1 + (N-1)/2 = (N+1)/2$. Thus, any rationalizable state must guarantee agent $i$ an expected cost no greater than $(N+1)/2$.

    \vspace{0.2cm}
    \noindent \textit{Step 2: Evaluation of the Robust PoA Bound} \\
    We evaluate the system under the robust Price of Anarchy framework. For affine congestion games, the canonical smoothness parameters are $(\lambda=5/3, \mu=1/3)$, yielding a robust PoA bound of $\frac{\lambda}{1-\mu} = 5/2$. Consequently, the worst-case analysis framework guarantees, and inherently accepts as approximately optimal, any state distribution $\sigma$ whose expected social cost satisfies $\mathbb{E}[C(\sigma)] \le \frac{5}{2} C(s^*) = \frac{5}{4}N^2$.

    \vspace{0.2cm}
    \noindent \textit{Step 3: The Inclusion of Perfect Miscoordination} \\
    We identify a problematic state within this sanctioned set. Let $s_{\text{crowd}}$ be the state of perfect miscoordination where all $N$ agents select the exact same resource. The social cost of this state is $C(s_{\text{crowd}}) = N^2$. Because $N^2 \le \frac{5}{4}N^2$, this state explicitly satisfies the smoothness inequality and is mathematically classified as a $(\lambda, \mu)$-approximate optimal state.

    However, at $s_{\text{crowd}}$, every single agent experiences a realized cost of $N$. Evaluating this against the individual rationality baseline yields a strict violation of the safety guarantee:
    $$ u_i^{\text{MinMaxCenter}}(s_{\text{crowd}}) = \frac{N+1}{2} - N = -\frac{N-1}{2} = -\Theta(N) $$
    Thus, the set of approximately optimal states defined by robust PoA guarantees structurally includes outcomes where every single agent suffers an efficiency loss arbitrarily worse than their absolute worst-case adversarial safety guarantee. 

\medskip
The structural slackness identified % in Theorem~\ref{thm:not_minmax} 
thus far  
is not an isolated artifact of affine costs; rather, it amplifies exponentially with the non-linearity of the game. We formalize this divergence below.

% \begin{proposition}[Amplification of Structural Slackness]
% \label{prop:polynomial_amplification}
%     In smooth congestion games with polynomial cost functions of degree $d \ge 1$, the set of $(\lambda, \mu)$-approximate optimal states includes outcomes where the violation of the min-max individual rationality baseline scales as $-\Theta(N^d)$.
% \end{proposition}

% \begin{proof}

We extend the construction above 
%of Theorem~\ref{thm:not_minmax} 
to a symmetric congestion game with $N$ agents and $M=2$ identical resources, where the latency function is given by the monomial $c(x) = x^d$.

    The centralized social optimum $s^*$ again equidistributes the agents, yielding a social cost of $C(s^*) = 2(N/2)^{d+1} = N^{d+1}/2^d$.

    Next, we compute the exact min-max safety cost for an arbitrary agent $i$. The coalition of $N-1$ opponents seeks to choose a joint distribution over the resources to maximize the minimum expected cost of agent $i$. By the strict convexity of $x^d$ for $d > 1$, the variance of the load strictly increases the expected cost. Therefore, the absolute worst-case adversarial strategy is to perfectly correlate and assign all $N-1$ opponents to a single resource, chosen uniformly at random. Agent $i$'s best response to this adversarial environment is to select either resource, encountering $N-1$ opponents with probability $1/2$ and $0$ opponents with probability $1/2$. This establishes the exact min-max safety cost:
    $$ c_i(\text{minmax}^*_i, s_{-i}) = \frac{1}{2}(N^d) + \frac{1}{2}(1^d) = \frac{N^d+1}{2} $$

    We now evaluate the state of perfect miscoordination, $s_{\text{crowd}}$, where all $N$ agents select the exact same resource. The social cost of this state is $C(s_{\text{crowd}}) = N^{d+1}$. The ratio of this cost to the social optimum is exactly $C(s_{\text{crowd}}) / C(s^*) = 2^d$. 

    It is a standard result in the Price of Anarchy literature that the robust PoA bound $\Phi_d$ for atomic congestion games with polynomial costs of degree $d$ scales as $\Phi_d = \Theta\left(\frac{d}{\ln d}\right)^d$ \cite{christodoulou2005price, aland2011exact}. Because $\Phi_d$ strictly dominates $2^d$ for all $d \ge 1$ (e.g., $\Phi_1 = 2.5 > 2$, $\Phi_2 \approx 9.6 > 4$), the state $s_{\text{crowd}}$ strictly satisfies the robust PoA inequality:
    $$ \mathbb{E}[C(s_{\text{crowd}})] \le \Phi_d C(s^*) $$
    Consequently, $s_{\text{crowd}}$ is mathematically validated by the worst-case framework as a $(\lambda, \mu)$-approximate optimal state. Yet, evaluating the individual performance of the agents at this state yields:
    $$ u_i^{\text{MinMaxCenter}}(s_{\text{crowd}}) = \frac{N^d+1}{2} - N^d = -\frac{N^d - 1}{2} = -\Theta(N^d) $$
    Thus, as the degree of the cost functions increases, the analytical bounds provided by robust PoA become exponentially decoupled from the threshold of baseline individual rationality.
\end{proof}

\subsection{Proof of Theorem~\ref{lem:poisson_binomial_minmax}}

To rigorously evaluate the min-max safety cost subject to independent randomization (as claimed in Section~\ref{sec:minmax_regret}), we must first formalize the exact probabilistic nature of the load induced by an adversarial coalition.

\begin{lemma}[Poisson Binomial Opponent Load]
\label{lem:opponent_load_distribution}
    Let $\pi_{-i} \in \prod_{j \neq i} \Delta(M)$ be an arbitrary mixed strategy profile for the $N-1$ opponents of agent $i$ in an $M$-bin congestion game. For any fixed bin $e \in [M]$, let $X_e$ denote the total load induced on bin $e$ by the opponents. Then $X_e$ follows a Poisson Binomial distribution parameterized by $\mathbf{p}_e = (p_{j,e})_{j \neq i}$, where $p_{j,e}$ is the probability that agent $j$ selects bin $e$.
\end{lemma}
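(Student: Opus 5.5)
The plan is to write $X_e$ as a sum of independent Bernoulli indicators and then read off its law directly from the definition of the Poisson Binomial distribution.

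First, fix agent $i$ and a bin $e \in [M]$. For every opponent $j \neq i$, define the indicator $B_{j,e} = \mathbf{1}\{s_j = e\}$, where $s_j \sim \pi_j$ is the pure action sampled from $j$'s mixed strategy. Then $B_{j,e}$ is Bernoulli with success probability $p_{j,e} = \pi_j(e)$. By the definition of the load, the number of opponents on bin $e$ is
\[
X_e = \sum_{j \neq i} B_{j,e}.
\]

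The only substantive step is independence. The profile $\pi_{-i}$ is a product measure in $\prod_{j\neq i}\Delta(M)$, so the actions $(s_j)_{j\neq i}$ are mutually independent random variables. Each $B_{j,e}$ is a measurable function of $s_j$ alone, so the family $\{B_{j,e}\}_{j\neq i}$ is also mutually independent. A sum of $N-1$ independent, not necessarily identically distributed, Bernoulli variables with parameters $(p_{j,e})_{j\neq i}$ is by definition Poisson Binomial with parameter vector $\mathbf{p}_e$. This gives the claim.

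For use in Theorem~\ref{lem:poisson_binomial_minmax}, I will also record the explicit probability mass function,
\[
\Pr[X_e = k] = \sum_{A \subseteq I\setminus\{i\},\ |A| = k} \ \prod_{j\in A} p_{j,e} \prod_{j\notin A,\, j\neq i} (1-p_{j,e}),
\]
together with the mean $\sum_j p_{j,e}$ and the variance $\sum_j p_{j,e}(1-p_{j,e})$.

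The main point to be careful about is stating clearly that independence comes from the product structure of $\pi_{-i}$. The lemma would fail for correlated opponents, and that failure is exactly why the independent min-max differs from the correlated min-max used in Theorem~\ref{thm:not_minmax}.
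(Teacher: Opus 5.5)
Your proposal is correct and follows the paper's proof essentially step for step. You define the indicators $B_{j,e}=\mathbf{1}\{s_j=e\}$, get their mutual independence from the product structure of $\pi_{-i}$, and identify $X_e=\sum_{j\neq i}B_{j,e}$ as Poisson Binomial by definition; the mass function, mean, and variance you record are correct additions that the lemma itself does not need.
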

\begin{proof}
    By the definition of a mixed strategy profile in a normal-form game, the agents randomize their actions independently. Let $s_j \in [M]$ denote the pure strategy realized by agent $j$. The joint probability measure over the opponents' pure action profiles $s_{-i} \in [M]^{N-1}$ is the exact product measure: $\mathbb{P}[s_{-i}] = \prod_{j \neq i} p_{j,s_j}$.
    
    For each opponent $j \neq i$, define the indicator random variable $B_{j,e} = \mathbb{I}\{s_j = e\}$. Under the product measure, the sequence of variables $\{B_{j,e}\}_{j \neq i}$ is mutually independent. The expected value of each indicator is $\mathbb{E}[B_{j,e}] = \mathbb{P}[s_j = e] = p_{j,e}$. Thus, $B_{j,e} \sim \text{Bernoulli}(p_{j,e})$.
    
    The total opponent load on bin $e$ is the algebraic sum of these indicators: $X_e = \sum_{j \neq i} B_{j,e}$. Because $X_e$ is the sum of $N-1$ mutually independent Bernoulli random variables, it strictly satisfies the definition of a Poisson Binomial random variable. Thus, $X_e \sim \text{PB}(\mathbf{p}_e)$.
\end{proof}

\begin{reptheorem}{lem:poisson_binomial_minmax}[Independent Min-Max Equivalence]
    In symmetric balls and bins games with $N$ agents, $M$ bins, and convex monomial costs $c(x) = x^d$ ($d \ge 1$), the expected cost evaluated at the fully mixed interior Nash equilibrium is equal to the independent min-max safety cost.
\end{reptheorem}
\begin{proof}
    Consider an agent $i$ facing $N-1$ opponents. The adversarial coalition selects independent mixed strategies $\mathbf{p}_j \in \Delta(M)$ for each opponent $j \neq i$ to maximize agent $i$'s minimum expected cost across all bins. The adversary's objective is to maximize:
    $$ \min_{e \in [M]} C_e(\mathbf{p}) \quad \text{where} \quad C_e(\mathbf{p}) = \mathbb{E}\left[\left(1 + X_e\right)^d\right] $$
    
    For any fixed bin $e$, let $\mu_e = \mathbb{E}[X_e] = \sum_{j \neq i} p_{j,e}$ denote the expected opponent load. Because each of the $N-1$ opponents must allocate exactly $1$ unit of probability mass across the $M$ bins, the sum of the expected loads is strictly conserved: $\sum_{e=1}^M \mu_e = N-1$. By the Pigeonhole Principle, the minimum expected load across the bins is strictly bounded: $\min_{e \in [M]} \mu_e \le \frac{N-1}{M}$.

    We now evaluate the adversary's objective by bifurcating the analysis based on the degree of the latency function.

    \textbf{Case 1: Linear Costs ($d = 1$).}
    By the linearity of expectation, the variance of the load distribution is strictly irrelevant, yielding $C_e(\mathbf{p}) = \mathbb{E}[1 + X_e] = 1 + \mu_e$. The adversary's objective simplifies exactly to maximizing $1 + \min_{e \in [M]} \mu_e$. Applying the bounding inequality established above, the maximum safety cost the adversary can inflict is strictly bounded above by $1 + \frac{N-1}{M}$.

    \textbf{Case 2: Strictly Convex Monomial Costs ($d \ge 2$).}
    By Lemma~\ref{lem:opponent_load_distribution}, the load $X_e \sim \text{PB}(\mathbf{p}_e)$. By Hoeffding's Theorem (1956) on the extrema of expected values of convex functions of independent Bernoulli sums, the expectation $C_e(\mathbf{p})$ is strictly Schur-concave with respect to the probability vector $\mathbf{p}_e$. Consequently, for any fixed mean $\mu_e$, the expectation $C_e(\mathbf{p})$ is strictly upper-bounded by evaluating the expectation under a homogeneous Binomial distribution with the identical mean:
    $$ C_e(\mathbf{p}) \le \mathbb{E}\left[(1 + Y(\mu_e))^d\right] := g(\mu_e) $$
    where $Y(\mu_e) \sim \text{Binomial}\left(N-1, \frac{\mu_e}{N-1}\right)$. 

    To evaluate the adversary's objective, we establish the monotonic behavior of the expected cost. For any two probabilities $p < p'$, the distribution $\text{Binomial}(N-1, p')$ strictly first-order stochastically dominates (FOSD) the distribution $\text{Binomial}(N-1, p)$. Because the latency function $(1+x)^d$ is strictly monotonically increasing for $x \ge 0$, its expectation under the Binomial distribution strictly increases with the success probability. Consequently, the composite function $g(\mu_e)$ is strictly monotonically increasing with respect to the mean $\mu_e$.

    We bound the adversary's objective using a direct chain of inequalities. Because $g$ is strictly increasing:
    $$ \min_{e \in [M]} C_e(\mathbf{p}) \le \min_{e \in [M]} g(\mu_e) = g\left(\min_{e \in [M]} \mu_e\right) \le g\left(\frac{N-1}{M}\right) $$
    
    The upper bound $g\left(\frac{N-1}{M}\right)$ represents the expected cost when the opponent load on the target bin is distributed according to $\text{Binomial}(N-1, 1/M)$. By the conditions of equality in Hoeffding's Theorem, this exact maximum is achieved if and only if the adversary sets the success probabilities to be perfectly homogeneous: $p_{j,e} = 1/M$ for all $j, e$.

    Thus, in both regimes, the absolute maximum of the adversary's objective is perfectly achieved by the homogeneous profile $p_{j,e} = 1/M$ for all $j$ and $e$. We note that this specific uniform profile is identically the fully mixed interior Nash equilibrium of the game.\footnote{For the linear case ($d=1$), this uniform profile is easily verified as the uniquely possible fully mixed equilibrium: an agent $i$ is fully mixed if and only if they are indifferent among all $M$ bins, forcing the expected opponent loads to be identical ($\mu_e^{(i)} = \frac{N-1}{M}$). Substituting $\mu_e^{(i)} = T_e - p_{i,e}$ mathematically forces $p_{i,e}$ to be identical across all agents, yielding the unique solution $p_{i,e} = 1/M$. For $d \ge 2$, the uniqueness of this profile as the min-max strategy is directly guaranteed by the strict equality conditions of Hoeffding's Theorem established in Case 2.} Therefore, the adversary's optimal independent min-max strategy mathematically collapses to the interior Nash equilibrium, proving the exact equivalence for all $d \ge 1$.
\end{proof}

\subsection{Proof of Theorem~\ref{thm:congestion_not_smooth}}

\begin{reptheorem}{thm:congestion_not_smooth}
The Price of Anarchy of affine congestion games is unbounded even when restricting to affine costs with arbitrarily small positive slopes. Moreover, the Pure Price of Anarchy of general affine congestion games is unbounded. This holds even for minimal games comprising exactly two agents, two strategies, and four congested elements, where all congested elements possess positive, non-decreasing affine latency functions on the active domain.
\end{reptheorem}

\begin{proof}
    We establish the theorem constructively by mapping the two minimal normal-form matrices presented in Section~\ref{sec:poa_fragility} to exact, physically realizable affine congestion games.

    \vspace{0.2cm}
    \noindent \textit{Step 1: Unbounded Pure Price of Anarchy} \\
    To realize the first payoff matrix, we define a congestion game with $N=2$ agents and a ground set of four resources: $E = \{e_{AA}, e_{AB}, e_{BA}, e_{BB}\}$. The strategy set for the row agent (Agent 1) is $S_1 = \{\{e_{AA}, e_{AB}\}, \{e_{BA}, e_{BB}\}\}$. The strategy set for the column agent (Agent 2) is $S_2 = \{\{e_{AA}, e_{BA}\}, \{e_{AB}, e_{BB}\}\}$. 

    We define the affine latency functions $c(x) = ax + b$ such that the costs at loads $x \in \{1, 2\}$ are given by:
    \begin{align*}
        c_{AA}(1) &= c_{AB}(1) = c_{BA}(1) = c_{BB}(1) = \delta \\
        c_{AA}(2) &= w + 1 - \delta \\
        c_{AB}(2) &= w - \delta \\
        c_{BA}(2) &= 1 - \delta \\
        c_{BB}(2) &= w + 1 - \delta
    \end{align*}
    To ensure all latency functions are strictly monotonically increasing on the physical domain ($x \ge 1$), we require $c_e(2) > c_e(1) > 0$ for all $e$. Solving this system of inequalities yields the single strict parameter condition: $0 < 2\delta < 1 < w$. Provided this condition holds, all elements exhibit positive, strictly increasing affine latency functions. 
    
    By mapping the agents' action profiles to the resulting congestion on these elements, we exactly recover the first normal-form matrix. The optimal pure Nash equilibrium resides at $(B,A)$ with a social cost of exactly $2$. The worst-case pure Nash equilibrium resides at $(A,B)$ with a social cost of exactly $2w$. Consequently, the Pure Price of Anarchy is exactly $w$. Since $w$ can be set arbitrarily large, the Pure PoA is strictly unbounded, irrespective of the non-decreasing physical properties of the latencies.

    \vspace{0.2cm}
    \noindent \textit{Step 2: Unbounded Mixed Price of Anarchy (Small Slopes)} \\
    To realize the second payoff matrix, we define a completely symmetric congestion game with $N=2$ agents and exactly two resources: $E = \{e_1, e_2\}$. Both agents share the identical strategy set $S_1 = S_2 = \{\{e_1\}, \{e_2\}\}$. 

    Both resources share an identical affine cost function defined by:
    $$ c(x) = (\epsilon - \epsilon^2)x - \epsilon + 2\epsilon^2 $$
    for some parameter $0 < \epsilon < 1/2$. Because $\epsilon < 1/2$, the slope $(\epsilon - \epsilon^2)$ is strictly positive, ensuring the latency function is strictly monotonically increasing. Evaluated at integer loads, the costs are $c(1) = \epsilon^2$ and $c(2) = \epsilon$. Because $\epsilon > \epsilon^2 > 0$, the costs are strictly positive everywhere on the active domain. Notably, the intercept $(- \epsilon + 2\epsilon^2)$ is strictly negative.

    This maps perfectly to the second normal-form matrix. The pure Nash equilibria are the coordinated states $(A,B)$ and $(B,A)$, yielding the optimal social cost of $2\epsilon^2$. However, the fully mixed interior Nash equilibrium (where both agents randomize uniformly) yields an expected social cost of $\epsilon + \epsilon^2$. The Mixed Price of Anarchy is therefore exactly $\frac{\epsilon+ \epsilon^2}{2\epsilon^2} = \frac{1+ \epsilon}{2\epsilon}$. 
    
    By taking the limit as $\epsilon \to 0^+$, the positive slope $(\epsilon - \epsilon^2)$ becomes arbitrarily small (approaching zero), yet the Mixed Price of Anarchy diverges to $+\infty$. This mathematically proves that classical PoA bounds collapse entirely when negative intercepts are permitted, even for arbitrarily flat, everywhere-positive latency functions.
\end{proof}

\subsection{Proof of Theorem~\ref{thm:ergodic_good}}

\begin{reptheorem}{thm:ergodic_good}
    For the first parametric family of affine congestion games yielding unbounded PoA (Figure~\ref{fig:payoff_matrices}), under a uniform Lebesgue measure prior, the expected limit social cost is at most $2$ times the optimal social cost, with instances reaching an average-case inefficiency of at least $1+\frac{1}{27}(18-2\sqrt{3}\pi) \approx 1.2636$. For the second family, given any prior absolutely continuous with respect to the Lebesgue measure, the average-case performance strictly equals the optimal social cost. 

    Conversely, by linearly translating the costs of the first family of games, we can construct affine congestion games with two agents and two strategies where, under a uniform Lebesgue prior, the average-case system performance becomes arbitrarily worse than the optimal social welfare. 
\end{reptheorem}

\begin{proof}
    We establish the exact Average Price of Anarchy by computing the Lebesgue measure of the regions of attraction for the pure Nash equilibria. We proceed by establishing a universal geometric upper bound for all $w$, followed by an exact measure-theoretic integration for $w=2$.

    \vspace{0.2cm}
    \noindent \textit{Step 1: Volume-Preserving Diffeomorphism} \\
    Let $x$ and $y$ denote the probability that the row and column agents, respectively, play their first strategy in the $2 \times 2$ affine congestion game (Family 1 from Figure~\ref{fig:payoff_matrices}). The system admits an interior Nash equilibrium at $(x, y) = (\frac{w}{w+1}, \frac{1}{w+1})$. 
    
    To symmetrically align the phase space, we apply the change of variables $x' = x$ and $y' = 1-y$. The Jacobian determinant of this transformation is identically $-1$, ensuring the mapping is a volume-preserving diffeomorphism. The Lebesgue measure of any region of attraction in the transformed space is strictly identical to its measure in the original game. This transformation maps the congestion game into a classic $w$-coordination game, possessing two strict pure Nash equilibria at $(0,0)$ and $(1,1)$, and a symmetric interior saddle at $(\frac{w}{w+1}, \frac{w}{w+1})$. 

    \vspace{0.2cm}
    \noindent \textit{Step 2: Universal Upper Bound via Forward-Invariant Triangles} \\
    To establish the universal upper bound of $2$ on the Average Price of Anarchy for this family, we reproduce a geometric bounding argument originally developed by Panageas and Piliouras~\cite{panageas2016average}. We include this analysis to ensure our measure-theoretic evaluation remains rigorously self-contained.

    In the transformed coordinates, the socially optimal pure Nash equilibrium resides at $(0,0)$ with a normalized social cost of $1$, while the suboptimal pure Nash equilibrium resides at $(1,1)$ with a cost of $w$. Consider the two geometric triangles formed by the vertices $\{(0,0), (1,0), (\frac{w}{w+1}, \frac{w}{w+1})\}$ and $\{(0,0), (0,1), (\frac{w}{w+1}, \frac{w}{w+1})\}$. Evaluation of the replicator vector field along the boundaries of these triangles confirms they are strictly forward-invariant. Because the interior Nash equilibrium is a strict saddle (and thus its stable manifold has Lebesgue measure zero), the interior of these triangles resides entirely within the basin of attraction of the optimal equilibrium $(0,0)$.

    The combined Lebesgue measure (area) of these two triangles is exactly $\frac{w}{w+1}$. By pessimistically assigning the maximum possible cost $w$ to the entire remaining state space $1 - \frac{w}{w+1} = \frac{1}{w+1}$, we obtain a strict upper bound on the expected social cost under a uniform prior:
    $$ \text{APoA} \le \left(\frac{w}{w+1} \times 1\right) + \left(\frac{1}{w+1} \times w\right) = \frac{2w}{w+1} $$
    For all $w \ge 1$, we have $\frac{2w}{w+1} < 2$. This formally guarantees that the Average PoA is strictly bounded by 2 under the standard algebraic constraints.

    \vspace{0.2cm}
    \noindent \textit{Step 3: The Exact Invariant Manifold for $w=2$} \\
    To prove the lower bound on inefficiency, we evaluate the specific instance where $w=2$. The continuous-time replicator equations in the transformed coordinates are given by $\dot{x} = x(1-x)(3y - 2)$ and $\dot{y} = y(1-y)(3x - 2)$. By dividing the differential equations, the time dependency is strictly eliminated, yielding the separable ordinary differential equation:
    $$ \frac{3x-2}{x(1-x)} dx = \frac{3y-2}{y(1-y)} dy $$
    Integrating both sides by partial fractions explicitly constructs the time-invariant function $I(x,y)$ that remains strictly constant along any trajectory:
    $$ I(x,y) = -2\ln(x) - \ln(1-x) + 2\ln(y) + \ln(1-y) = C $$
    The stable manifold (the separatrix) of the interior saddle at $(2/3, 2/3)$ must satisfy $I(2/3, 2/3) = 0$. Consequently, the exact algebraic equation defining the separatrix that partitions the state space is $x^2(1-x) = y^2(1-y)$. Solving this algebraic curve for $y$ yields the closed-form boundary of the basin of attraction:
    $$ y(x) = \frac{1}{2}\left(1 - x + \sqrt{1 + 2x - 3x^2}\right) $$

    \begin{figure}[ht!]
        \centering
        \includegraphics[width=0.55\textwidth]{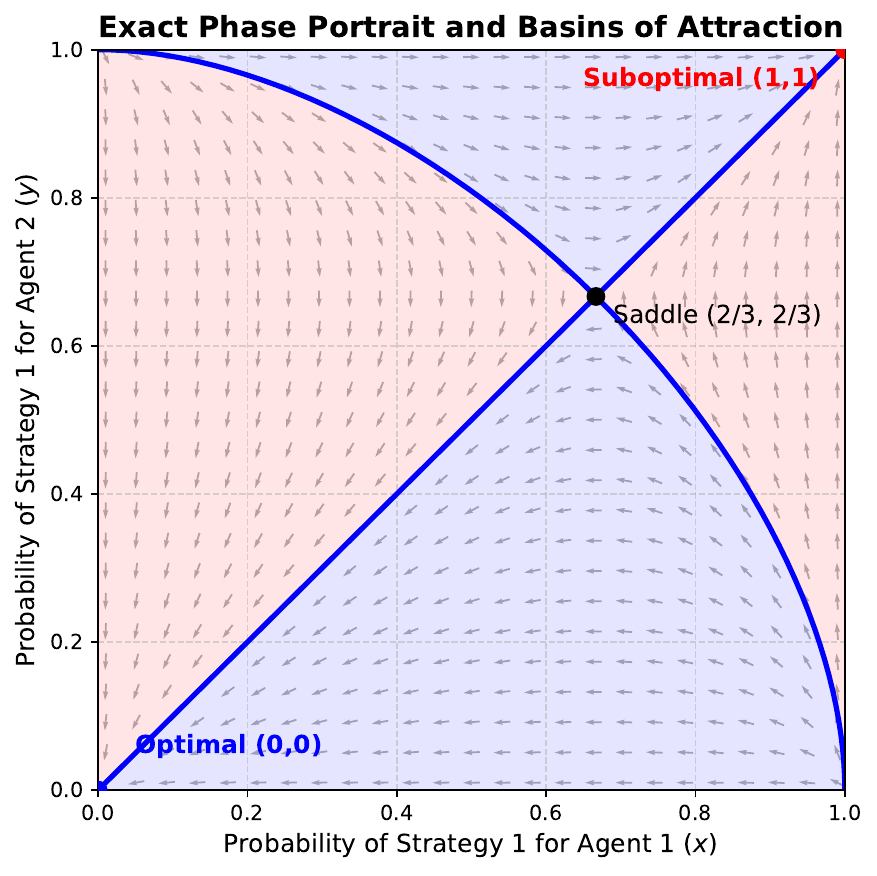}
        \caption{\textbf{Exact Measure-Theoretic Basins of Attraction.} The simulated phase portrait of the replicator vector field after the volume-preserving transformation. The blue curve represents the mathematically exact invariant stable manifold $x^2(1-x) = y^2(1-y)$ defining the separatrix. Because the saddle point is positioned at $(2/3, 2/3)$, it severely constricts the basin of the high-potential (worst-case) equilibrium at $(1,1)$. Trajectories originating below the separatrix converge to the socially optimal equilibrium at $(0,0)$. The forward-invariant triangles described in Step 2 are strictly contained within the optimal basin.}
        \label{fig:separatrix_apoa}
    \end{figure}

    \vspace{0.2cm}
    \noindent \textit{Step 4: Measure-Theoretic Integration} \\
    Under a uniform Lebesgue measure prior over initial conditions, the probability of converging to the socially optimal equilibrium $(0,0)$ is strictly equal to the area under the separatrix $y(x)$. Integrating this closed-form expression over the unit interval evaluates exactly to:
    $$ \int_0^1 \frac{1}{2}\left(1 - x + \sqrt{1 + 2x - 3x^2}\right) dx = \frac{1}{27}\left(9 + 2\sqrt{3}\pi\right) \approx 0.7364 $$
    Consequently, the region of attraction for the worst-case, high-potential Nash equilibrium at $(1,1)$ has a Lebesgue measure of $1 - 0.7364 = 0.2636$. 

    \vspace{0.2cm}
    \noindent \textit{Step 5: Average Price of Anarchy Evaluation} \\
    Let the optimal social cost at $(0,0)$ be normalized to $1$. In this configuration, the social cost of the worst-case Nash equilibrium at $(1,1)$ evaluates to $w = 2$. The Average Price of Anarchy is the expected social cost evaluated over the Lebesgue measure of the respective attractors:
    $$ \text{APoA} = (0.7364 \times 1) + (0.2636 \times 2) = 1.2636 $$
    The strict analytical evaluation of the integral guarantees this average-case inefficiency bound is robust and exact.
    
    For the second parametric family in Figure~\ref{fig:payoff_matrices}, an identical analysis reveals that the mixed Nash equilibrium possesses a region of attraction of exactly Lebesgue measure zero. Consequently, almost all initial conditions converge to the optimal pure Nash equilibrium, yielding an APoA of strictly $1$.

    \vspace{0.2cm}
    \noindent \textit{Step 6: The Collapse of APoA under Data-Driven Relaxations} \\
    Finally, we establish that Average PoA is structurally unbounded when the non-negativity constraint on affine intercepts is relaxed. Consider the first family of games, and subtract a uniform constant $1-\epsilon$ from all payoff entries. 

    To formally execute the uniform translation of all payoff entries by $1-\epsilon$, we apply a uniform constant shift to the underlying affine latency functions. Because every strategy in the constructed congestion game utilizes exactly two resources, subtracting a constant $C = \frac{1-\epsilon}{2}$ from each resource cost exactly subtracts $1-\epsilon$ from the total cost of any strategy profile. We define the translated resource costs as $c_e'(x) = c_e(x) - \frac{1-\epsilon}{2}$. 

    To guarantee that the resulting latency functions remain strictly positive and strictly monotonically increasing on the physical domain ($x \ge 1$), we constrain the free parameter $\delta$. By choosing $\delta = \frac{1}{2} - \frac{\epsilon}{4}$, we satisfy both requirements simultaneously: the base cost evaluates to $c_e'(1) = \delta - C = \frac{\epsilon}{4} > 0$, ensuring strict positivity, and the minimum marginal cost increment evaluates to $c_{BA}'(2) - c_{BA}'(1) = 1 - 2\delta = \frac{\epsilon}{2} > 0$, preserving strict monotonicity. Consequently, this transformation yields a mathematically valid, everywhere-positive affine congestion game.

    Crucially, the continuous-time replicator vector field is invariant to uniform translations of the payoff matrix. Because the vector field is identical, the exact invariant manifold $I(x,y)$, the separatrix $y(x)$, and the resulting Lebesgue measures of the basins of attraction ($0.7364$ and $0.2636$) remain perfectly unchanged. 
    
    However, the social cost metric is strictly non-invariant to this translation. After the translation, the cost of the socially optimal state approaches $\epsilon$, while the cost of the suboptimal state approaches $1+\epsilon$. As $\epsilon \to 0^+$, the ratio of the suboptimal cost to the optimal cost diverges to $+\infty$. Because the suboptimal state retains a strictly positive basin of attraction ($0.2636$), the expected social cost evaluated over the Lebesgue prior diverges to $+\infty$. 
\end{proof}

\section{Proof of Theorem~\ref{thm:not_rationalizable}}
\label{app:CCE}

To formally establish Theorem~\ref{thm:not_rationalizable}, we prove two fundamental structural lemmas. The first establishes the existence of non-rationalizable Coarse Correlated Equilibria (CCEs) fully supported on strictly dominated strategies in smooth games. The second establishes that strongly uncoupled learning dynamics cannot prevent convergence to these arbitrary target CCEs. 

\begin{lemma}[Non-Rationalizable CCEs in Smooth Games]
\label{lem:app_unnatural_cce}
    There exist smooth games (e.g., linear congestion games and valid utility games) possessing strong coarse correlated equilibria (CCE) where all agents assign strictly positive probability exclusively to strictly dominated strategies. Furthermore, there exist extreme CCEs (vertices of the feasible CCE polytope) exhibiting this exact property.
\end{lemma}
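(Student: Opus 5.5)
The plan is to construct explicit small instances and verify everything by direct computation. We handle linear congestion games first, then valid utility games with a basic utility system, and finally the extreme-point claim.

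\textbf{Obstacle to avoid.} A strictly dominated strategy can never receive all of a player's mass in a CCE. If player $i$ always plays $d$, dominated by $s'$, then summing $c_i(d,s_{-i})>c_i(s',s_{-i})$ against the opponents' marginal $\pi_i$ violates the CCE inequality for the deviation $s'$. So each player must mix over at least two dominated strategies, $A'$ and $B'$, dominated by $A$ and $B$ respectively. We then correlate play so that each unconditional deviation is punished. We will use two players, each with the four strategies $\{A,B,A',B'\}$. The candidate distribution is
\[
\pi=\tfrac12\,\mathbb{1}_{(A',B')}+\tfrac12\,\mathbb{1}_{(B',A')},
\]
an anti-coordinated profile that is natural in congestion settings.

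\textbf{Congestion construction.} Take resources $r_1,r_2$ with steep slope, and private cheap ``padding'' resources with nonnegative affine costs. Set $A=\{r_1\}\cup P_A$, $A'=\{r_1\}\cup P_{A'}$, and similarly for $B$ and $B'$. The paddings and the cross-overlaps (for example, $A$ sharing a resource with $B'$ but $A'$ not doing so) are tuned so that two things hold simultaneously:
\begin{itemize}
\item $c_i(A',x)>c_i(A,x)$ and $c_i(B',x)>c_i(B,x)$ for every opponent action $x$, which is strict dominance;
\item $\tfrac12[c_i(A',B')+c_i(B',A')]<\tfrac12[c_i(s,A')+c_i(s,B')]$ for each fixed deviation $s\in\{A,B,A',B'\}$, which gives a \emph{strong} CCE.
\end{itemize}
The mechanism is that $A$ is cheap against everything except that it collides heavily with whatever the opponent plays under $\pi$ on half the rounds. $A'$ avoids those collisions only along the correlated path, yet pays a small extra padding cost everywhere, which keeps it dominated.

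Setting this up gives a small linear system in the slopes and intercepts. I would parameterize it, reduce the requirements to a handful of linear inequalities, exhibit one feasible point with $a_e,b_e\ge 0$, and check all the inequalities. Smoothness then comes for free, since any affine congestion game with nonnegative coefficients is $(5/3,1/3)$-smooth, as recalled in Section~\ref{sec:background}.

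\textbf{Valid utility games.} Here I would copy the same combinatorial pattern with a modular $V$ and marginal-contribution utilities $u_i(s)=\sum_{e\in s_i\setminus U(s_{-i})}v(e)$. In this setting a collision destroys exclusive value, and that loss plays the role of congestion cost. Both validity inequalities then hold automatically.

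\textbf{Extreme point.} I would show that $\pi$ is a vertex of the CCE polytope. The active constraints at $\pi$ are nonnegativity of the 14 off-support entries plus the normalization $\sum\pi=1$. Because every CCE inequality is strict, a small perturbation can only move mass between $(A',B')$ and $(B',A')$. To pin down the weight $1/2$, I would make one deviation inequality tight for this purpose, or alternatively use the asymmetry of the two profiles' costs so that an extra equality is forced. If tightness is used, the CCE is no longer strong in that constraint. So I would instead argue the vertex property for a slightly different distribution obtained by tilting the weights until exactly one deviation constraint binds, and keep the uniform $\pi$ for the strong-CCE claim.

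\textbf{Main obstacle.} I expect the hardest step to be finding parameters, within the rigid structure of nonnegative affine congestion costs, that simultaneously give strict dominance for all opponent actions and strict slack in every CCE deviation. If the two-player version proves infeasible, I would first try adding a third player, or dummy loads on the padding resources, to gain the needed degrees of freedom.
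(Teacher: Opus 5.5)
Your existence construction is the paper's. It uses two players with strategies $\{A,A',B,B'\}$, dominated strategies obtained by padding, and the anti-coordinated distribution with mass $\tfrac12$ on each of $(A',B')$ and $(B',A')$. Your observation that no player can put all its mass on a single dominated strategy is exactly why the paper mixes over two.

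The parameter search you flag as the main obstacle is immediate. It needs no cross-overlaps, dummy loads, or third player. Take $A=\{e_1\}$, $A'=\{e_1,e_d\}$, $B=\{e_2\}$, $B'=\{e_2,e_d\}$, with $c_{e_1}(x)=c_{e_2}(x)=x$ and $c_{e_d}\equiv\epsilon$. Each player pays $1+\epsilon$ under $\pi$. Deviating to $A$ or $B$ costs $3/2$, and deviating to $A'$ or $B'$ costs $3/2+\epsilon$. So $\pi$ is a strong CCE for $0<\epsilon<1/2$.

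For valid utility games the pattern must be mirrored, not copied. With a modular $V\ge 0$, an extra exclusively held element can only raise utility, so a padded strategy dominates rather than being dominated. The dominated strategy is therefore the one \emph{lacking} a private element of value $\epsilon$. The paper uses $\{c_j\}$ dominated by $\{a_j,c_j\}$, with $v(c_j)=1$ and the same anti-coordination. This gives utility $1$ against deviation payoffs $\tfrac12$ or $\tfrac12+\epsilon$.

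Where you genuinely differ is the extreme-point claim. The paper argues non-constructively. It maximizes the linear functional ``total mass on profiles where every agent plays a dominated strategy'' over the CCE polytope. The strong CCE shows the optimum is $1$, and a linear optimum over a polytope is attained at a vertex. This needs no computation and transfers verbatim to the valid-utility instance.

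Your tilting argument instead exhibits the vertex. In the instance above, let $\pi_t$ put mass $t$ on $(A',B')$ and $1-t$ on $(B',A')$. The eight CCE constraints reduce to $\epsilon\le t\le 1-\epsilon$. At $t=1-\epsilon$, the $14$ off-support nonnegativity constraints, the normalization, and a binding deviation constraint give $16$ linearly independent active constraints. Independence holds because the binding constraint's coefficients on the two support profiles differ: its slack is positive at $t=\tfrac12$ and zero at $t=1-\epsilon$. So $\pi_{1-\epsilon}$ is a vertex, and it still gives both dominated strategies of each agent positive weight.

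This buys an explicit extreme CCE. It also exposes something the paper's proof leaves implicit. A vertex supported on two or more profiles must have a binding deviation constraint, and a point mass on an all-dominated profile is never a CCE. Hence the extreme CCE cannot be strong, and ``this exact property'' in the lemma can only refer to the support condition.

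You should drop two of your initial ideas: that the uniform $\pi$ is itself a vertex, and that cost asymmetry could force an extra equality. With every CCE constraint slack, $\pi$ lies in the relative interior of an edge of the polytope.
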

\begin{proof}
    We prove this lemma constructively for both canonical classes of smooth games.

    \vspace{0.2cm}
    \noindent \textit{Step 1: Construction in Linear Congestion Games} \\
    We define a symmetric congestion game with two agents and four strategies per agent. The game comprises three distinct resources. Resources $e_1$ and $e_2$ have identical affine latency functions $c(x) = x$. The third resource, $e_d$, acts as a fixed-cost penalty with $c(x) = \epsilon$, where $0 < \epsilon < 1/2$. The strategy set for both agents is $S = \{ \{e_1\}, \{e_1, e_d\}, \{e_2\}, \{e_2, e_d\} \}$. 

    By construction, strategy $\{e_1, e_d\}$ is strictly dominated by $\{e_1\}$, and $\{e_2, e_d\}$ is strictly dominated by $\{e_2\}$, as the inclusion of $e_d$ incurs a strictly positive additional cost regardless of the opponent's action. We propose the correlated distribution $\pi$ that assigns $1/2$ probability to the outcome $(\{e_1, e_d\}, \{e_2, e_d\})$ and $1/2$ probability to $(\{e_2, e_d\}, \{e_1, e_d\})$. Under $\pi$, the expected cost for each agent is exactly $1 + \epsilon$. 

    We evaluate the CCE inequalities. The optimal unconditional deviation for either agent is to permanently play an undominated strategy, e.g., $\{e_1\}$. Against the marginal distribution of the opponent (which places $1/2$ probability on utilizing $e_1$), the expected cost of this deviation is $\frac{1}{2}(1) + \frac{1}{2}(2) = 1.5$. Since we constrained $\epsilon < 1/2$, it holds that $1 + \epsilon < 1.5$. Thus, any unilateral deviation strictly increases the expected cost, mathematically confirming that $\pi$ is a strong CCE supported entirely on strictly dominated strategies.

    \vspace{0.2cm}
    \noindent \textit{Step 2: Construction in Valid Utility Games} \\
    To eliminate the ambiguity inherent in the inequality-based definition of valid utility games, we construct an exact \textit{basic utility system}~\cite{vetta2002nash}. We define a ground set of six independent elements $E = \{a_1, a_2, b_1, b_2, c_1, c_2\}$. We define a modular (and thus submodular) social welfare function $W(s) = \sum_{e \in U(s)} v(e)$, with values $v(a_1) = v(a_2) = v(b_1) = v(b_2) = \epsilon$, and $v(c_1) = v(c_2) = 1$, where $0 < \epsilon < 1/2$.
    
    The strategy set for agent 1 is $S_1 = \{\{a_1, c_1\}, \{c_1\}, \{a_2, c_2\}, \{c_2\}\}$, and for agent 2 is $S_2 = \{\{b_1, c_1\}, \{c_1\}, \{b_2, c_2\}, \{c_2\}\}$. In a basic utility system, the private utility of agent $i$ is defined exactly as their marginal contribution to the social welfare: $u_i(s) = W(s) - W(\emptyset, s_{-i})$. Because $W$ is modular, this evaluates exactly to the sum of the values of the elements agent $i$ \textit{exclusively} secures. This exact specification natively satisfies both defining inequalities of a valid utility game.

    By construction, the singleton strategy $\{c_1\}$ is strictly dominated by $\{a_1, c_1\}$ for agent 1. Because the element $a_1$ is uniquely available to agent 1, agent 1 secures $a_1$ exclusively whenever it is played. Thus, playing $\{a_1, c_1\}$ guarantees a payoff exactly $\epsilon$ strictly greater than playing $\{c_1\}$, regardless of agent 2's strategy. By identical logic, $\{c_2\}$ is strictly dominated by $\{a_2, c_2\}$ for agent 1, and the corresponding singleton strategies are strictly dominated for agent 2.

    Consider the distribution $\pi$ assigning $1/2$ probability to the outcome $(\{c_1\}, \{c_2\})$ and $1/2$ probability to $(\{c_2\}, \{c_1\})$. Under $\pi$, both agents exclusively secure an element of value $1$ with probability $1$. The expected utility for each agent is exactly $1$.

    We evaluate the CCE condition. The optimal unconditional deviation for agent 1 is to permanently play a strictly dominating strategy, e.g., $\{a_1, c_1\}$. Against the marginal distribution of $\pi$, agent 2 plays $\{c_2\}$ with probability $1/2$, allowing agent 1 to exclusively secure $\{a_1, c_1\}$ for a payoff of $1 + \epsilon$. With probability $1/2$, agent 2 plays $\{c_1\}$, causing agent 1 to exclusively secure only $a_1$ for a payoff of $\epsilon$. The expected utility of this optimal deviation is $\frac{1}{2}(1+\epsilon) + \frac{1}{2}(\epsilon) = 0.5 + \epsilon$.

    Because we constrained $\epsilon < 1/2$, we have $0.5 + \epsilon < 1$. Thus, any unilateral deviation strictly decreases expected utility, confirming the distribution is a strong CCE.

    \vspace{0.2cm}
    \noindent \textit{Step 3: Extremality via Polytope Geometry} \\
    Finally, we establish that these properties hold for extreme CCEs. The set of valid CCEs constitutes a bounded convex polytope defined by linear inequalities. We formulate a linear program over this polytope with the objective of maximizing the sum of probabilities assigned to the outcomes where all agents play their strictly dominated strategies. We have demonstrated in Steps 1 and 2 that a feasible solution exists with an objective value of exactly $1$. Because the maximum of a linear objective over a bounded convex polytope is  guaranteed to be achieved at an extreme point (vertex), there must exist an extreme CCE that places probability $1$ exclusively on strictly dominated profiles.
\end{proof}

\begin{lemma}[Strongly Uncoupled Convergence to Target CCEs]
\label{lem:app_uncoupled_cce}
    Given any coarse correlated equilibrium (CCE) defined by rational probabilities in a normal-form game, there exists a set of $N$ strongly uncoupled no-regret learning algorithms such that, in self-play, they converge to the target CCE at an optimal time-average rate of $O(1/T)$.
\end{lemma}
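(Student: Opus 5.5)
The plan is to build the algorithms explicitly from a periodic schedule, using the rationality of the target CCE. Write the target CCE $\pi$ with rational probabilities over a common denominator $K$, so $\pi(s) = k_s/K$ with $k_s \in \mathbb{Z}_{\ge 0}$ and $\sum_s k_s = K$. Fix a deterministic cyclic sequence of joint profiles $s^{(1)}, \dots, s^{(K)}$ in which each profile $s$ appears exactly $k_s$ times. Each agent $i$ stores only its own coordinate sequence $s_i^{(1)}, \dots, s_i^{(K)}$. This requires no knowledge of the opponents' payoffs or actions. The algorithm then receives only its own cost vectors, so it is strongly uncoupled.

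Each algorithm $A_i$ runs in two modes.
\begin{itemize}
\item \emph{Scripted mode.} The agent plays $s_i^{(t \bmod K)}$ at round $t$. It also checks that the observed cost vector $c_i^t$ equals the vector it would see if every opponent followed the script, i.e.\ $c_i(\cdot, s_{-i}^{(t \bmod K)})$. This check uses only agent $i$'s own payoff function, the known script and the observed cost vector, so no opponent information is needed.
\item \emph{Fallback mode.} At the first round where the observed vector deviates from the expected one, the agent switches permanently to a standard no-regret algorithm, such as Hedge or an optimistic variant, started fresh.
\end{itemize}
In self-play no deviation ever occurs, so every agent stays scripted forever.

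Convergence of the empirical distribution follows from the schedule. After $T = nK + r$ rounds the empirical distribution differs from $\pi$ by at most $r/T \le K/T$ in total variation, which is $O(1/T)$. For regret in self-play, each full cycle contributes
\[
\sum_{t \in \text{cycle}} \bigl[c_i(s^t) - c_i(s_i', s_{-i}^t)\bigr] = K\Bigl[\mathbb{E}_\pi c_i - \mathbb{E}_{\pi_i} c_i(s_i', \cdot)\Bigr] \le 0
\]
for every fixed deviation $s_i'$, by the CCE inequality. A partial cycle contributes at most $K \cdot \max|c_i| \le K$. Hence regret is $O(K) = O(1)$ uniformly in $T$, which gives the $O(1/T)$ time-average rate.

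The main obstacle is showing that $A_i$ is a genuine no-regret algorithm against arbitrary adversaries, not only in self-play. Suppose the adversary keeps the observed vectors exactly consistent with the script. The agent then faces exactly the self-play cost sequence, so the cycle bound above still applies. Suppose instead a deviation is detected at some time $\tau$. Regret accumulated before $\tau$ is at most $O(K)$ by the cycle bound, since the cost vectors up to $\tau-1$ matched the script. After $\tau$, the fallback gives $O(\sqrt{T})$ regret. The delicate step is combining the two phases, because the best fixed action in hindsight over the whole horizon can differ from the best action on each phase separately. I would handle this by noting that total regret against any fixed $s_i'$ is the sum of regret against $s_i'$ on each phase. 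Each of these is bounded by that phase's regret against its own best action, so total regret is at most $O(K) + O(\sqrt{T}) = o(T)$.

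If strict CCE inequalities are needed elsewhere, as for a strong CCE, the cycle contribution is strictly negative, which only strengthens the bound. This construction together with Lemma~\ref{lem:app_unnatural_cce} then yields Theorem~\ref{thm:not_rationalizable}.
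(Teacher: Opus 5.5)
Your proposal is correct and uses essentially the same grim-trigger construction as the paper. Both build a $K$-periodic script from the rational CCE, compare the observed cost vector against the scripted one each round, and switch permanently to a fresh no-regret learner on detection, giving regret $O(K)$ before detection and $O(\sqrt{T})$ after; your explicit argument that the per-phase regrets add against any fixed deviation is a welcome detail the paper leaves implicit. One small repair is needed: as written, your check requires the opponents' scripted actions $s_{-i}^{(k)}$ and your own cost as a function of them. That contradicts your claim that agent $i$ stores only its own coordinates, and it sits uneasily with strong uncoupledness, where players are oblivious to the payoff matrix. The paper avoids this by having agent $i$ pre-store the $K$ target vectors $c_i(\cdot, s_{-i}^{(k)})$ directly, and you should do the same.
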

\begin{proof}
    We prove the lemma by explicitly constructing a compliant algorithm.

    \vspace{0.2cm}
    \noindent \textit{Step 1: Cyclic Representation of the Target Equilibrium} \\
    Let $C$ be a target CCE. Because the probabilities defining $C$ are rational, there exists an integer $K$ such that $C$ can be represented as a uniform distribution over a deterministic, cyclic sequence of pure action profiles $V = (v^0, v^1, \dots, v^{K-1})$. 

    \vspace{0.2cm}
    \noindent \textit{Step 2: Construction of the Grim-Trigger Dynamic} \\
    We construct a learning dynamic for each agent $i$ operating in two strict phases:
    \begin{itemize}
    \item \textbf{Cooperative Phase:} At time $t$, agent $i$ plays their assigned coordinate from the profile $v^{t \pmod K}$. The agent then observes their realized cost vector $c^t$. If this observed cost vector exactly matches a privately stored, pre-computed target cost vector $C_i^{t \pmod K}$, agent $i$ remains in the cooperative phase.
    \item \textbf{Penalty Phase:} If the observed cost vector deviates from the expected target cost vector $C_i^{t \pmod K}$, agent $i$ immediately and permanently switches to a standard, optimal no-regret algorithm (e.g., Optimistic Mirror Descent), initializing it as if time $t$ were round 1.
    \end{itemize}

    \vspace{0.2cm}
    \noindent \textit{Step 3: Verification of the Optimal Regret Bound} \\
    In self-play, all agents adhere to the sequence $V$, and the penalty phase is never triggered. The empirical distribution of play cycles through $V$, converging to the target CCE exactly. The total external regret evaluated at any time $T$ is strictly bounded by $O(K)$ (the length of the cycle), yielding an optimal time-average regret rate of $O(1/T)$. 

    Under arbitrary adversarial play, the total regret is partitioned into the rounds prior to the first observed deviation, and the rounds following it. Because $C$ is a valid CCE, the regret accumulated during the cooperative phase is strictly bounded by $O(K)$. Once a deviation is detected, the agent assumes worst-case adversarial play and transitions to a standard no-regret algorithm (e.g., Optimistic Mirror Descent). While such algorithms can achieve $O(\text{polylog}(T))$ regret under mutual predictable play, against an arbitrary adversary they guarantee the optimal worst-case bound of $O(\sqrt{T})$. The total regret is therefore bounded by $O(K) + O(\sqrt{T}) = O(\sqrt{T})$, mathematically establishing the algorithm as Hannan consistent under all conditions.

    \vspace{0.2cm}
    \noindent \textit{Step 4: Satisfaction of Strongly Uncoupled Constraints} \\
    Crucially, this dynamic satisfies the strict informational constraints of \textit{strongly uncoupled learning}~\cite{daskalakis2011near}. Agent $i$ possesses zero knowledge of the opponents' utility functions or the full game matrix. The private storage requirement is strictly bounded by $O(1)$, as the agent only needs to store their private $K$-step sequence of scheduled actions and expected target cost vectors. Verification requires only a direct numerical comparison of the observed vector against the stored target, demanding absolutely no structural knowledge of the opponents' matrices or the global sequence $V$. The computations performed in each round are highly efficient, running in polynomial time relative to the action space.
\end{proof}

\noindent
We are now ready to complete the proof of theorem~\ref{thm:not_rationalizable}.

%\subsection*{Proof of Theorem~\ref{thm:not_rationalizable}}
\begin{reptheorem}{thm:not_rationalizable}
    There exist smooth games (e.g., linear congestion games and valid utility games) where adversarially selected, strongly uncoupled no-regret algorithms converge at an optimal rate of $O(1/T)$ to a strong coarse correlated equilibrium, whilst simultaneously, on every period, all agents play strictly dominated strategies. 
\end{reptheorem}
\begin{proof}
    The theorem follows immediately from the synthesis of Lemma~\ref{lem:app_unnatural_cce} and Lemma~\ref{lem:app_uncoupled_cce}. Lemma~\ref{lem:app_unnatural_cce} guarantees the existence of smooth games possessing extreme strong CCEs supported entirely on strictly dominated strategies, which by construction are representable by cyclic sequences of length $K=2$. Lemma~\ref{lem:app_uncoupled_cce} demonstrates that strongly uncoupled, optimal $O(1/T)$ no-regret algorithms can be explicitly constructed to converge to these exact cyclic sequences. Consequently, optimal no-regret learning in smooth games provides no structural guarantee against convergence to strictly dominated, non-rationalizable outcomes.
\end{proof}

\section{Proof of Theorem~\ref{thm:linear_proximal}}
\label{app:SCE}

\begin{reptheorem}{thm:linear_proximal}[SCE/PCE Support Strictly Dominated Strategies]
    For Euclidean distance-generating functions, the set of (Bregman) Proximal Correlated Equilibria (which coincides exactly with Semicoarse Correlated Equilibria) can assign strictly positive probability mass to strictly dominated strategies.
\end{reptheorem}

\begin{proof}
    We establish the theorem constructively by defining a symmetric two-player normal-form game and explicitly verifying that a specific distribution, which places strictly positive mass on a strictly dominated strategy, satisfies all mathematical conditions of a Semicoarse Correlated Equilibrium (SCCE).

    \vspace{0.2cm}
    \noindent \textit{Step 1: Game Definition and Dominated Strategy Support} \\
    Let the action space for both agents be $A_i = \{A, A^-, C\}$. The utility function for Player 1, $u_1(a_1, a_2)$, is defined by the following payoff matrix:
    \begin{align*}
        u_1(A, A) &= 1, & u_1(A, A^-) &= 1, & u_1(A, C) &= 0 \\
        u_1(A^-, A) &= 0.8, & u_1(A^-, A^-) &= 0.8, & u_1(A^-, C) &= -0.2 \\
        u_1(C, A) &= 0, & u_1(C, A^-) &= 0, & u_1(C, C) &= 1
    \end{align*}
    By inspection, action $A$ strictly dominates action $A^-$ by a margin of $\epsilon = 0.2$ against every possible opponent strategy. 

    We propose the following correlated strategy profile $\sigma$, supported entirely on the diagonal:
    $$ \sigma(A,A) = 0.4, \quad \sigma(A^-,A^-) = 0.2, \quad \sigma(C,C) = 0.4 $$
    Crucially, $\sigma$ assigns a probability mass of $0.2$ directly to the strictly dominated profile $(A^-, A^-)$.

    \vspace{0.2cm}
    \noindent \textit{Step 2: Primal Verification via Extreme Rays} \\
    To verify the SCCE conditions, we define the marginal expected gain of unilaterally deviating from action $a$ to $a'$, weighted by the marginal probability of $a$:
    $$ g(a', a) = \sum_{a_2 \in A_2} \sigma(a, a_2) \left[ u_1(a', a_2) - u_1(a, a_2) \right] $$
    Evaluating this marginal gain across all action pairs yields:
    \begin{align*}
        g(A, A^-) &= 0.04 & g(A^-, A) &= -0.08 & g(C, A) &= -0.40 \\
        g(A, C) &= -0.40 & g(C, A^-) &= -0.16 & g(A^-, C) &= -0.48
    \end{align*}

    A distribution constitutes an SCCE if and only if the net expected gain is non-positive for all valid linear strategy modifications. As established in the literature (e.g., Theorem 3.3 in~\cite{ahunbay2025semicoarse}), these modifications correspond to convex combinations of specific extreme rays: cycle deviations and subset deviations. We systematically verify that every extreme ray yields a strictly non-positive net gain.

    \textbf{Cycle Deviations:}
    \begin{itemize}
        \item 2-cycle $\{A, A^-\}$: $g(A^-, A) + g(A, A^-) = -0.08 + 0.04 = -0.04 \le 0$
        \item 2-cycle $\{A, C\}$: $g(C, A) + g(A, C) = -0.40 - 0.40 = -0.80 \le 0$
        \item 2-cycle $\{A^-, C\}$: $g(C, A^-) + g(A^-, C) = -0.16 - 0.48 = -0.64 \le 0$
        \item 3-cycle $\{A, A^-, C\}$ (mapping each to the uniform distribution over the other two): \\
        $\frac{1}{2} \left[ g(A^-,A) + g(C,A) + g(A,A^-) + g(C,A^-) + g(A,C) + g(A^-,C) \right] = -0.74 \le 0$
    \end{itemize}

    \textbf{Subset Deviations (Sizes 1 and 2):}
    \begin{itemize}
        \item $\{A^-\} \to \text{Unif}(\{A, C\})$: $0.5 g(A, A^-) + 0.5 g(C, A^-) = 0.02 - 0.08 = -0.06 \le 0$
        \item $\{A\} \to \text{Unif}(\{A^-, C\})$: $0.5 g(A^-, A) + 0.5 g(C, A) = -0.04 - 0.20 = -0.24 \le 0$
        \item $\{C\} \to \text{Unif}(\{A, A^-\})$: $0.5 g(A, C) + 0.5 g(A^-, C) = -0.20 - 0.24 = -0.44 \le 0$
        \item $\{A^-, C\} \to A$ \textit{(Unconditional deviation to $A$)}: $g(A, A^-) + g(A, C) = 0.04 - 0.40 = -0.36 \le 0$
        \item $\{A, C\} \to A^-$ \textit{(Unconditional deviation to $A^-$)}: $g(A^-, A) + g(A^-, C) = -0.08 - 0.48 = -0.56 \le 0$
        \item $\{A, A^-\} \to C$ \textit{(Unconditional deviation to $C$)}: $g(C, A) + g(C, A^-) = -0.40 - 0.16 = -0.56 \le 0$
    \end{itemize}
    Because every extreme ray of the modification cone yields a strictly negative or zero expected gain, any linear combination of these transformations is similarly bounded by zero. 

    \vspace{0.2cm}
    \noindent \textit{Step 3: Dual Verification via Exact Algebraic Certificate} \\
    We further provide the exact algebraic certificate solving the dual linear programming formulation of the SCCE constraints (e.g., Theorem 3.4 in~\cite{ahunbay2025semicoarse}). The distribution $\sigma$ is an SCCE if there exist non-negative variables $\gamma_1(a', a) \ge 0$ and unconstrained variables $\rho_1(a, a')$ satisfying:
    \begin{align}
        \gamma_1(a', a) + \rho_1(a, a') - \rho_1(a', a) + g(a', a) &= 0 \quad \forall a \neq a' \label{eq:scce_dual_1} \\
        \sum_{a \neq a'} \gamma_1(a', a) + \text{UnconditionalGain}(a') &= 0 \quad \forall a' \label{eq:scce_dual_2}
    \end{align}
    The base expected utility of $\sigma$ is $U(\sigma) = 0.4(1) + 0.2(0.8) + 0.4(1) = 0.96$. The unconditional deviation gains to each fixed action are:
    \begin{align*}
        \text{Gain}(A) &= (0.4(1) + 0.2(1) + 0.4(0)) - 0.96 = -0.36 \\
        \text{Gain}(A^-) &= (0.4(0.8) + 0.2(0.8) + 0.4(-0.2)) - 0.96 = -0.56 \\
        \text{Gain}(C) &= (0.4(0) + 0.2(0) + 0.4(1)) - 0.96 = -0.56
    \end{align*}
    Substituting these into constraint (\ref{eq:scce_dual_2}) requires the non-negative $\gamma_1$ variables to satisfy:
    \begin{align*}
        \gamma_1(A, A^-) + \gamma_1(A, C) &= 0.36 \\
        \gamma_1(A^-, A) + \gamma_1(A^-, C) &= 0.56 \\
        \gamma_1(C, A) + \gamma_1(C, A^-) &= 0.56
    \end{align*}
    We explicitly construct the feasible $\gamma_1$ assignment:
    \begin{align*}
        \gamma_1(A, A^-) &= 0.04, & \gamma_1(A^-, A) &= 0 \\
        \gamma_1(A, C) &= 0.32, & \gamma_1(C, A) &= 0.48 \\
        \gamma_1(A^-, C) &= 0.56, & \gamma_1(C, A^-) &= 0.08
    \end{align*}
    This strictly non-negative assignment perfectly satisfies the sum constraints. By enforcing anti-symmetry ($\rho_1(x,y) = -\rho_1(y,x)$), constraint (\ref{eq:scce_dual_1}) uniquely determines the exact $\rho_1$ variables:
    \begin{align*}
        \rho_1(A, A^-) &= 0.04 \\
        \rho_1(A, C) &= -0.04 \\
        \rho_1(C, A^-) &= -0.04
    \end{align*}
    All LP constraints are met with perfect algebraic equality. This confirms that the proposed distribution $\sigma$ constitutes a valid Semicoarse Correlated Equilibrium while assigning strictly positive probability mass to a strictly dominated strategy.
\end{proof}

\section{Proofs of Claims in Section~\ref{sec:CE} and Further Numerical Investigation of Dynamics}
\label{app:CE_chaos}

\subsection{Proof of Theorem~\ref{thm:chaos}}

To visualize the algebraic transformation required for the proof of Theorem~\ref{thm:chaos}, we provide a structural diagram of the bipartite mirroring technique in Figure~\ref{fig:bipartite_mirroring}. This transformation is strictly necessary to convert non-concave evolutionary dynamics into multilinear normal-form optimization while preserving the target invariant manifold.

\begin{figure}[ht!]
    \centering
    \begin{tikzpicture}[
        >=stealth,
        agentX/.style={circle, draw=blue!80, fill=blue!5, thick, minimum size=0.8cm, font=\sffamily\bfseries},
        agentY/.style={circle, draw=red!80, fill=red!5, thick, minimum size=0.8cm, font=\sffamily\bfseries},
        edge/.style={->, thick, shorten >= 2pt, shorten <= 2pt},
        selfloop/.style={->, thick, min distance=8mm, looseness=5}
    ]

    % LEFT SIDE: Evolutionary Game (Self-Loops)
    % Placed in a perfect equilateral triangle
    \node[agentX] (x1) at (0, 2) {$x_1$};
    \node[agentX] (x2) at (-1.732, -1) {$x_2$};
    \node[agentX] (x3) at (1.732, -1) {$x_3$};

    % Edges separated by bending right (creates a nice symmetrical convex shape)
    \draw[edge] (x1) to[bend right=15] node[left, xshift=-0.1pt] {$P_{12}$} (x2);
    \draw[edge] (x2) to[bend right=15] node[right, xshift=0.1pt] {$P_{21}$} (x1);
    
    \draw[edge] (x2) to[bend right=15] node[below] {$P_{23}$} (x3);
    \draw[edge] (x3) to[bend right=15] node[above] {$P_{32}$} (x2);
    
    \draw[edge] (x3) to[bend right=15] node[right, xshift=2pt] {$P_{31}$} (x1);
    \draw[edge] (x1) to[bend right=15] node[left, xshift=-2pt] {$P_{13}$} (x3);

    % Self loops
    \draw[selfloop, in=60, out=120] (x1) to node[above] {$P_{11}$} (x1);
    \draw[selfloop, in=180, out=240] (x2) to node[left] {$P_{22}$} (x2);
    \draw[selfloop, in=300, out=360] (x3) to node[right] {$P_{33}$} (x3);

    \node[font=\sffamily\bfseries, align=center] at (0, -3) {Base Evolutionary Game\\(Non-Concave Utilities)};

    % MIDDLE: Arrow
    \draw[->, ultra thick, double] (2.8, 0.5) -- (4.2, 0.5) node[midway, above, font=\small\sffamily] {Bipartite Mirroring};

    % RIGHT SIDE: Bipartite Game (No Self-Loops, Bidirectional)
    \begin{scope}[xshift=7cm, yshift=0.5cm]
        % Population X
        \node[agentX] (X1) at (0, 2) {$x_1$};
        \node[agentX] (X2) at (0, 0) {$x_2$};
        \node[agentX] (X3) at (0, -2) {$x_3$};
        
        % Population Y
        \node[agentY] (Y1) at (3, 2) {$y_1$};
        \node[agentY] (Y2) at (3, 0) {$y_2$};
        \node[agentY] (Y3) at (3, -2) {$y_3$};

        % Horizontal Edges (X to Y and Y to X)
        \draw[edge, blue!80, bend left=12] (X1) to node[above, text=black] {$P_{11}$} (Y1);
        \draw[edge, red!80, bend left=12] (Y1) to node[below, text=black] {$P_{11}$} (X1);

        \draw[edge, blue!80, bend left=12] (X2) to node[above, text=black] {$P_{22}$} (Y2);
        \draw[edge, red!80, bend left=12] (Y2) to node[below, text=black] {$P_{22}$} (X2);

        \draw[edge, blue!80, bend left=12] (X3) to node[above, text=black] {$P_{33}$} (Y3);
        \draw[edge, red!80, bend left=12] (Y3) to node[below, text=black] {$P_{33}$} (X3);
        
        % Diagonal Edges (X to Y and Y to X) - slightly transparent to avoid clutter
        % X1 - Y2
        \draw[edge, blue!30, bend left=8] (X1) to (Y2);
        \draw[edge, red!30, bend left=8] (Y2) to (X1);
        % X1 - Y3
        \draw[edge, blue!30, bend left=8] (X1) to (Y3);
        \draw[edge, red!30, bend left=8] (Y3) to (X1);
        
        % X2 - Y1
        \draw[edge, blue!30, bend left=8] (X2) to (Y1);
        \draw[edge, red!30, bend left=8] (Y1) to (X2);
        % X2 - Y3
        \draw[edge, blue!30, bend left=8] (X2) to (Y3);
        \draw[edge, red!30, bend left=8] (Y3) to (X2);

        % X3 - Y1
        \draw[edge, blue!30, bend left=8] (X3) to (Y1);
        \draw[edge, red!30, bend left=8] (Y1) to (X3);
        % X3 - Y2
        \draw[edge, blue!30, bend left=8] (X3) to (Y2);
        \draw[edge, red!30, bend left=8] (Y2) to (X3);

        \node[font=\sffamily\bfseries, align=center] at (1.5, -3.5) {Unfolded Bipartite Game\\(Strictly Multilinear Utilities)};
    \end{scope}
    
    \end{tikzpicture}
    \caption{\textbf{Restoring Multilinearity via Bipartite Mirroring.} The base 3-agent polymatrix game (left) contains self-loops ($P_{ii}$), forcing an agent's utility to depend quadratically on their own mixed strategy, rendering the game non-concave and violating standard regret frameworks. By cloning the population into sets $X$ (blue) and $Y$ (red), self-loops are unfolded into bidirectional bipartite interactions between identical agents. The resulting 6-agent game is strictly multilinear, enabling optimal regret minimization while preserving the original chaotic vector field along the symmetric invariant manifold $x_i = y_i$.}
    \label{fig:bipartite_mirroring}
\end{figure}

\begin{reptheorem}{thm:chaos}
    There exists a one-parameter family of normal-form games with payoff symmetries such that continuous-time replicator dynamics, given symmetric, interior initial conditions, exhibits chaos while simultaneously:
    \begin{enumerate}
        \item[(i)] Minimizing swap regret at an optimal rate of $O(1/T)$ (fast convergence to CE).
        \item[(ii)] Ensuring the time-average strategies of all agents converge precisely to an interior Nash equilibrium.
    \end{enumerate}
    The games in question are polymatrix games with $6$ agents, each possessing exactly two strategies.
\end{reptheorem}

\begin{proof}
    We prove the theorem by formally constructing the requisite multilinear state space, verifying the bounded regret properties without relying on asymptotic approximations, and characterizing the exact topological nature of the strange attractor.

    \vspace{0.2cm}
    \noindent \textit{Step 1: Self-Contained $O(1/T)$ Swap Regret} \\
    We first establish that continuous-time replicator dynamics inherently minimize swap regret at an optimal $O(1/T)$ rate for any two-strategy game. Let $p(t) \in \Delta_2$ denote the mixed strategy of an agent over two actions, and let $u(t) \in \mathbb{R}^2$ be the instantaneous utility vector. The continuous-time replicator equation is $\dot{p}_j(t) = p_j(t) (u_j(t) - p(t)^\top u(t))$.

    To bound the external regret against any fixed action in hindsight, we evaluate the Kullback-Leibler (KL) divergence between a fixed target strategy $q$ and the state $p(t)$. Taking the time derivative yields $\frac{d}{dt} D_{KL}(q \parallel p(t)) = -\sum_j q_j \frac{\dot{p}_j(t)}{p_j(t)}$. Substituting the replicator equation, this simplifies exactly to $p(t)^\top u(t) - q^\top u(t)$, which is the instantaneous negative regret. Integrating over time $T$ yields the total external regret bound:
    $$ \text{ExtReg}(T, q) = \int_0^T (q^\top u(t) - p(t)^\top u(t)) dt = D_{KL}(q \parallel p(0)) - D_{KL}(q \parallel p(T)) \le D_{KL}(q \parallel p(0)) $$
    Because the initial condition $p(0)$ is strictly interior, $D_{KL}(q \parallel p(0))$ is a strictly bounded $O(1)$ constant. Thus, the total external regret for both pure strategies $1$ and $2$ is strictly bounded by $O(1)$. 
    
    Because the game possesses exactly two strategies, bounding external regret completely bounds swap regret. There are exactly four swap functions mapping $\{1, 2\}$ to itself: the identity map (zero regret), the map to constant $1$ (external regret for 1), the map to constant $2$ (external regret for 2), and the flip map ($1 \leftrightarrow 2$). By the linearity of expectation, the regret of the flip map is strictly upper-bounded by the sum of the external regrets for $1$ and $2$. Consequently, the total swap regret for any modification rule is bounded by $O(1)$, yielding an optimal time-average swap regret rate of $O(1/T)$ and guaranteeing fast convergence to the Correlated Equilibrium polytope.

    \vspace{0.2cm}
    \noindent \textit{Step 2: The Non-Concave Strange Attractor} \\
    We leverage a continuous-time 3-player, 2-strategy evolutionary polymatrix game introduced by Peixe and Rodrigues~\cite{peixe2022persistent}. This work rigorously establishes a one-parameter family of vector fields that undergo a supercritical Hopf bifurcation, leading to a Shilnikov homoclinic cycle to a saddle-focus. By the Shilnikov-Ovsyannikov theorem, the unfolding of this cycle generates suspended topological horseshoes, strictly positive Lyapunov exponents, and persistent Hénon-type strange attractors that survive over parameter sets of positive Lebesgue measure.

    However, this base system is an evolutionary game that includes ``self-loops,'' meaning an agent's sub-population interacts with itself. Algebraically, this forces the utility function of agent $i$ to depend on the quadratic form $x_i^\top P_{ii} x_i$. Because the utility functions are non-concave, the standard potential functions used in Step 1 fail, and standard regret guarantees do not apply.

    \vspace{0.2cm}
    \noindent \textit{Step 3: Bipartite Mirroring and the Invariant Manifold} \\
    To resolve the non-concavity while strictly preserving the Shilnikov cycle, we apply a bipartite mirroring transformation (visualized in Figure~\ref{fig:bipartite_mirroring}). We duplicate the 3 agents to create two distinct mirror populations: $X = (x_1, x_2, x_3)$ and $Y = (y_1, y_2, y_3)$. We configure the interactions such that $X$ agents only play against $Y$ agents, replacing the self-loop matrices $P_{ii}$ with bipartite matrices $P_{i, i'}$ mapping $x_i$ against $y_i$. 

    The resulting 6-agent normal-form game is strictly multilinear, rendering the $O(1)$ swap-regret bounds from Step 1 fully applicable to all 6 agents. Crucially, the replicator vector field preserves symmetric initial conditions. If initialized such that $x_i(0) = y_i(0)$ for all $i \in \{1, 2, 3\}$, the dynamics guarantee $\dot{x}_i(t) = \dot{y}_i(t)$ for all $t > 0$. The system is completely confined to a 3-dimensional symmetric invariant manifold $\mathcal{M} = \{(X, Y) \mid X=Y\}$ within the 6-dimensional state space. Restricted to $\mathcal{M}$, the multilinear vector field perfectly collapses back into the exact non-concave vector field of the Peixe-Rodrigues system, fully inheriting its Hénon-type strange attractor and positive Lyapunov exponents.

    \vspace{0.2cm}
    \noindent \textit{Step 4: Time-Average Equilibration} \\
    Finally, we prove that despite the macroscopic chaos on $\mathcal{M}$, the time-average of the trajectory converges exactly to an interior Nash equilibrium. 
    
    The replicator equation for agent $i$'s probability of playing strategy $1$ is $\dot{x}_i = x_i (1-x_i)(u_{i,1} - u_{i,2})$. Dividing by $x_i(1-x_i)$ and integrating over $[0,T]$ yields:
    $$ \frac{1}{T} \left( \ln\left(\frac{x_i(T)}{1-x_i(T)}\right) - \ln\left(\frac{x_i(0)}{1-x_i(0)}\right) \right) = \frac{1}{T}\int_0^T (u_{i,1}(t) - u_{i,2}(t)) dt $$
    Because the strange attractor is strictly contained within the interior of the simplex, $x_i(t)$ is strictly bounded away from the boundaries $0$ and $1$ for all $t$. Therefore, the logarithmic terms on the left-hand side are bounded by an absolute constant $O(1)$. As $T \to \infty$, the left-hand side rigorously evaluates to $0$.

    This implies the time-average difference in expected utilities evaluates to zero. Because the unfolded game is a multilinear polymatrix game, the expected utility differences are strict linear functions of the state variables $(X, Y)$. By the linearity of the integral, evaluating the utility differences at any accumulation point $(\hat{X}, \hat{Y})$ of the time-average trajectory yields exactly zero. Thus, all agents are rendered perfectly indifferent between their strategies. Because the attractor is strictly interior, any such accumulation point must be the unique interior Nash equilibrium of the game.
\end{proof}

\subsection{Numerical Verification of Structural Stability and Persistence of Chaos}
\label{app:robust_chaos_exp}

To further validate the structural stability of the chaotic dynamics observed in the one-parameter family of polymatrix replicators, we conducted a rigorous perturbation analysis on the generalized 12-coefficient system derived from Equation (6) in~\cite{peixe2022persistent}. Setting the base parameter to $\mu = 3.6$---which lies strictly within the chaotic regime---we calculated the maximal Lyapunov exponent (LE). The unperturbed base game yielded a positive LE of approximately $0.0489$, mathematically confirming the presence of a strange attractor. 

To test the persistence of this attractor under structural deformations (approximating a $C^2$ neighborhood of vector fields), we applied single ($k=1$) and pairwise ($k=2$) perturbations of $\pm \rho$ (with $\rho = 0.01$) to the polynomial coefficients of the vector field. Following the methodology of the main text, we adopted a conservative threshold of $5 \times 10^{-3}$ to filter out numerical noise, classifying any LE above this value as strictly positive and indicative of chaotic dynamics. 

The perturbation results provide robust computational support for the persistence of the chaotic regime identified in Theorem~\ref{thm:chaos}. Out of $24$ single-parameter perturbations, $23$ ($95.8\%$) retained a positive LE. For pairwise perturbations, $251$ out of $264$ ($95.1\%$) remained chaotic. The distribution of the maximum Lyapunov exponents across all $289$ simulated games (Figure~\ref{fig:structural_robust_chaos}) illustrates that the vast majority of perturbed systems cluster strictly above the chaotic threshold. These findings provide strong computational evidence that the Hénon-type strange attractors in this system are persistently observable and highly robust to small, generalized structural perturbations in the space of symmetric games.

\begin{figure}[ht!]
    \centering
    \includegraphics[width=0.95\textwidth]{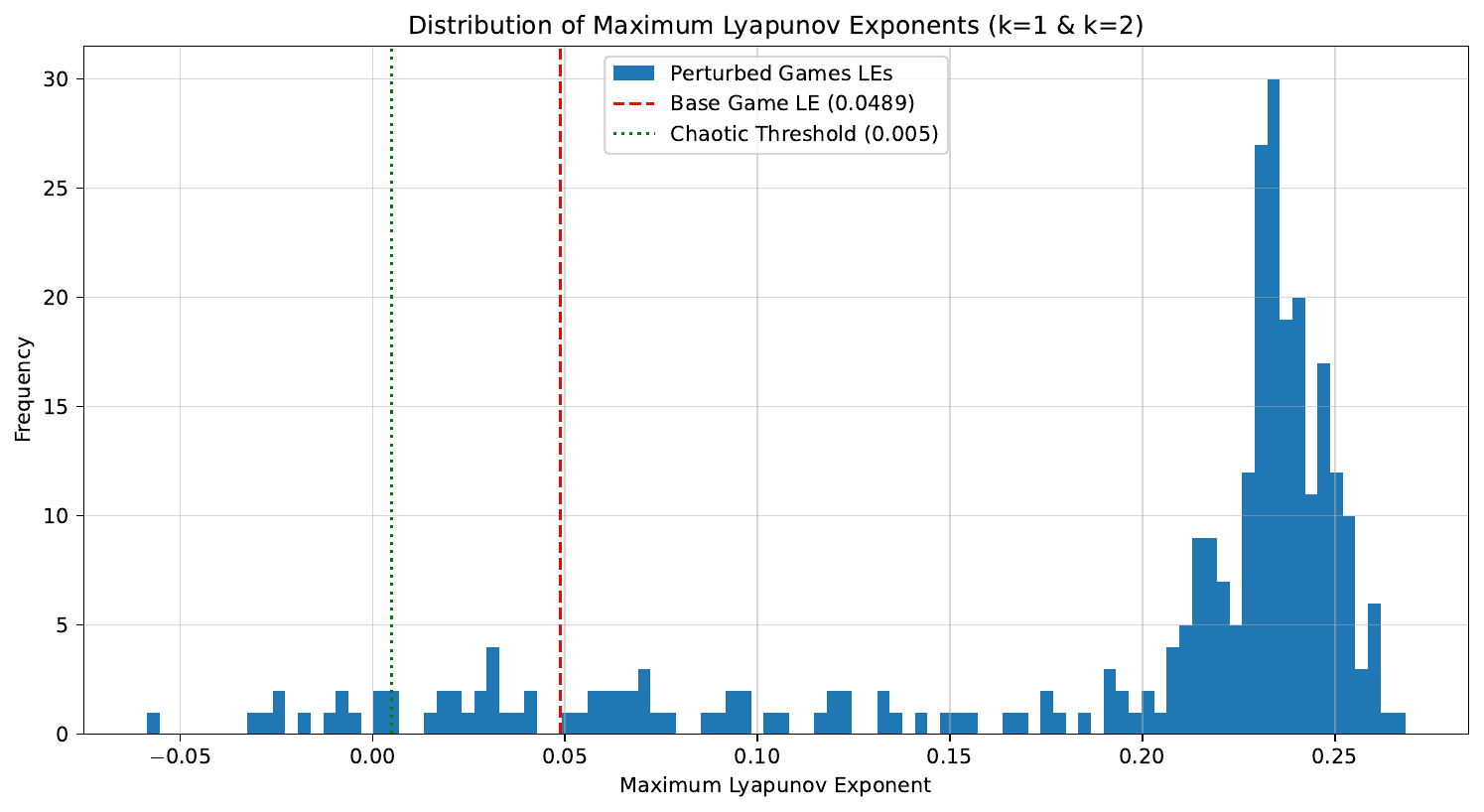}
    \caption{\textbf{Structural Robustness of Chaos.} The distribution of the maximum Lyapunov exponents across all $289$ simulated games illustrates that the vast majority of perturbed systems $(>95\%)$ cluster well above the positive chaotic threshold.}
    \label{fig:structural_robust_chaos}
\end{figure}

\subsection{Numerical Demonstration of Symmetry Breaking and Distinct Limit Cycles}
\label{app:symm_breaking_cycles}

In Theorem~\ref{thm:chaos}, we established that continuous-time replicator dynamics can exhibit chaotic limit sets in a 6-player, 2-strategy normal-form game, provided the system is initialized with perfect symmetry between the two mirror populations ($X$ and $Y$). To demonstrate the topological brittleness of this chaotic limit behavior under generic conditions, we numerically simulate the full 6D replicator vector field under asymmetric initializations.

\vspace{0.2cm}
\noindent \textbf{Experimental Setup:} We initialize the system near the unstable interior equilibrium, $O_\mu$. To break the invariant 3D manifold, we introduce a microscopic asymmetric perturbation. Specifically, the $X$ agents are initialized at $X_{init} = O_\mu + \xi$, and the mirror $Y$ agents are initialized at $Y_{init} = X_{init} + \delta$, where $\delta \sim 10^{-5}$ represents a random, asymmetric noise vector between the two populations. We simulate 10 such trajectories for $T = 300$ time units.

\vspace{0.2cm}
\noindent \textbf{Observations:} As shown in Figure~\ref{fig:symmetry_breaking}, the trajectories initially shadow the chaotic limit set, remaining tightly coupled. However, because the chaotic regime is not an open attractor in the full 6D space, the positive Lyapunov exponents rapidly amplify the microscopic noise $\delta$. The Euclidean distance between the $X$ and $Y$ populations grows exponentially, breaking the mirror symmetry. 

Crucially, the destruction of the chaotic limit set does not lead the system to a global, static point-mass equilibrium. As the dynamics are repelled toward the boundary of the simplex, four of the state variables asymptotically converge to pure strategies ($0$ or $1$), while the remaining two variables are captured by a periodic orbit. Depending on the exact vector of the initial microscopic perturbation $\delta$, the trajectories are expelled toward entirely distinct, lower-dimensional limit cycles residing on different 2D boundary faces of the state space. This reinforces the central thesis of Section~\ref{sec:CE}: even when fragile, problematic behaviors like measure-zero chaos are destroyed by generic noise, the resulting statistically typical behavior systematically evades the predictive power of static equilibria.

\begin{figure}[ht!]
    \centering
    \includegraphics[width=0.95\textwidth]{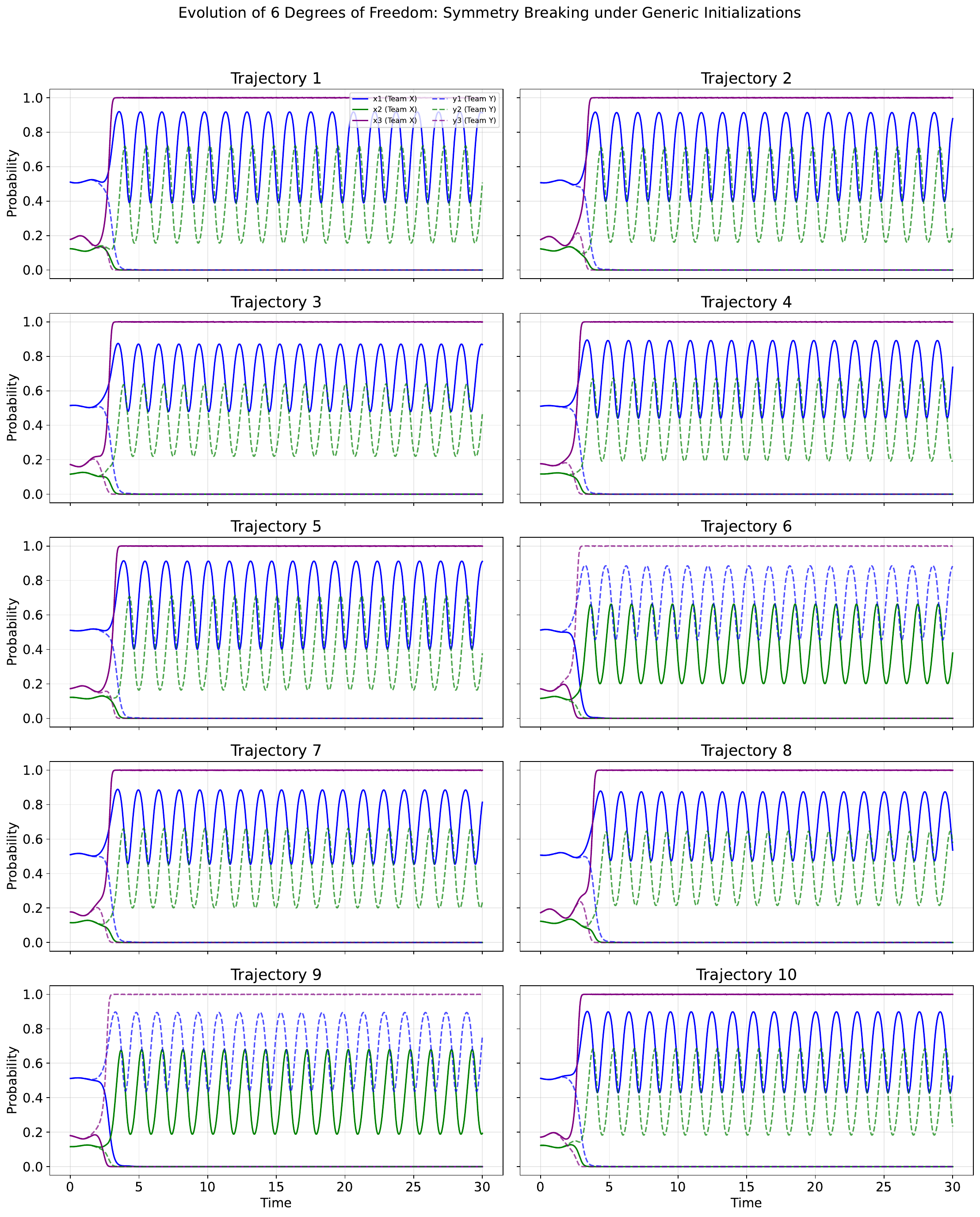}
    \caption{\textbf{Symmetry Breaking and Emergence of Distinct Limit Cycles.} Numerical simulation of the 6-agent mirrored polymatrix game under asymmetric initializations. Ten trajectories are initialized near the interior equilibrium with a microscopic asymmetric perturbation ($10^{-5}$) between the $X$ and $Y$ populations. While the trajectories initially shadow the chaotic limit set, the positive Lyapunov exponents rapidly amplify the noise, leading to an exponential divergence that permanently breaks the mirror symmetry. The system subsequently escapes to the boundary of the simplex, where the agents are captured by entirely distinct, lower-dimensional limit cycles residing on different boundary faces of the state space.}
    \label{fig:symmetry_breaking}
\end{figure}

\FloatBarrier

\section{Proofs of Claims in Section \ref{sec:nonatomic_dynamics}}

\subsection{Proof of Theorem \ref{thm:2p_degradation}}
\label{app:2p_proof}

We analyze the discrete-time map for a symmetric congestion game with latency $c_1(x) = c_2(x) = x^p$ under Multiplicative Weights Update. Let $x_t \in (0,1)$ be the flow on the first path. The update map is defined as:
\begin{equation}
x_{t+1} = f_a(x_t) = \frac{x_t}{x_t + (1-x_t)\exp\left(a \left[x_t^p - (1-x_t)^p\right]\right)}
\end{equation}
where $a > 0$ acts as the effective system load parameter. By symmetry, $x^* = 1/2$ is a fixed point.

\begin{reptheorem}{thm:2p_degradation}
Consider a symmetric non-atomic congestion game with two parallel paths and identical monomial latency functions $c(x) = x^p$ for $p \ge 1$. Under MWU with a fixed effective step-size parameter $a > 0$\footnote{The parameter $a$ formally couples the MWU learning rate $\epsilon$, the population size $N$, and the polynomial degree $p$. It is defined as $a = N^p \ln\left(\frac{1}{1-\epsilon}\right) \approx \epsilon N^p$. This relationship explicitly demonstrates that to maintain a stable effective step-size $a$, the raw learning rate $\epsilon$ must be aggressively throttled inversely to the maximum network cost $\mathcal{O}(N^p)$.}:
\begin{enumerate}
    \item The unique Wardrop equilibrium $x^* = 1/2$ is locally asymptotically stable if $a < \frac{2^{p+1}}{p}$.
    \item If $a > \frac{2^{p+1}}{p}$, the equilibrium strictly destabilizes and the interior system is captured by a global, period-2 attracting orbit $\{\sigma_a, 1-\sigma_a\}$. Accounting for the population size $N$, the empirical time-averaged social cost of this chaotic orbit is exactly $SC_{emp}(a) = N^{p+1}[\sigma_a^{p+1} + (1-\sigma_a)^{p+1}]$, which strictly exceeds the optimal social cost of $N^{p+1}(1/2)^p$.
    \item As the effective load $a \to \infty$ ($\sigma_a \to 0$), the ratio of this empirical time-averaged social cost to the optimal social cost converges exactly to $2^p$.
\end{enumerate}
\end{reptheorem}

\begin{proof}\leavevmode
\paragraph{Part 1: Local Asymptotic Stability.}
Let $h_a(x) = a(x^p - (1-x)^p)$. Applying the quotient rule to $f_a(x)$ and evaluating at the equilibrium $x = 1/2$, we note $h_a(1/2) = 0$ and $h_a'(1/2) = a p 2^{2-p}$. The derivative of the map at the fixed point simplifies exactly to:
\begin{equation}
f_a'(1/2) = 1 - \frac{1}{4}h_a'(1/2) = 1 - a p 2^{-p}
\end{equation}
For local asymptotic stability, we require $|f_a'(1/2)| < 1$. Because $a, p > 0$, the upper bound holds trivially. The lower bound requires $1 - a p 2^{-p} > -1$, which rigorously isolates the critical threshold $a < \frac{2^{p+1}}{p}$. If $a$ exceeds this threshold, $f_a'(1/2) < -1$ and the equilibrium strictly destabilizes.

\paragraph{Part 2: Global Convergence to the Period-2 Attractor.}
Before establishing the global attractor, we state a foundational result from one-dimensional topological dynamics regarding continuous maps lacking period-2 points.

\begin{lemma}[Convergence in Maps without Period-2 Points \cite{block2006dynamics,sharkovsky2013dynamics}]
\label{lem:no_period2}
Let $I \subset \mathbb{R}$ be a closed interval and $g: I \to I$ a continuous map. If $g$ possesses no periodic points of period 2, then the sequence of iterates $g^n(x)$ converges to a fixed point of $g$ for every $x \in I$.
\end{lemma}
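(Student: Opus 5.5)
The statement to prove is Lemma~\ref{lem:no_period2}: a continuous self-map $g$ of a closed interval $I$ with no period-2 points has every orbit converging to a fixed point. I will first reduce to a monotonicity property of orbits relative to the fixed point set. Let $F=\{x: g(x)=x\}$, which is nonempty and closed by the intermediate value theorem applied to $g(x)-x$. The first key step is the following claim: if $g$ has no period-2 point, then for every $x$ and every $n$, the points $x$ and $g^n(x)$ lie on the same side of each fixed point. More precisely, if $g(x)>x$ then $g^n(x)>x$ for all $n\ge1$, and similarly with both inequalities reversed.

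I would prove the claim by contradiction, following the standard Sharkovsky-type argument. Suppose $g(x)>x$ but $g^n(x)\le x$ for some minimal $n\ge 2$. Then the orbit segment $x, g(x),\dots,g^{n}(x)$ goes up and comes back. From this one builds two closed intervals $J,K$ with disjoint interiors such that $g(J)\supseteq K$ and $g(K)\supseteq J$. The natural choice is to take $J=[u,z]$ and $K=[z,v]$, where $z$ is a fixed point separating points of the orbit that $g$ moves right from points it moves left. Then $g^2(J)\supseteq J$, so $g^2$ has a fixed point in $J$. A covering/itinerary argument shows that this point can be chosen with its image in $K$ and not equal to $z$, which produces a genuine period-2 point and hence a contradiction. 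This step is the main obstacle. The delicate part is excluding the degenerate case where the $g^2$-fixed point produced is just the shared endpoint $z$. For that I would use the strict inequalities available from $g(x)>x$ and the minimality of $n$, choosing $J$ and $K$ so that $z$ is not mapped across in the required way. Alternatively, I would cite the classical result (Block--Coppel, Sharkovsky) directly, as the paper does.

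Given the claim, convergence follows quickly. If $x\in F$ there is nothing to prove. Otherwise, say $g(x)>x$. Applying the claim to each iterate $x_k=g^k(x)$ that is not fixed shows the sign of $g(x_k)-x_k$ cannot switch from positive to negative along the orbit. Otherwise some $x_k$ with $g(x_k)>x_k$ would later return to a point $\le x_k$, and inductively one gets $x_{k+1}>x_k$ for all $k$. So the orbit is monotone increasing and bounded in $I$, and hence converges to some $L$. Continuity gives $g(L)=L$, so $L\in F$. The case $g(x)<x$ is symmetric. If an orbit lands exactly on a fixed point, it trivially converges.

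In the paper's application, I would then use the lemma on $f_a^2$ restricted to a suitable invariant interval. The absence of period-2 points of $f_a^2$ other than the fixed points would come from the monotonicity of $f_a^2$ near the attractor. This would yield convergence of iterates to $\{\sigma_a,1-\sigma_a\}$.
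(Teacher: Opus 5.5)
Your reduction to the claim is sound, and you may cite the claim. The proof breaks at the step you call quick: the claim does not make the orbit monotone. The claim says that if $g(y)>y$ then $g^n(y)>y$ for all $n\ge 1$, and symmetrically. It does not stop the sign of $g(x_k)-x_k$ from switching. If $g(x_0)>x_0$ and $g(x_1)<x_1$, the claim only says that all later iterates lie in $(x_0,x_1)$, which contradicts nothing. So your inference ``otherwise some $x_k$ with $g(x_k)>x_k$ would later return to a point $\le x_k$'' does not follow. Concretely, $g(t)=1-t/2$ on $[0,1]$ has no period-2 points, since $g^2(t)=\frac{1}{2}+\frac{t}{4}$ fixes only $2/3$, the fixed point of $g$. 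Yet the orbit $0,1,\frac{1}{2},\frac{3}{4},\frac{5}{8},\dots$ alternates around $2/3$ and is not monotone. This alternation happens near any attracting fixed point at which the map is decreasing, so it is a typical situation, not an edge case.

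What the claim actually gives is an interleaving structure, and convergence needs one more argument:
\begin{itemize}
\item Assume no iterate is fixed. Call $x_k$ an up-point if $g(x_k)>x_k$ and a down-point otherwise. By the claim, an up-point lies strictly below all later iterates and a down-point strictly above them. Hence up-points increase, down-points decrease, and every up-point lies below every down-point.
\item If one type occurs only finitely often, the orbit is eventually monotone and your argument finishes.
\item Otherwise up-points increase to some $\alpha$ and down-points decrease to some $\beta\ge\alpha$. The orbit never enters $[\alpha,\beta]$, and its only accumulation points are $\alpha,\beta$. In particular $g(\alpha)\in\{\alpha,\beta\}$, since it is a limit of iterates.
\item Suppose $\alpha<\beta$ and $g(\alpha)=\alpha$. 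Choose $\delta>0$ with $|g(t)-\alpha|<(\beta-\alpha)/2$ whenever $|t-\alpha|<\delta$. An up-point $x_k\in(\alpha-\delta,\alpha)$ is then mapped below $\beta$, hence to an up-point in $(x_k,\alpha)$. Inductively no further down-point ever appears, a contradiction.
\item So $g(\alpha)=\beta$, and symmetrically $g(\beta)=\alpha$, which is a period-2 orbit. Hence $\alpha=\beta$, the orbit converges, and the limit is fixed by continuity.
\end{itemize}

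Two smaller points. Citing the claim is fine; the paper itself gives no proof of this lemma and only cites Block--Coppel and Sharkovsky. However, your sketch of the claim leaves open exactly the degenerate case you flag: the $g^2$-fixed point produced by $J\to K\to J$ may be the shared fixed endpoint $z$. Your final paragraph is not needed for the lemma and would not work as stated. Monotonicity of $f_a^2$ near the attractor does not exclude period-4 points of $f_a$ elsewhere. The paper instead applies the lemma to $g_a=1-f_a$, whose period-2 points it shows coincide with its fixed points.
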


We define the involution $\varphi(x) = 1-x$. Because the game is symmetric, $f_a$ commutes with $\varphi$. We introduce the auxiliary map $g_a = \varphi \circ f_a = 1 - f_a(x)$. Because $\varphi$ is an involution, evaluating $g_a$ twice yields $g_a^2(x) = 1 - f_a(1-f_a(x)) = f_a^2(x)$. Thus, the period-2 points of $f_a$ are mathematically identical to the fixed points of $g_a^2$.

The fixed points of $g_a$ satisfy $g_a(x) = x \iff f_a(x) = 1-x$. Substituting the exact definition of $f_a(x)$ and taking the natural logarithm, this condition is equivalent to finding the roots of:
\begin{equation}
D(x) = \ln\left(\frac{x}{1-x}\right) - \frac{a}{2}\left(x^p - (1-x)^p\right) = 0
\end{equation}

To rigorously determine the number of roots of $D(x)$ on the active interior $(0, 1/2)$, we must evaluate its critical points by taking the first derivative:
\begin{equation}
D'(x) = \frac{1}{x(1-x)} - \frac{ap}{2}\left(x^{p-1} + (1-x)^{p-1}\right)
\end{equation}
Setting $D'(x) = 0$ is algebraically equivalent to finding the intersections of an auxiliary polynomial $H_p(x)$ with a horizontal target line $y = \frac{2}{ap}$, where:
\begin{equation}
H_p(x) = x(1-x)\left(x^{p-1} + (1-x)^{p-1}\right)
\end{equation}

While $H_{p}(x)$ is trivially strictly monotonic on $(0,1/2)$ for $p\le3$ (as $H_{1}(x)=2x(1-x)$ and $H_{2}(x)=x(1-x)$ are downward-opening parabolas, and $H_{3}^{\prime}(x)=(1-2x)^{3}>0$),\footnote{In all three cases $H_p$ is strictly monotone increasing on $(0,1/2)$, so the equation $H_p(x) = \frac{2}{ap}$ has exactly one solution on $(0,1/2)$ whenever $\frac{2}{ap} < 2^{-p}$, i.e., whenever $a > \frac{2^{p+1}}{p}$. Thus $D'(x)=0$ exactly once on $(0,1/2)$, $D$ achieves a unique interior maximum, and since $D(0^+) = -\infty$, $D(1/2)=0$, and $D'(1/2) = 4 - ap\cdot 2^{1-p} < 0$ in the unstable regime, $D$ crosses zero from below exactly once, yielding a unique root $\sigma_a \in (0,1/2)$.} it loses monotonicity for higher degrees. To rigorously map its geometric profile for $p\ge4$, let $v=x$ and $w=1-x$.  
Because $x \in (0, 1/2)$, we strictly have $w > v > 0$. The critical points of $H_p(x)$ satisfy $\frac{d}{dx}(v^p w + v w^p) = 0$, which yields:
\begin{equation}
w^p - v^p = pvw(w^{p-2} - v^{p-2})
\end{equation}
Dividing both sides by $v^p$ and substituting the ratio $z = w/v$ (noting that $z > 1$ because $w > v$), this requirement transforms into finding the roots of the polynomial $K(z)$:
\begin{equation}
K(z) = z^p - pz^{p-1} + pz - 1 = 0
\end{equation}

We evaluate $K(z)$ strictly for the domain $z > 1$.
At the boundary, $K(1)=0$, and the initial slope evaluates to $K^{\prime}(1)=p(3-p)$. Because $p\ge4$, the initial slope is strictly negative, forcing the polynomial to initially dip below zero.

To prove $K(z)$ subsequently crosses zero exactly once, we analyze its concavity via the second derivative:
\begin{equation}
K''(z) = p(p-1)z^{p-3}(z - (p-2))
\end{equation}
Because $p(p-1)z^{p-3} > 0$, the sign of $K''(z)$ is dictated entirely by $(z - (p-2))$. Thus, $K''(z) < 0$ for $z \in (1, p-2)$ and $K''(z) > 0$ for $z > p-2$. This implies the slope $K'(z)$ strictly decreases to a unique global minimum at $z = p-2$ before strictly and monotonically increasing to $+\infty$. Consequently, $K'(z)$ transitions from negative to positive exactly once. 

Because $K'(z)$ has exactly one root, $K(z)$ starts at $0$, decreases to a unique global minimum, and then strictly increases to $+\infty$, guaranteeing it crosses zero exactly once for $z > 1$. Translated back to our original variables, this mathematically guarantees that $H_p(x)$ possesses exactly one local extremum---a global maximum---on the interval $(0, 1/2)$.

Thus, for $p\ge4$, $H_{p}(x)$ ascends from $H_{p}(0)=0$ to its unique peak, then strictly descends to $H_{p}(1/2)=2^{-p}$ (for $p\le3$, it simply strictly ascends to $2^{-p}$). Under the unstable regime, our threshold parameter is $a>\frac{2^{p+1}}{p}$, which dictates that the horizontal target line satisfies $\frac{2}{ap}<2^{-p}$. Because this target line is mathematically trapped below $2^{-p}$, it is forced to intersect $H_{p}(x)$ exactly once on its ascending branch for all $p\ge1$.

Therefore, $D'(x) = 0$ exactly once. As $x$ increases, $H_p(x)$ initially increases from 0, causing $D'(x)$ to transition from strictly positive to strictly negative. Because $D(x) \to -\infty$ as $x \to 0^+$ and $D(1/2) = 0$, the transition of $D'(x)$ dictates that $D(x)$ increases from $-\infty$ to a positive maximum before decreasing to $0$ at $x=1/2$. By the Intermediate Value Theorem, $D(x)$ must cross zero exactly once on $(0, 1/2)$. We denote this unique root as $\sigma_a$. 

By symmetry, $1-\sigma_a$ is the unique root in $(1/2, 1)$. Thus, $g_a$ has exactly three fixed points: $\sigma_a$, $1/2$, and $1-\sigma_a$.

Because $g_a^2 = f_a^2$, any period-2 point of $g_a$ must be a fixed point of $f_a^2$. We now rigorously determine the complete set of solutions to $f_a^2(x) = x$ on the interior $(0,1)$. Let $h_p(x) = x^p - (1-x)^p$. From the definition of $f_a(x)$, we have the algebraic identity $\frac{1-f_a(x)}{f_a(x)} = \frac{1-x}{x}\exp(a h_p(x))$. Imposing the period-2 condition yields:
\begin{equation}
f_a(f_a(x)) = \frac{f_a(x)}{f_a(x) + (1-f_a(x))\exp(a h_p(f_a(x)))} = x
\end{equation}

Dividing the numerator and denominator by $f_a(x)$ and substituting our identity gives:
\begin{equation}
\frac{1}{1 + \frac{1-x}{x}\exp\left(a [h_p(x) + h_p(f_a(x))]\right)} = x
\end{equation}

Multiplying by the denominator and simplifying yields $(1-x)\exp(a [h_p(x) + h_p(f_a(x))]) = 1-x$. Because $x \in (0,1)$, we divide by $1-x$ and take the natural logarithm. Since $a > 0$, the condition simplifies exactly to $h_p(f_a(x)) = -h_p(x)$. 

By definition, $-h_p(x) = (1-x)^p - x^p = h_p(1-x)$. Therefore, the period-2 condition is entirely equivalent to $h_p(f_a(x)) = h_p(1-x)$. We note that $h_p'(x) = p x^{p-1} + p(1-x)^{p-1} > 0$ for all $x \in (0,1)$ and $p \ge 1$. Because $h_p$ is strictly monotonic, it is injective, which strictly mandates that $f_a(x) = 1-x$. 

Thus, the fixed points of $f_a^2$ are mathematically identical to the solutions of $f_a(x) = 1-x$, which are exactly the fixed points of $g_a$. Therefore, $g_a$ has absolutely no points of period 2 strictly on the open interior $(0,1)$. 

To rigorously apply Lemma \ref{lem:no_period2}, which requires a closed interval, we must analyze the boundaries. Extending $f_a(x)$ continuously to $[0,1]$ yields $f_a(0) = 0$ and $f_a(1) = 1$, which corresponds to $g_a(0) = 1$ and $g_a(1) = 0$. Thus, $\{0, 1\}$ forms a period-2 orbit for $g_a$ on the boundary. However, evaluating the derivative of the original map at the boundaries yields $f_a'(0) = f_a'(1) = e^a$. Because $a > 0$, we strictly have $e^a > 1$, making $0$ and $1$ strictly repelling fixed points of $f_a$. 

Because the boundaries are strictly repelling ($f_a'(0) > 1$), the map locally expands away from the boundaries, satisfying $f_a(x) > x$ for sufficiently small $x > 0$. Therefore, there exists a small $\epsilon > 0$ such that the closed, compact subinterval $I =[\epsilon, 1-\epsilon]$ is forward-invariant under $f_a$ (and consequently under $g_a$), eventually absorbing all asymptotic trajectories originating in $(0,1)$. Because $I$ is a closed interval and $g_a|_I$ has no period-2 points, we may validly apply Lemma \ref{lem:no_period2}. The $g_a$-trajectory of every $x \in I$ must converge to one of the fixed points of $g_a$. Consequently, for any initial condition $x \in (0,1)$, the sequence $f_a^n(x)$ must asymptotically converge either to the equilibrium $1/2$ or to the interior 2-cycle $\{\sigma_a, 1-\sigma_a\}$.

It remains to formally prove that the basin of attraction of the fixed point $1/2$ has Lebesgue measure zero, ensuring that almost every initial condition is captured by the 2-cycle. 

Let $W^s(1/2) = \{x \in (0,1) \mid \lim_{n \to \infty} f_a^n(x) = 1/2\}$ denote the stable set of the equilibrium. In the unstable regime $a > \frac{2^{p+1}}{p}$, we established in Part 1 that the derivative at the fixed point satisfies $|f_a'(1/2)| > 1$. Because $f_a$ is continuously differentiable, there exists a local neighborhood $U = (1/2 - \epsilon, 1/2 + \epsilon)$ and a constant $\lambda > 1$ such that $|f_a'(x)| \ge \lambda$ for all $x \in U$.

Suppose there exists a trajectory that asymptotically converges to $1/2$ without ever landing exactly on it. By definition of convergence, there exists an integer $N$ such that for all $n \ge N$, $f_a^n(x) \in U$. Applying the Mean Value Theorem, for any $n \ge N$, there exists a $c \in U$ such that:
\begin{equation}
|f_a^{n+1}(x) - 1/2| = |f_a(f_a^n(x)) - f_a(1/2)| = |f_a'(c)| |f_a^n(x) - 1/2| \ge \lambda |f_a^n(x) - 1/2|
\end{equation}
By induction, for any $k > 0$, we strictly have:
\begin{equation}
|f_a^{N+k}(x) - 1/2| \ge \lambda^k |f_a^N(x) - 1/2|
\end{equation}
Because we assumed the trajectory never lands exactly on the fixed point, $|f_a^N(x) - 1/2| > 0$. Since $\lambda > 1$, the term $\lambda^k \to \infty$ as $k \to \infty$, which strictly forces the trajectory to leave the bounded neighborhood $U$, contradicting the assumption of asymptotic convergence.

Therefore, asymptotic convergence to $1/2$ is strictly impossible. The only way a trajectory can converge to $1/2$ is if it maps exactly onto $1/2$ in finite time. Mathematically, this implies the stable set is exactly the countable union of all pre-images of $1/2$:
\begin{equation}
W^s(1/2) = \bigcup_{k=0}^\infty f_a^{-k}(1/2)
\end{equation}

Because the latency functions $x^p$ are polynomials, the update map $f_a(x)$ is real-analytic on the open domain $(0,1)$. The pre-images of $1/2$ correspond to the roots of the analytic equations $f_a^k(x) - 1/2 = 0$. Because $f_a^k(x)$ is not identically $1/2$, the identity theorem for analytic functions guarantees that these roots are strictly isolated within $(0,1)$. 

While an infinite number of roots could theoretically accumulate at the boundaries $0$ or $1$, any set of strictly isolated points in $\mathbb{R}$ is at most countable. Therefore, for any finite iterate $k$, the set of pre-images $f_a^{-k}(1/2)$ is a countable set. 

Consequently, the entire stable set $W^s(1/2) = \bigcup_{k=0}^\infty f_a^{-k}(1/2)$ is a countable union of countable sets, making $W^s(1/2)$ strictly countable. Because any countable subset of $\mathbb{R}$ has a Lebesgue measure of exactly zero, the set of initial conditions asymptotically converging to $1/2$ is a zero-measure set. The trajectory of almost every $x \in (0,1)$ is thus captured by the global period-2 attractor.

\paragraph{Part 3: The $2^p$ Asymptotic Inefficiency.}
Accounting for the total population mass $N$, the true latency on a path with fractional flow $x$ is $c(Nx) = (Nx)^p = N^p x^p$. The optimal social cost at the Wardrop equilibrium ($x^*=1/2$) is therefore $SC_{opt} = N(1/2) \cdot N^p(1/2)^p + N(1/2) \cdot N^p(1/2)^p = N^{p+1}(1/2)^p$.

Because the 2-cycle satisfies $g_a(\sigma_a) = \sigma_a$, we can solve exactly for the system parameter $a$ as a function of the amplitude $\sigma_a$:
\begin{equation}
a = \frac{2 \ln\left(\frac{1-\sigma_a}{\sigma_a}\right)}{(1-\sigma_a)^p - \sigma_a^p}
\end{equation}

Because $a(\sigma_a)$ is continuous and finite on the open interval $(0, 1/2)$, any sequence of states yielding $a \to \infty$ must topologically accumulate at the boundaries of this domain. As $\sigma_a \to 1/2$, the denominator approaches zero, but the numerator concurrently approaches zero; applying L'Hôpital's rule yields a finite limit of $\frac{2^{p+1}}{p}$\footnote{To rigorously evaluate the $0/0$ indeterminate form, we apply L'Hôpital's rule: $\lim_{\sigma_a \to 1/2} a(\sigma_a) = \lim_{\sigma_a \to 1/2} \frac{\frac{d}{d\sigma_a}[2\ln(1-\sigma_a) - 2\ln(\sigma_a)]}{\frac{d}{d\sigma_a}[(1-\sigma_a)^p - \sigma_a^p]} = \frac{-8}{-p(1/2)^{p-1} - p(1/2)^{p-1}} = \frac{-8}{-p 2^{2-p}} = \frac{2^{p+1}}{p}$. This mathematically recovers the exact local stability threshold from Part 1, dynamically verifying that the period-2 orbit bifurcates continuously from the fixed point $1/2$ at this parameter value.}. Because divergence is strictly impossible at any interior point and at the boundary $\sigma_a \to 1/2$, the condition $a \to \infty$ leaves only one possibility, forcing the limit $\lim_{a \to \infty} \sigma_a = 0$.

The system flow strictly alternates between the states $\sigma_a$ and $1-\sigma_a$. Due to the symmetric identical latency functions, the absolute social cost evaluates to exactly the same value at both states of the cycle, yielding a constant empirical time-averaged social cost of $SC_{emp} = N^{p+1}[\sigma_a^{p+1} + (1-\sigma_a)^{p+1}]$. Evaluating the limit as $a \to \infty$ ($\sigma_a \to 0$) yields $\lim_{a \to \infty} SC_{emp} = N^{p+1}$. The ultimate ratio of empirical performance to optimal performance perfectly cancels the population scaling, yielding  $SC_{emp} / SC_{opt} = N^{p+1} /[N^{p+1}(1/2)^p] = 2^p$.
\end{proof}

\subsection{Construction of $L_2$ Chaos (Reference for Figure \ref{fig:l2_chaos})}
\label{app:l2_chaos}

We construct a non-atomic game to prove that Li-Yorke chaos emerges under $L_2$ regularization (PGD/FTRL) without activating boundary projections.
Let Link 2 have constant latency $c_2(1-x) = 2/3$. Let Link 1 have piecewise linear latency $c_1(x) = \frac{2}{3}x$ for $x \le 0.7$, and $c_1(x) = 5x - 3.033\dots$ for $x > 0.7$. 
The potential gradient is $g(x) = c_1(x) - c_2(1-x)$. 

Under PGD with learning rate $\eta = 1$, the unprojected map $f(x) = x - g(x)$ traces a forward orbit from $x_0 = 0.1$:

\begin{enumerate}
    \item $x_0 = 0.1 \implies g(0.1) = -0.6 \implies x_1 = 0.7$
    \item $x_1 = 0.7 \implies g(0.7) = -0.2 \implies x_2 = 0.9$
    \item $x_2 = 0.9 \implies g(0.9) = 0.8 \implies x_3 = 0.1$
\end{enumerate}

The sequence $0.1 \to 0.7 \to 0.9 \to 0.1$ is a period-3 cycle. Because $\max(x_t) = 0.9 < 1$ and $\min(x_t) = 0.1 > 0$, the Euclidean projection operator $\Pi_{[0,1]}$ remains the identity map. Because the projection is inactive, the unconstrained dual variable in FTRL is exactly equal to the primal variable in PGD. By the Li-Yorke Theorem, the existence of a period-3 interior orbit guarantees the presence of Li-Yorke chaos for both $L_2$ learning dynamics.

%\newpage
\end{document}